\documentclass[10pt]{article}
\textwidth=6.2in
\hoffset=-2cm
\textheight=8.5in
\voffset=-1cm
\usepackage {amsmath, amssymb,amsthm, graphicx, mathrsfs}
\usepackage{amsmath, amssymb, amsthm, authblk, bigints}
\theoremstyle {plain}
\newtheorem {Thrm}{Theorem}[section]
\newtheorem {Lem}{Lemma}[section]
\newtheorem {Corol}{Corollary}[section]
\newtheorem {Prop}{Proposition}[section]
\theoremstyle {definition}

\newtheorem{Rem}{Remark}

\newtheorem {Def}{Definition}[section]
\numberwithin {equation}{section}

\newcommand{\bsl}{\mathbf}
\newcommand{\ep}{\varepsilon}
\newcommand {\del}{\nabla}

\newcommand{\dive}{\textnormal{div}}

\newcommand{\dpar}{\partial}

\newcommand{\ueph}{{\bsl {u}_\ep^h}}
\newcommand{\veph}{{\bsl {v}_\ep^h}}
\newcommand{\uepht}{{\widetilde {\bsl {u}}_\ep^h}}
\newcommand{\zeph}{{\bsl {z}_\ep^h}}

\newcommand{\dx}{\mathrm {d}\bsl {x}}
\newcommand{\dy}{\mathrm {d}\bsl {y}}

\newcommand{\dl}{\mathrm {d}\lambda}

\newcommand{\wcon}{\rightharpoonup}
\newcommand{\stwo}{\overset {2}{\tends}}
\newcommand{\wtwo}{\overset {2}{\rightharpoonup}}
\newcommand{\stwor}{\overset {2}{\longrightarrow}}
\newcommand{\slq}{\underset{\mu_\ep^h}{\rightarrow}}
\newcommand{\wlq}{\underset{\mu_\ep^h}{\rightharpoonup}}
\newcommand{\dmu}{\mathrm {d}\mu}
\newcommand{\dlambda}{\mathrm {d}\lambda}

\newcommand{\lomegaeph}{L^2(\Omega,\dmu_\ep^h)}

\newcommand{\lqh}{L^2_{\rm per}(Q,\dmu^h)}
\newcommand{\lqx}{L^2_{\rm per}(Q,\dmu)}
\newcommand{\lqy}{L^2_{\rm per}(Q,\dlambda)}

\newcommand{\tends}{\rightarrow}
\newcommand{\rbb}{\mathbb {R}}

\newcommand{\nbb}{\mathbb {N}}

\newcommand{\whep}{\Omega^{h,\ep}}

\newcommand{\lime}{\lim_{\ep\tends 0}}
\newcommand{\limh}{\lim_{h\tends 0}}
\newcommand{\xy}{(\bsl {x},\bsl {y})}

\newcommand{\Tr}{\textnormal {tr}\ }

\newcommand{\per}{\textnormal {{per}}}
\newcommand{\pot}{\textnormal {{pot}}}
\newcommand{\sol}{\textnormal {{sol}}}
\newcommand{\home}{\textnormal {hom}}

\newcommand{\bzeta}{ \boldsymbol {\zeta}}

\newcommand{\vphi}{\varphi}
\newcommand{\bphi}{ \boldsymbol {\varphi}}

\newcommand{\bvphi}{\boldsymbol{\phi}}
\newcommand{\eph}{{h,\ep}}

\newcommand{\lopen}[2]{L^2(#1,#2)}


\begin{document}

\title{\sc Homogenisation of thin periodic frameworks with high-contrast inclusions}
\author[1]{Kirill D. Cherednichenko\footnote{Corresponding author}}
\author[2]{James A. Evans}
\affil[1]{Department of Mathematical Sciences, University of Bath, Claverton Down, Bath, BA2 7AY, UK. Email: k.cherednichenko@bath.ac.uk}
\affil[2]{Cardiff School of Mathematics, Cardiff University, Senghennydd Road, CF24 4AG, UK}

\maketitle

\begin{abstract}
We analyse a problem of two-dimensional linearised elasticity for a two-component periodic composite, where one of the components consists of disjoint soft inclusions embedded in a rigid framework. We consider the case when the contrast between the elastic properties of the framework and the inclusions, as well as the ratio between the period of the composite and the framework thickness increase as the period of the composite becomes smaller. We show that in this regime the elastic displacement converges to the solution of a special two-scale homogenised problem, 
where the microscopic displacement of the framework is coupled both to the slowly-varying ``macroscopic'' part of the solution and to the displacement of the inclusions.  
We prove the convergence of the spectra of the corresponding elasticity operators to the spectrum of the homogenised operator with a band-gap structure. 
\end{abstract}

\

{\bf Keywords:} Partial differential equations; Periodic homogenisation; Thin structures; Loss of uniform ellipticity; High-contrast composites; Two-scale convergence; Limit spectrum; Band-gap spectrum.

		\section* {Introduction}\addcontentsline{toc}{section}{Introduction}

The multi-scale extension of the notion of the weak $L^2$-limit
was proposed in \cite {bib37}, 
\cite {bib38}, where a general theorem about two-scale compactness of $L^2$-bounded sequences was proved and a corrector-type result for the uniformly elliptic periodic homogenisation problem was established.
Multi-scale convergence has proved to be an effective tool in the analysis of the behaviour of periodic composite media 
under minimal spatial regularity assumptions on the material properties of the composite, {\it e.g.} measurability and boundedness.
Further, in problems where solutions do not converge in the strong $L^2$-sense, for example in 
the presence of degeneracies, see {\it e.g.} \cite{Smyshlyaev_degeneracies}, the related
techniques have the additional benefit of capturing the multi-scale structure of the limit, by providing a suitable generalised notion of strong convergence.
As opposed to the uniformly elliptic case,
where the limit function 
only depends on the macroscopic variable and is a solution to a single boundary-value problem, 
the multi-scale limit for degenerate homogenisation problems satisfies  
a coupled system of 
equations 
 for the macroscopic and microscopic parts of the limit solution. This happens to be the case for periodic ``thin structures'', which are the subject of the present work.  

We define a {\sl thin structure} as an arrangement of rods 
of thickness $a>0$ 
joined together at a number of junction points (``nodes''). 
Fig.\,\ref {examplenetwork} shows an example of a thin structure, where the two panels show rods 
and the ``singular'' structure obtained by taking the mid-lines of the rods (right).
In the literature, equations of elasticity on thin structures are studied by treating $a=a(\varepsilon)$ as a parameter linked to the typical rod length 
$\varepsilon.$ 
In the context of homogenisation, 
the rods are often assumed to be arranged periodically with 
period $\ep,$ and the asymptotic behaviour of the structure is studied as $\varepsilon\to0.$ 
The use of two-scale convergence for the study of periodic singular structures
has been proposed in \cite {bib30, bib33}, where
 the two-scale approach of \cite {bib37, bib38} 
was extended to the setting of 
general Borel measures, 
and conditions on the measure sufficient for passing to the two-scale limit 
were determined. 

The use of multi-scale convergence techniques for the analysis of periodic thin structures was initiated in the work \cite{bib30}, 
which showed that if the thickness of the rods $a=a(\varepsilon)$ is a function of the period $\varepsilon$ of the network, such that 
$\lim_{\varepsilon\to0}a(\varepsilon)=0,$ then 
the overall limit behaviour of the framework depends on the asymptotics of the ratio $a/\ep^2$ as $\varepsilon\to0.$ 
In particular, in the case when 
$\lim_{\varepsilon\to0}a/\ep^2=\theta>0,$ sequences of symmetric gradients of the solutions are, in general, not compact with respect to strong two-scale convergence. As a consequence, the equation describing the limit energy balance is no longer obtained by setting  the test function to be the solution of the homogenised equation for the corresponding ``singular structure", obtained by 
considering the mid-lines of the rods with the measure induced by the thin structure ({\it cf.} Fig.\,\ref{examplenetwork}). This problem was addressed by
\cite {bib32}, where the correct form of the energy equality was determined and the limit system of equations was derived.
This study was followed by the analysis of Sobolev spaces for a variable measure \cite{bib57, bib58}, Korn inequalities for periodic frames 
\cite {bib75}, and gaps in the spectrum of the elasticity operator on a high-contrast periodic structure \cite{bib66} with non-vanishing volume fraction of the components as $\varepsilon\to0.$ 
In the paper \cite{bib66}, which can be viewed as the development of the results of \cite{bib31} to the high-contrast elasticity context, the band-gap nature of the spectrum of the limit operator is analysed and 
the convergence 
of the spectra of the heterogeneous problems to the limit spectrum is proved.
Notably, as was first observed by \cite{Zh_Past_Doklady}, the spectrum of the limit problem for a thin structure in the case of the above ``critical" scaling $\lim_{\varepsilon\to0}a/\ep^2=\theta>0$ shows a remarkable similarity to the limit spectrum for the high-contrast, fixed-volume-fraction case of \cite{bib66}. Some reasons for this similarity have been found in a recent work \cite{CherKis}, which uses operator-theoretic tools to show that the resolvents of both models are operator-norm close to a limit Kronig-Penney model of the so-called ``$\delta'$-type''. 


In the present work 
we consider a two-component periodic composite where the region occupied by the main material (``matrix'') is a framework with $a/\varepsilon^2\tends\theta>0$  as $\varepsilon\to0,$ and the complementary part of the space consisting of disjoint ``inclusions'' is filled by a less rigid material, so that the ratio between the stiffness of inclusions and the matrix is of the order $O(\varepsilon^2),$ $\varepsilon\to0.$ In other words, in addition to the assumption of high contrast, 
{\it cf.} \cite{bib31}, we assume that the stiff component is a thin structure so that its volume fraction is of the order $O(\varepsilon),$ $\varepsilon\to0.$  While our analysis uses some elements of  
both multi-scale approaches to thin structures of \cite {bib30, bib32}  and high-contrast structures of \cite{bib31, bib66}, the proofs of our results, namely homogenisation (Theorem \ref{homogenisation_theorem}) and spectral convergence (Theorem \ref{extension32}), require new tools that link the behaviour of solutions to the original sequence of problems with rapidly oscillating coefficients on the matrix and on the inclusions. The limit functions for the restrictions of the solutions to each of the two components are coupled together as described in Section \ref{sec231}, and lead to a homogenised system of equations of a new kind. 

The main technical difference between our problem and the existing work on homogenisation of periodic thin structures is the presence of a soft component whose volume fraction tends to unity, and which therefore interacts with the singular framework in a nontrivial way, in the homogenisation limit $\varepsilon\to0.$ One immediate key observation that follows is that the values of the elastic displacements on the singular framework have to be properly identified in the sense of traces with the boundary values of the two-scale limit on the soft component (which fills the whole of the periodicity cell in the limit as $\varepsilon\to0:$ the individual soft inclusions are only separated by the singular framework!)





The structure of the paper is as follows. In Section \ref{sec21} we describe the geometric and measure-theoretic aspects of the problem we analyse and set up the corresponding high-contrast partial differential equation (PDE) in the weak form. We also recall the known results about the two-scale asymptotic behaviour of the solutions the weak formulations in the case when the soft inclusions are replaced by voids. We then outline the main result of the paper, concerning the behaviour of solutions to PDEs with a thin structure and high contrast. In Section \ref{sec22} we introduce a version of the notion of two-scale convergence (Section \ref{2scaledef}), appropriate to our measure-theoretic setup, and prove several preliminary statements concerning the compactness and asymptotic behaviour of various expressions associated with the solutions to the original problems, such as the scaled strain on the soft component (Sections \ref{sec221}, \ref{sec223}) and the restriction of the solution to the stiff component (Section \ref{sec222}). We also introduce the spaces of rigid displacements, potential and solenoidal matrices (Section \ref{prd}), which are convenient for the description of the structure of two-scale limits of solution sequences. In Section \ref{sec23} we describe the system of homogenised equations (\ref{hom21}) and prove our main result, Theorem \ref{homogenisation_theorem}, concerning convergence of solutions to the original equations (\ref{prob1}) to the homogenised system. Finally, in Section \ref{sec24} we study the behaviour of the spectra of the operators corresponding to the 
problems (\ref{prob1}), on the basis of the homogenisation results obtained. In particular, in Section \ref{sec241} we describe the spectrum of the operator associated with the homogenised problem (\ref{hom21}), and in Section \ref{sec242} we prove two-scale compactness of the sequence of eigenfunctions of the operators related to the heterogeneous problems (\ref{prob1}), which then implies the Hausdorff convergence of their spectra to the spectrum of the homogenised operator.



			\section {Problem formulation and main result}\label{sec21}

				 We consider
				 a periodic rod framework (``stiff" component of the composite) filled by a different material (``soft'' component).
				We assume that the rod thickness $a>0$ is a function of the period $\varepsilon>0,$ 
				and consider the regime when
						$\lim_{\ep\tends 0} a/\ep^2=\theta>0.$ 
			 The ratio of the elastic moduli of the soft and stiff component is assumed to be of the order $O(\varepsilon^2).$
					\begin {figure}[t]
						\centering
						\includegraphics[trim=0cm 0cm 5cm 0cm, width=9cm,height=4cm]{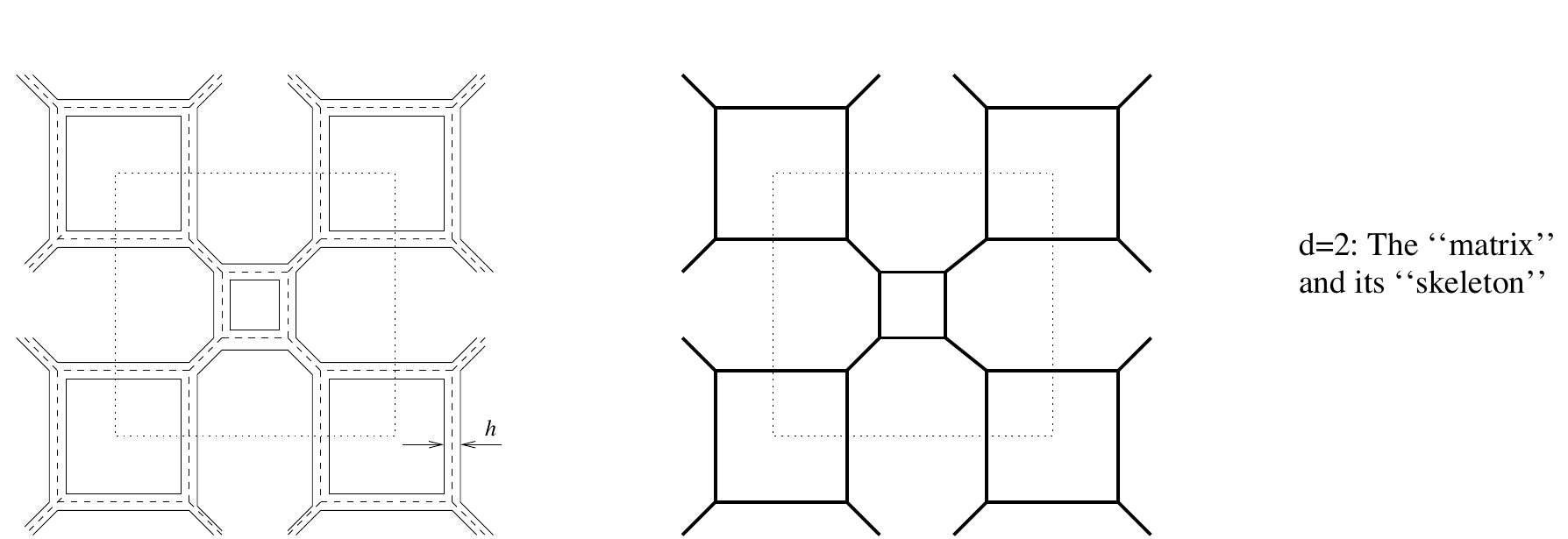}
						\caption {Example of a periodic network and unit cell.}
						\label {examplenetwork}
						\label {critthick}
					\end {figure}
				 Denote by $F_1^h$ the domain occupied by the scaled rods of thickness 
				 $h:=a/\varepsilon$ in the scaled structure of 
				 period $Q:=[0,1]^2$
				  and by $F_1$ the corresponding singular structure, also of period $Q,$ obtained in the 
				  limit $h\to0.$ 
				 The original rod framework is then  the ``contraction'' $F_1^\eph :=\ep F_1^h$ of the framework $F_1^h.$  
				 The scaled soft component ${\mathbb R^2}\setminus F_1^h$ and the original soft component  $\ep({\mathbb R^2}\setminus F_1^h)$ are denoted by 
				 $F_0^h$ and $F_0^\eph,$ respectively. We denote by $\chi_1^h,$ $\chi_1^{h,\varepsilon}$ and  $\chi_0^h,$ $\chi_0^{h,\varepsilon}$ the characteristic functions
				 of the respective sets.
				
				
				In what follows, we consider equations of two-dimensional elasticity in ${\mathbb R}^2.$ These are obtained from the full system of linearised elasticity in three dimensions when there is a direction, say $x_3,$ along which material properties are constant, assuming that the displacement does not depend on $x_3.$  At each 
				point $\bsl{x}\in{\mathbb R}^2,$ the fourth-order tensor of the elastic moduli of the 
				medium is set to be given by $$A^\ep(\bsl{x})=\varepsilon^2\chi_0^h(\bsl{x}/\ep)A_0+\chi_1^h(\bsl{x}/\ep)A_1,$$
				where $A_0$ and $A_1$ are constant
				positive definite fourth-order tensors:\footnote{The inner product of two symmetric matrices $\boldsymbol{\xi}=\{\xi_{ij}\}_{i,j=1}^2$ and $\boldsymbol{\eta}=\{\eta_{ij}\}_{i,j=1}^2$ is defined by 
				$\boldsymbol{\xi}:\boldsymbol{\eta} =\xi_{ij}\eta_{ij}$. 
				The product of the fourth-order elasticity tensor $A$ 
				with a symmetric matrix $\boldsymbol{\xi}$ is defined as $A\boldsymbol{\xi} =a_{ijkl}\xi_{kl}$ and 
				thus 
				$A\boldsymbol{\xi}: \boldsymbol{\xi} 
				=a_{ijkl}\xi_{ij}\xi_{kl}$.} $A_j\boldsymbol{\xi} \cdot \boldsymbol{\xi}\geq c_j\boldsymbol{\xi}^2,$
				$c_j>0,$ $j=0,1,$ for all symmetric matrices $\boldsymbol{\xi}.$
				
				For a bounded Lipschitz domain $\Omega\subset{\mathbb R}^2,$ 
				 we denote by $\Omega^{h,\ep}_1:=\Omega \cap  F_1^\eph$ the stiff component 
				 and by $\Omega^{h,\ep}_0:=\Omega \cap  F_0^\eph$ the soft component of the composite medium in 
				 $\Omega.$  
				 Consider the measures $\lambda,$ $\lambda^h$ defined on $Q$ by 
				 \[
				 \lambda(B)=\frac{{\mathcal H}^1(F_1\cap B)}{{\mathcal H}^1(F_1\cap Q)}, \ \ \ \ \ \ \ \ \ 
				 \lambda^h(B)=\frac{{\mathcal H}^2(F_1^h\cap B)}{{\mathcal H}^2(F_1^h\cap Q)}\ \ \ \ \ \ \ 
				 \forall\ {\rm Borel}\ B\subset Q, 
				 \]
				 where ${\mathcal H}^d,$ $d=1,2,$ is the $d$-dimensional Hausdorff measure (see {\it e.g.} \cite{EvansGariepy}), and extended to ${\mathbb R}^2$ by $Q$-periodicity. Clearly, the weak convergence 
				 $\lambda^h\rightharpoonup\lambda$ holds as $h\to0,$ {\it i.e.} one has\footnote{We attach the superscript ``per'' to the notation for a function space when we refer to its subspace of 
				$Q$-periodic functions.}
					$$\limh\int_{Q}\bphi\,\dl^h=\int_{Q}\bphi\,\dl\ \ \ \ \forall\bphi\in\bigl[C^\infty_\per(Q)\bigr]^2.$$
				 Similarly, for the ``composite'' measures\footnote{The weightings in front of the measures $m$ and $\lambda$ in the definition of $\mu,$ which at the moment are set to $1/2$ and $1/2,$ respectively, can be split in an arbitrary way between the stiff and soft components, and it will only affect the coefficients in front of the first three integrals in the homogenised problem (\ref{hom21}).}
					$\mu:=(1/2)m+(1/2)\lambda$ and $\mu^h:=(1/2)m+(1/2)\lambda^h,$ where $m$ 
					is the plane Lebesgue measure, 
				one has
				$\mu^h\wcon\mu$ as $h\to0.$ 
				Further, we consider the ``scaled'' measure $\lambda^h_\ep(B):=\ep^2\lambda^h (\ep^{-1}B)$ for all Borel $B\subset{\mathbb R}^2,$ and  $\mu^h_\ep:=(1/2)m+(1/2)\lambda^h_\ep,$ so that  $\mu_\ep^h\wcon m$ as $\varepsilon\to0.$
			Throughout the paper, for $\bsl{u}=\{u_j\}_{j=1}^3\in\bigl[H_0^1(\Omega)\bigr]^2,$ we denote by $\bsl{e}(\bsl{u})$ the (matrix-valued) symmetric gradient of $\bsl{u},$ with entries
			\begin{equation}
			\bsl{e}(\bsl{u})_{ij}:=\frac{1}{2}(\partial_ju_i+\partial_iu_j),\qquad i,j=1,2,3.	
			\label{sym_grad}		
			\end{equation}
			For $\ep, h>0$ and $\bsl {f}\in\bigl[L^2(\Omega)\bigr]^2$, we 
			look for $\bsl{u}^h_\varepsilon\in\bigl[H_0^1(\Omega)\bigr]^2$
				such that
					\begin{equation*}
						\int_{\whep_1}A_1\bsl{e}(\ueph):\bsl{e}(\boldsymbol{\varphi} )\,\dmu_\ep^h +\ep^2\int_{\whep_0}A_0\bsl{e}(\ueph ):\bsl{e}(\boldsymbol{\varphi} )\,\dmu_\ep^h\ \ \ \ \ \ \ \ \ \ \ \ \ \ \ \ \ \ \ \ \ \ \ \ \ \ \ \ \ \ \ \ \ \ \ \  
					\end{equation*}
					\begin{equation}
						\label {prob1}	
						\ \ \ \ \ \ \ \ \ \ \ \ \ \ \ \ \ \ \ \ \ \ \ \ \ \ \ \ \ \ \ \ \ \ \ \  +\int_\Omega \ueph \cdot \boldsymbol{\varphi}\,\dmu_\ep^h 
						=\int_\Omega \bsl{f}\cdot \boldsymbol{\varphi}\,\dmu_\ep^h\ \ \ \forall\boldsymbol{\varphi}\in\bigl[H_0^1(\Omega)\bigr]^2.
					\end{equation}
						Define a bilinear form $\mathfrak{B}^h_\ep(\cdot,\cdot)$ and a linear form $\mathfrak{L}^h_\ep(\cdot)$ 
						by 
							\begin{equation}
							\mathfrak{B}^h_\ep(\bsl {u},\bsl {v}):=\int_{\whep_1}A_1\bsl{e}(\bsl {u} ):\bsl{e}(\bsl {v} )\,\dmu_\ep^h +\ep^2\int_{\whep_0}A_0\bsl{e}(\bsl {u}):\bsl{e}(\bsl {v} )\,\dmu_\ep^h+\int_\Omega \bsl {u} \cdot \bsl {v}\,\dmu_\ep^h,\ \ \ \  \ \ \ \ \ 
							\mathfrak{L}^h_\ep(\bsl {v}):=\int_\Omega \bsl {f}\cdot \bsl {v}\,\dmu_\ep^h.
							\label{forms}
							\end{equation}
						Notice that $\mathfrak{B}^h_\ep$ is coercive and continuous, and 
						$\mathfrak{L}^h_\ep$ is continuous on $\bigl[H_0^1(\Omega)\bigr]^2.$ It is a consequence of the Lax-Milgram lemma 
						(see {\it e.g.} \cite[Chapter 6]{bib36}) that (\ref{prob1}) has a unique solution $\ueph.$ 
				In what follows we aim to describe the structure of the limit problem 						
				for the weak two-scale limit of the function $\ueph$ as $h\to0,$ $\ep\tends 0,$ in such a way that $h/\varepsilon\to\theta>0.$ 
					
					In the theory of homogenisation for periodic rod structures, when $A_0$ is formally replaced by zero in (\ref{prob1}), (\ref{forms}), the following results hold regardless of the asymptotic behaviour of the ratio $a/\ep^2,$ see \cite{bib30}, \cite{bib32}:
					\begin {enumerate}
						\item There exists a vector function 
						$\bsl {u}\xy\in\bigl[L^2\bigl(\Omega,L^2_{\rm per}(Q,\dlambda)\bigr)\bigr]^2$ 
						such that:
					\begin{equation*}
					{\rm a)\ \ \ \ }\frac {1}{\bigl|\Omega_1^\eph\bigr|}\int_{\Omega_1^\eph}\ueph(\bsl {x})\cdot\boldsymbol{\varphi}(\bsl{x}, \bsl{x}/\varepsilon)\dx\stackrel{\varepsilon\to0}{\longrightarrow}\int_\Omega\int_Q\bsl {u}(\bsl {x},\bsl {y})\cdot\boldsymbol{\varphi}(\bsl {x},\bsl {y})\dlambda(\bsl{y})\dx
					\ \ \ \ \ \ \ \ \ \ \ \ \ \ \ \ \ \ \ \ \ \ \ \ \ \ \ \ \ \ \ \ \ \ \ \ \ \ \ \ \ \ 
				\end{equation*}
				\begin{equation}
				\ \ \ \ \ \ \ \ \ \ \ \ \ \ \ \ \ \ \ \ \ \ \ \ \ \ \ \ \ \ \ \ \ \ \ \ \ \ \ \ \ \ \ \ \ \ \ \ \ \ \ \ \ \ \ \ \ \ \ 
				\forall\boldsymbol{\varphi}\in\bigl[L^2\bigl(\Omega,L^2_{\rm per}(Q,\dmu)\bigr)\bigr]^2;
						 \label{L2estimate1}
\end{equation}
\begin{equation}
{\rm b)\ \ \ \ }	 \frac {1}{\bigl|\Omega_1^\eph\bigr|}\int_{\Omega_1^\eph}\bigl\vert\ueph(\bsl {x})\bigr\vert^2\dx\stackrel{\varepsilon\to0}{\longrightarrow}\int_\Omega\int_Q\bigl\vert\bsl {u}(\bsl {x},\bsl {y})\bigr\vert^2\dlambda(\bsl{y})\dx.\ \ \ \ \ \ \ \ \ \ \ \ \ \ \ \ \ \ \ \ \ \ \ \ \ \ \ \ \ \ \ \ \ \ \ \ \ \ \ \ \ \ \ \ \ \ \ \ \ \ \ \ \ \ \ \ \ \ \ \ \ \ \ \ \ \ \ \ \ \ \ \ \ \ 
						 \label{L2estimate2}
\end{equation}
						\item The vector $\bsl {u}(\bsl {x},\cdot)$ is a ``periodic rigid displacement'' (see Definition \ref {prd}). For many frameworks of interest this implies that 
							\begin{equation}
							\bsl {u}\xy=\bsl {u}_0(\bsl {x})+\boldsymbol{\chi}\xy,\ \ \ \ {\rm a.e.}\ \bsl{x}\in \Omega,\ \ \ \lambda{\text -}{\rm a.e.}\ \bsl{y}\in F_1\cap Q.
							\label{boundary}
							\end{equation}
						where  $\bsl {u}_0\in\bigl[H_0^1(\Omega)\bigr]^2$ and $\boldsymbol{\chi}(\bsl{x}, \cdot)$ is a ``periodic transverse displacement'',
						{\it i.e.} for a.e. ${\mathbf x}\in\Omega$ the function 
				$\boldsymbol{\chi}(\bsl{x}, \cdot)$ is orthogonal to the link of $F_1$ on which it is defined.
						\item The ``macroscopic'' equation 
							\begin{equation}
							-\frac{1}{2}\dive\bigl(A^\home_\lambda\bsl{e}(\bsl {u}_0)\bigr)+\int_Q\bsl{u}(\cdot, \bsl{y})\dlambda(\bsl{y})
							=\bsl {f}
							\label{homogenised}
							\end{equation}
						holds, where $A^\home_\lambda$ is the ``$\lambda$-homogenised tensor'' defined by (\ref {ahom1}).
					\end {enumerate}

   Our main result, Theorem \ref{homogenisation_theorem}, states that in the case when $a/\varepsilon^2\to\theta>0$ as $\varepsilon\to0,$ the solutions $\bsl{u}^h_\varepsilon$ to the problems (\ref{prob1}), where $h=a/\varepsilon,$ converge in an appropriate two-scale sense (see Section \ref{sec22}) to a function $\bsl{u}(\bsl{x}, \bsl{y}),$ $\bsl{x}\in\Omega,$ $\bsl{y}\in Q,$ whose trace on $F_1\cap Q$
   has the form (\ref{boundary}) and satisfies an equation involving $F_1$-transversal  components of the $\bsl{y}$-gradient of the function $\bsl{u}=\bsl{u}(\bsl{x}, \bsl{y}).$ In addition, the function $\bsl{u}(\bsl{x}, \cdot)-\bsl{u}_0(\bsl{x})=:\bsl{U}(\bsl{x}, \cdot),$ $\bsl{x}\in\Omega,$ belongs to the space $\bigl[H_{\rm per}^1(Q)\bigr]^2$ a.e. $\bsl{x}\in\Omega$ and satisfies an elliptic equation that couples its values to the solution 
   $\bsl{u}_0$ of (\ref{homogenised}), where the average $\int_Q\bsl{u}(\cdot, \bsl{y})\dlambda(\bsl{y})$ is replaced by $\int_Q\bsl{u}(\cdot, \bsl{y})\dmu(\bsl{y}).$
   
    More precisely, for each link $I$ of the network $F_1,$ let $\boldsymbol{\tau}$ and $\boldsymbol{\nu}$ be unit tangent and normal vectors  that form a positively orientated system. Then all vectors $\bsl {v}\in{\mathbb R}^2$ are written as $\bsl{v}=v^{(\tau)}\boldsymbol{\tau}+v^{(\nu)}\boldsymbol{\nu},$ where $v^{(\tau)}=\bsl {v}\cdot\boldsymbol{\tau}$ and $v^{(\nu)}=\bsl {v}\cdot\boldsymbol{\nu}.$ In Section \ref{sec23} the vectors $\bsl {U}$ and $\boldsymbol{\chi}$ are shown to 
    satisfy a system of equations of the form
					\begin{equation}
						\mathcal {A}_0\bsl {U}+\bsl {u}= \bsl {f},
						\ \ \ \ \ \ \ \ \ 
						\mathcal {L}_{\tau}^{\rm IV}\chi^{(\nu)}+\mathcal{T}_{\nu}U^{(\nu)}+u^{(\nu)}= f^{(\nu)},\qquad \mathcal {L}_{\tau}^{\rm I}\chi^{(\nu)}+\mathcal{T}_{\tau}U^{(\tau)}+u^{(\tau)}= f^{(\tau)},\label {fast2}
					\end{equation}
				where $\mathcal {A}_0$ is a second-order differential operator in $Q$ expressed in terms of the 
				tensor $A_0$ only,  
				$\mathcal {L}_{\tau}^{\rm IV},$ $\mathcal {L}_{\tau}^{\rm I}$ are fourth-order and first-order differential operators in the ``longitudinal'' direction 
				$\boldsymbol{\tau},$ and $\mathcal{T}_{\nu},$ $\mathcal{T}_{\tau}$ are first-order differential operators in the ``transverse'' 
				direction $\boldsymbol{\nu}$ corresponding to each link $I$. 

		\section {Two-scale structure of solution sequences}\label {sec22}

									In this section we establish the structure of various two-scale limits on the soft and stiff components. This is achieved by taking the limits, as $\varepsilon\to0,$ of the integrals entering the identity (\ref{prob1}), with suitably chosen test functions $\boldsymbol{\varphi}.$

\subsection{Two-scale convergence: definition and properties}
\label{2scaledef}

We first recall the notion of weak and strong two-scale convergence and their basic properties, see \cite{bib30}.  Within this section $d=2$ or $d=3,$ and the measure sequence $\mu_\varepsilon^h$ (respectively, ``limit" measure $\mu$) can be replaced by the sequence $\lambda_\varepsilon^h$ (respectively, ``limit'' measure $\lambda$).  
								\begin {Def}[Weak two-scale convergence]
						Suppose that  $h$ is a function of $\varepsilon$ such that $h\to0$ as $\varepsilon\to0,$ and $\{\ueph\}\subset\bigl[\lomegaeph\bigr]^d$ is a bounded sequence:
						\begin {equation}\label {limsup}
						\limsup_{\ep\tends 0}\int_\Omega |\ueph|^2\ \dmu_\ep^h <\infty.
					\end {equation} 
						We refer to $\bsl {u}(\bsl {x},\bsl {y})\in\bigl[L^2(\Omega\times Q,\dx\times\dmu)\bigr]^d$
						as the {\sl weak two-scale limit} of $\ueph$, denoted $\ueph\wtwo \bsl {u},$ if 
							\begin {equation}\label {wtwocon}
								\lime\int_\Omega\ueph(\bsl {x})\cdot\bsl {\Phi}(\bsl {x},\bsl {x}/\ep)\,\dmu^h_\ep=\int_\Omega\int_Q\bsl {u}\xy\cdot\bsl {\Phi}\xy\,\dmu(\bsl{y})\dx\ \ \ \ \ \ \forall\bsl{\Phi}\in\bigl[L^2(\Omega, C_{\rm per}(Q))\bigr]^d.
							\end {equation}
					\end {Def}
					\begin {Prop}[Two-scale compactness]
					\label {compact1}
						If a sequence $\ueph$ is bounded in $\bigl[\lomegaeph\bigr]^d$, then it is compact with respect to weak two-scale convergence.
					\end {Prop}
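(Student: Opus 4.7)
The plan is to follow the classical Nguetseng/Allaire two-scale compactness argument, adapted to the measure-theoretic framework of Zhikov--Bouchitt\'e--Fragal\`a that is required here because the reference measure $\mu_\ep^h$ itself varies with $\ep$ and only converges weakly to $\dx$.

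First I would establish the key bound on oscillating test functions: for every $\bsl{\Phi}\in\bigl[L^2(\Omega, C_\per(Q))\bigr]^d$,
\[
\limsup_{\ep\to0}\int_\Omega\bigl|\bsl{\Phi}(\bsl{x},\bsl{x}/\ep)\bigr|^2\,\dmu_\ep^h\ \leq\ C\int_\Omega\int_Q\bigl|\bsl{\Phi}(\bsl{x},\bsl{y})\bigr|^2\,\dmu(\bsl{y})\dx,
\]
with a finite constant $C$ independent of $\ep$ (in fact one expects $C=1$, with equality for $\bsl{\Phi}\in C(\bar{\Omega}\times Q)$ periodic in $\bsl y$). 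The proof combines the product structure $\mu_\ep^h=(1/2)\dx+(1/2)\lambda_\ep^h$, the weak convergence $\lambda^h\rightharpoonup\lambda$ on $Q$, and the classical periodic averaging identity $\int_\Omega\phi(\bsl{x},\bsl{x}/\ep)\dx\to\int_\Omega\!\!\int_Q\phi\,\dy\dx$ for continuous $\phi$; one first handles smooth $\bsl{\Phi}$ and then uses a standard density argument in the Bochner space.

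Second, for each admissible $\bsl{\Phi}$, consider the scalar sequence
\[
L_\ep(\bsl{\Phi}):=\int_\Omega\ueph(\bsl{x})\cdot\bsl{\Phi}(\bsl{x},\bsl{x}/\ep)\,\dmu_\ep^h.
\]
By Cauchy--Schwarz in $\lomegaeph$, the hypothesis (\ref{limsup}), and the first step,
\[
\bigl|L_\ep(\bsl{\Phi})\bigr|\ \leq\ C\,\bigl\|\bsl{\Phi}\bigr\|_{\lqw}
\]
for all sufficiently small $\ep$. Since $\bigl[L^2(\Omega, C_\per(Q))\bigr]^d$ is separable, a diagonal extraction along a countable dense subset produces a subsequence (not relabelled) and a limit functional $L(\bsl{\Phi}):=\lime L_\ep(\bsl{\Phi})$ defined on that dense set, still satisfying the same bound.

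Third, by the uniform bound, $L$ extends by continuity to a bounded linear functional on the Hilbert space $\bigl[L^2(\Omega\times Q,\dx\times\dmu)\bigr]^d$, into which $\bigl[L^2(\Omega, C_\per(Q))\bigr]^d$ embeds densely (this uses density of $C_\per(Q)$ in $L^2(Q,\dmu)$, which holds since $\mu$ is a finite Borel measure on the torus). The Riesz representation theorem then yields a unique $\bsl{u}\in\bigl[L^2(\Omega\times Q,\dx\times\dmu)\bigr]^d$ with $L(\bsl{\Phi})=\int_\Omega\!\!\int_Q\bsl{u}\cdot\bsl{\Phi}\,\dmu\dx$, which is precisely the weak two-scale limit.

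The main obstacle is the first step: proving the sharp asymptotic inequality for oscillating test functions against the two-parameter measure $\mu_\ep^h$. Weak convergence $\mu_\ep^h\wcon\dx$ alone is not enough, since it only controls tests of the form $\phi(\bsl{x})$; one must exploit the joint scaling $h=h(\ep)$ and the periodicity of $\lambda^h$ to pass $\bsl{y}\mapsto\bsl{x}/\ep$ through the $\dlambda_\ep^h$-integration. Once this is established, the remainder of the argument is routine functional analysis.
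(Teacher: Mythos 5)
The paper itself does not prove Proposition \ref{compact1}; it is recalled from \cite{bib30}, \cite{bib33}, so there is no in-paper proof to compare against line by line. Your outline reproduces the standard argument of those references (Nguetseng--Allaire adapted \`a la Zhikov/Bouchitt\'e--Fragal\`a). The point you correctly flag as the ``main obstacle'' is exactly the substantive step: one must verify the mean-value identity
\begin{equation*}
\lim_{\ep\to0}\int_\Omega \varphi(\bsl{x})\,\psi(\bsl{x}/\ep)\,\dmu_\ep^h
=\int_\Omega\varphi(\bsl{x})\,\dx\int_Q\psi(\bsl{y})\,\dmu(\bsl{y}),\qquad \varphi\in C_0(\Omega),\ \psi\in C_\per(Q),
\end{equation*}
and this does not follow from $\mu_\ep^h\wcon\dx$ alone. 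In the present setting it is closed by the cell-wise change of variables you gesture at: writing $\Omega$ (up to a boundary layer of vanishing measure) as a union of cells $\ep(Q+\bsl{n})$, the substitution $\bsl{x}=\ep(\bsl{y}+\bsl{n})$ turns $\int_{\ep(Q+\bsl{n})}\psi(\bsl{x}/\ep)\,\dl_\ep^h$ into $\ep^2\int_Q\psi\,\dl^h$ by the definition $\lambda_\ep^h(B)=\ep^2\lambda^h(\ep^{-1}B)$ and the $Q$-periodicity of $\psi$ and $\lambda^h$; summing against $\varphi(\ep\bsl{n})$ produces a Riemann sum in $\bsl{x}$, and the inner factor $\int_Q\psi\,\dl^h\to\int_Q\psi\,\dl$ follows from the stated weak convergence $\lambda^h\wcon\lambda$ as $h=h(\ep)\to0$. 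The Lebesgue part of $\mu_\ep^h$ is classical. With the mean-value identity in hand, your Steps 2 and 3 (uniform bound via Cauchy--Schwarz in $\lomegaeph$, diagonal extraction over a countable dense subset of $\bigl[L^2(\Omega, C_\per(Q))\bigr]^d$, Riesz representation in $\bigl[L^2(\Omega\times Q,\dx\times\dmu)\bigr]^d$) are routine and correct; note only that for the extraction you can use the cruder bound $|L_\ep(\bsl\Phi)|\le\Vert\ueph\Vert\,\sup|\bsl\Phi|\,\mu_\ep^h(\Omega)^{1/2}$, reserving the sharp asymptotic $L^2(\Omega\times Q,\dx\times\dmu)$-bound for extending the limit functional.
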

					\begin {Prop}
					\label {prop01}
						If $\ueph\wtwo \bsl {u}$ 
						then $\Vert\bsl{u}\Vert_{[L^2(\Omega\times Q)]^d}\le\liminf_{\ep\tends 0}
						\Vert\bsl{u}^h_\ep\Vert_{[L^2(\Omega, \dmu^h_\ep)]^d}.$
					\end {Prop}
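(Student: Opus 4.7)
The plan is to employ the standard ``subtract and square'' lower-semicontinuity argument, familiar from the Lebesgue-measure setting but adapted here to the measure-valued sequence $\mu_\ep^h\wcon\mu.$ For an arbitrary admissible test function $\bsl{\Phi}\in\bigl[L^2(\Omega, C_{\rm per}(Q))\bigr]^d,$ I would start from the trivial non-negativity of the squared deviation and expand:
\begin{equation*}
0\le \int_\Omega\bigl\vert\ueph(\bsl{x})-\bsl{\Phi}(\bsl{x},\bsl{x}/\ep)\bigr\vert^2\,\dmu_\ep^h=\int_\Omega|\ueph|^2\,\dmu_\ep^h-2\int_\Omega\ueph\cdot\bsl{\Phi}(\cdot,\cdot/\ep)\,\dmu_\ep^h+\int_\Omega\bigl\vert\bsl{\Phi}(\cdot,\cdot/\ep)\bigr\vert^2\,\dmu_\ep^h.
\end{equation*}

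Next, I would pass to $\liminf_{\ep\tends 0}$ on both sides. The cross term converges to $2\int_\Omega\int_Q\bsl{u}\cdot\bsl{\Phi}\,\dmu\,\dx$ directly by the definition (\ref{wtwocon}) of weak two-scale convergence. For the last term, the scalar integrand $|\bsl{\Phi}|^2$ depends continuously and periodically on $\bsl{y},$ which together with the weak convergence of the measures $\mu_\ep^h$ to $\mu$ (and hence of their $\ep$-scaled versions to $\dx\times\dmu$) gives
\begin{equation*}
\lime\int_\Omega\bigl\vert\bsl{\Phi}(\bsl{x},\bsl{x}/\ep)\bigr\vert^2\,\dmu_\ep^h=\int_\Omega\int_Q|\bsl{\Phi}(\bsl{x},\bsl{y})|^2\,\dmu(\bsl{y})\,\dx.
\end{equation*}
Combining these facts yields
\begin{equation*}
\liminf_{\ep\tends 0}\int_\Omega|\ueph|^2\,\dmu_\ep^h\ge 2\int_\Omega\int_Q\bsl{u}\cdot\bsl{\Phi}\,\dmu\,\dx-\int_\Omega\int_Q|\bsl{\Phi}|^2\,\dmu\,\dx.
\end{equation*}

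The final step is a density argument: the set $\bigl[L^2(\Omega, C_{\rm per}(Q))\bigr]^d$ is dense in $\bigl[L^2(\Omega\times Q,\dx\times\dmu)\bigr]^d,$ which follows from the density of finite tensor sums $\varphi(\bsl{x})\psi(\bsl{y})$ with $\varphi\in C_c(\Omega)$ and $\psi\in C_{\rm per}(Q).$ Choosing a sequence $\bsl{\Phi}_n\tends\bsl{u}$ in this norm, the right-hand side above tends to $2\|\bsl{u}\|^2-\|\bsl{u}\|^2=\|\bsl{u}\|^2_{[L^2(\Omega\times Q)]^d},$ completing the proof.

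The main technical subtlety lies in justifying the convergence of the quadratic term $\int|\bsl{\Phi}(\cdot,\cdot/\ep)|^2\dmu_\ep^h,$ since $|\bsl{\Phi}|^2$ is not \emph{a priori} an admissible test function within the class $L^2(\Omega, C_{\rm per}(Q))$ fixed in the definition. One resolves this either by broadening the class of admissible two-scale integrands to Carath\'eodory functions continuous and periodic in $\bsl{y}$ (standard in the Nguetseng--Zhikov framework of \cite{bib30}), or by first approximating $\bsl{\Phi}$ uniformly in $\bsl{y}$ by bounded functions and then squaring, exploiting the uniform-in-$\ep$ boundedness of the total mass $\mu_\ep^h(\Omega).$ The density step is routine given the Borel regularity of $\mu$ on $Q.$
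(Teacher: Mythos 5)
Your argument is correct and is the standard ``expand the square and test against smooth two-scale oscillatory functions'' proof of lower semicontinuity; the paper itself states Proposition~\ref{prop01} without proof, citing~\cite{bib30}, whose proof is along the same lines. You correctly identify the only nontrivial ingredient as the mean-value convergence
\[
\lime\int_\Omega\bigl\vert\bsl{\Phi}(\bsl{x},\bsl{x}/\ep)\bigr\vert^2\,\dmu_\ep^h=\int_\Omega\int_Q|\bsl{\Phi}|^2\,\dmu\,\dx,
\]
which in the Zhikov framework holds for integrands in $L^1(\Omega, C_{\rm per}(Q))$ (the class in which $|\bsl{\Phi}|^2$ lives when $\bsl{\Phi}\in L^2(\Omega, C_{\rm per}(Q))$), using $\mu^h\wcon\mu$ on the cell together with the uniform bound $\sup_\ep\mu_\ep^h(\Omega)<\infty$; your remark on this point and the concluding density step are both sound.
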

				
					\begin {Def}
						Let $\ueph$ be a bounded sequence in $\bigl[\lomegaeph\bigr]^d$. We say that a function
						$\bsl {u}=\bsl {u}\xy\in\bigl[L^2(\Omega\times Q, \dx\times\dmu)\bigr]^d$ is the {\sl strong two-scale limit} of $\ueph$, denoted $\ueph\stwo \bsl {u}$, if for any weakly two-scale convergent sequence 											$\bsl {v}^h_\ep\wtwo \bsl {v}$ one has							
						\begin {equation}\label {stwocon}
								\lime\int_\Omega\ueph\cdot\bsl {v}^h_\ep\,\dmu^h_\ep=\int_\Omega\int_Q\bsl {u}\xy\cdot\bsl {v}\xy\,\dmu(\bsl{y})\dx.
							\end {equation}
					\end {Def}
				Note that by setting $\bsl {v}^h_\ep=\ueph$ one has
					\begin {equation}\label {squareuep}
						\lime\int_\Omega|\ueph|^2\,\dmu_\ep^h=\int_\Omega\int_Q|\bsl {u}|^2\ \dmu\dx.
					\end {equation}
				The next proposition shows that the converse 
				also holds.
					\begin {Prop}
					\label{strong_conv}
						If $\ueph\wtwo \bsl {u}$ and the convergence (\ref {squareuep}) holds, then $\ueph\stwo \bsl {u}$.
					\end {Prop}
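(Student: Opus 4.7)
The plan is to deduce strong two-scale convergence from the norm-convergence hypothesis by a density and quasi-orthogonality argument that mirrors the classical deduction of strong $L^2$-convergence from weak convergence together with convergence of norms. First, I would approximate the target $\bsl{u}\in\bigl[L^2(\Omega\times Q,\dx\times\dmu)\bigr]^d$ by a sequence of admissible test functions $\Bphi_n\in\bigl[L^2(\Omega, C_\per(Q))\bigr]^d$ such that $\Bphi_n\to \bsl{u}$ in $L^2(\Omega\times Q,\dx\times\dmu)$. The relevant density statement for the singular product measure $\dx\times \dmu$ is part of the measure-theoretic two-scale framework of \cite{bib30, bib33}, and the admissibility of each $\Bphi_n$ means in particular that $\Bphi_n(\bsl{x},\bsl{x}/\ep)\stwo \Bphi_n$, so that the quadratic norm convergence $\int_\Omega|\Bphi_n(\bsl{x},\bsl{x}/\ep)|^2\dmu_\ep^h\to \int_\Omega\int_Q|\Bphi_n|^2\dmu\dx$ holds.

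Next I would expand
\begin{equation*}
\int_\Omega \bigl|\ueph-\Bphi_n(\bsl{x},\bsl{x}/\ep)\bigr|^2\dmu_\ep^h=\int_\Omega|\ueph|^2\dmu_\ep^h-2\int_\Omega \ueph\cdot\Bphi_n(\bsl{x},\bsl{x}/\ep)\,\dmu_\ep^h+\int_\Omega\bigl|\Bphi_n(\bsl{x},\bsl{x}/\ep)\bigr|^2\dmu_\ep^h,
\end{equation*}
and pass to the limit $\ep\tends 0$. The first term converges to $\int_\Omega\int_Q|\bsl{u}|^2\dmu\dx$ by the hypothesis (\ref{squareuep}); the middle term converges to $-2\int_\Omega\int_Q\bsl{u}\cdot\Bphi_n\,\dmu\dx$ by weak two-scale convergence of $\ueph$ applied to the admissible test function $\Bphi_n$; and the last term converges to $\int_\Omega\int_Q|\Bphi_n|^2\dmu\dx$ by admissibility. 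Combining these yields
\begin{equation*}
\lime\int_\Omega \bigl|\ueph-\Bphi_n(\bsl{x},\bsl{x}/\ep)\bigr|^2\dmu_\ep^h=\int_\Omega\int_Q|\bsl{u}-\Bphi_n|^2\dmu\dx,
\end{equation*}
and the right-hand side tends to zero as $n\to\infty$ by the choice of $\Bphi_n$.

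For the final step I would fix any weakly two-scale convergent sequence $\bsl{v}^h_\ep\wtwo \bsl{v}$, which is bounded in $\bigl[\lomegaeph\bigr]^d$ by Proposition \ref{prop01}, and split
\begin{equation*}
\int_\Omega \ueph\cdot\bsl{v}^h_\ep\,\dmu_\ep^h=\int_\Omega\bigl(\ueph-\Bphi_n(\bsl{x},\bsl{x}/\ep)\bigr)\cdot\bsl{v}^h_\ep\,\dmu_\ep^h+\int_\Omega \Bphi_n(\bsl{x},\bsl{x}/\ep)\cdot\bsl{v}^h_\ep\,\dmu_\ep^h.
\end{equation*}
By the Cauchy--Schwarz inequality in $L^2(\Omega,\dmu_\ep^h)$ together with the preceding quasi-orthogonality estimate, the first summand has $\limsup_{\ep\to 0}$ bounded by $\|\bsl{u}-\Bphi_n\|_{L^2(\Omega\times Q)}\cdot\sup_\ep\|\bsl{v}^h_\ep\|_{L^2(\Omega,\dmu_\ep^h)}$, while the second summand converges to $\int_\Omega\int_Q\Bphi_n\cdot\bsl{v}\,\dmu\dx$ by weak two-scale convergence of $\bsl{v}^h_\ep$ tested against $\Bphi_n$. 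Sending $n\to\infty$ and using $\Bphi_n\to \bsl{u}$ in $L^2(\Omega\times Q,\dx\times\dmu)$ gives (\ref{stwocon}). The key obstacle is the density of the admissible class $\bigl[L^2(\Omega, C_\per(Q))\bigr]^d$ in $\bigl[L^2(\Omega\times Q,\dx\times\dmu)\bigr]^d$ for the singular product measure; this is precisely the content of the measure-theoretic two-scale framework of \cite{bib30, bib33}, which I would invoke directly rather than reproduce.
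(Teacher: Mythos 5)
The paper states Proposition \ref{strong_conv} without proof, referring the reader to the measure-theoretic two-scale framework of \cite{bib30} (see the opening of Section 2.1); your argument is correct and is exactly the standard deduction in that framework. You correctly isolate the two ingredients that need to be imported — density of $\bigl[L^2(\Omega,C_\per(Q))\bigr]^d$ in $\bigl[L^2(\Omega\times Q,\dx\times\dmu)\bigr]^d$ and the mean-value property $\int_\Omega|\Bphi_n(\bsl{x},\bsl{x}/\ep)|^2\dmu_\ep^h\to\int_\Omega\int_Q|\Bphi_n|^2\dmu\dx$ for such test functions, both nontrivial because $\dx\times\dmu$ is singular but both part of \cite{bib30, bib33} — and the expansion of the quadratic form, the passage to the limit term by term, and the final three-term split with Cauchy--Schwarz are all carried out correctly, including the observation that a weakly two-scale convergent sequence $\bsl{v}^h_\ep$ is bounded in $\bigl[\lomegaeph\bigr]^d$ by definition.
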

					
					\begin{Prop}
					\label{a_factor}
For any arbitrary $a\in L^\infty(Q),$ the weak (resp. strong) two-scale convergence of $\bsl{u}_\ep^h$ to $\bsl{u}(\bsl{x}, \bsl{y})$ implies the weak (resp. strong) two-scale convergence 
of $a(\cdot/\ep)\bsl{u}_\ep^h$ to $a(\bsl{y})\bsl{u}(\bsl{x}, \bsl{y}).$
					\end{Prop}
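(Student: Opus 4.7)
The plan is to establish the weak case first and then reduce the strong case to the weak one.

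\textbf{Weak case.} Fix a test function $\bsl{\Phi} \in [L^2(\Omega, C_{\rm per}(Q))]^d$. The obstacle is that the natural choice $a(\bsl{y})\bsl{\Phi}(\bsl{x},\bsl{y})$ is not admissible in the two-scale definition, since $a$ is only bounded. To bypass this, I approximate $a$ by a sequence $a_n \in C_{\rm per}(Q)$ with $\|a_n\|_\infty \leq \|a\|_\infty$ and $a_n \to a$ in $L^2_{\rm per}(Q,\dmu)$; this is achieved by periodic mollification together with regularity of $\mu$ on $Q$. For each fixed $n$, the product $a_n(\bsl{y})\bsl{\Phi}(\bsl{x},\bsl{y})$ belongs to $[L^2(\Omega,C_{\rm per}(Q))]^d$ and so is admissible, yielding
\begin{equation*}
\lime \int_\Omega a_n(\bsl{x}/\ep)\,\ueph(\bsl{x})\cdot\bsl{\Phi}(\bsl{x},\bsl{x}/\ep)\,\dmu_\ep^h = \int_\Omega \int_Q a_n(\bsl{y})\bsl{u}\xy\cdot\bsl{\Phi}\xy\,\dmu(\bsl{y})\dx.
\end{equation*}
The right-hand side converges, as $n\to\infty$, to the same integral with $a_n$ replaced by $a$, because $\bsl{\Phi}\in[L^2(\Omega,C_{\rm per}(Q))]^d$ ensures $\|\bsl{\Phi}(\bsl{x},\cdot)\|_\infty\in L^2(\Omega)$ and so Cauchy--Schwarz against $\|a-a_n\|_{L^2(Q,\dmu)}\to0$ applies.

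\textbf{Uniform control of the $\ep$-error.} The main technical step is to show that
\begin{equation*}
\limsup_{\ep\to 0}\left|\int_\Omega (a-a_n)(\bsl{x}/\ep)\,\ueph\cdot\bsl{\Phi}(\bsl{x},\bsl{x}/\ep)\,\dmu_\ep^h\right| \leq C\,\|a-a_n\|_{L^2(Q,\dmu)}\,\|\bsl{\Phi}\|_{L^2(\Omega,C_{\rm per}(Q))},
\end{equation*}
with $C$ depending only on the uniform bound (\ref{limsup}). By Cauchy--Schwarz, this reduces to proving
$\limsup_\ep \int_\Omega g(\bsl{x}/\ep)\,|\bsl{\Phi}(\bsl{x},\bsl{x}/\ep)|^2\dmu_\ep^h \leq \|\bsl{\Phi}\|^2 \int_Q g\,\dmu$
for $g=(a-a_n)^2\in L^\infty(Q)$, $g\geq 0$. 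For this I approximate $g$ \emph{from above} by $\tilde g_m\in C_{\rm per}(Q)$ with $\tilde g_m\geq g$ and $\int_Q \tilde g_m\,\dmu\to\int_Q g\,\dmu$, using outer regularity of $\mu$. Since $\tilde g_m|\bsl{\Phi}|^2\in L^1(\Omega,C_{\rm per}(Q))$, the convergence $\mu_\ep^h\wcon\dx$ for periodic continuous integrands (combined with Fubini over $\Omega$) gives
\begin{equation*}
\int_\Omega \tilde g_m(\bsl{x}/\ep)\,|\bsl{\Phi}(\bsl{x},\bsl{x}/\ep)|^2\,\dmu_\ep^h \stackrel{\ep\to 0}{\longrightarrow} \int_\Omega\int_Q \tilde g_m(\bsl{y})\,|\bsl{\Phi}\xy|^2\,\dmu(\bsl{y})\,\dx,
\end{equation*}
and taking $\limsup_\ep$ followed by $m\to\infty$ yields the bound. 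Sending $n\to\infty$ completes the weak case.

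\textbf{Strong case.} Suppose $\ueph\stwo\bsl{u}$ and let $\bsl{v}^h_\ep\wtwo\bsl{v}$ be any weakly two-scale convergent sequence. By the weak case just established, applied to the factor $a(\cdot/\ep)$ multiplying $\bsl{v}^h_\ep$, we have $a(\cdot/\ep)\bsl{v}^h_\ep\wtwo a(\bsl{y})\bsl{v}\xy$. Rewriting
\begin{equation*}
\int_\Omega a(\bsl{x}/\ep)\,\ueph\cdot\bsl{v}^h_\ep\,\dmu_\ep^h = \int_\Omega \ueph\cdot\bigl[a(\bsl{x}/\ep)\bsl{v}^h_\ep\bigr]\,\dmu_\ep^h,
\end{equation*}
and invoking the definition of strong two-scale convergence of $\ueph$ against this weakly convergent sequence, we obtain the desired limit $\int_\Omega\int_Q a\,\bsl{u}\cdot\bsl{v}\,\dmu\,\dx$, proving $a(\cdot/\ep)\ueph\stwo a(\bsl{y})\bsl{u}$.

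The main obstacle is the uniform-in-$\ep$ error estimate for $L^\infty$ multipliers: continuity in $\bsl{y}$ is essential in the two-scale definition, so passing to the limit with rough $a$ forces the detour through one-sided continuous envelopes and the oscillation property of $\mu_\ep^h$.
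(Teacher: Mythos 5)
The central technical step—that for nonnegative $g\in L^\infty(Q)$ one can find continuous $\tilde g_m\geq g$ with $\int_Q\tilde g_m\,\dmu\to\int_Q g\,\dmu$, and hence that $\limsup_\ep\int_\Omega g(\bsl{x}/\ep)\,\dmu_\ep^h$ is controlled by a constant times $\int_Q g\,\dmu$—is where your argument fails. The obstruction is that the limit measure $\mu=\tfrac12\dx+\tfrac12\lambda$ charges the Lebesgue-null set $F_1$, whereas each $\mu_\ep^h$ is absolutely continuous with respect to $\dx$. Concretely, take $a=\chi_{F_1\cap Q}$, a genuine (Borel) element of $L^\infty(Q,\dmu)$, and let $a_n$ be a continuous ``tent'' equal to $1$ on $F_1$, vanishing outside the $1/n$-neighbourhood of $F_1$, and decaying linearly in between. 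Then $a_n\to a$ in $L^2(Q,\dmu)$, but $g_n:=(a-a_n)^2$ vanishes on $F_1$ while $g_n\geq 1/4$ on the collar $\{0<{\rm dist}(\cdot,F_1)<1/(2n)\}$. Once $h<1/(2n)$, the support of $\lambda_\ep^h$ lies in that collar (and $\ep F_1$ itself has $\lambda_\ep^h$-measure zero), so $\limsup_\ep\int_\Omega g_n(\bsl{x}/\ep)\,\dmu_\ep^h\geq\tfrac18\lim_\ep\lambda_\ep^h(\Omega)>0$ for every $n$, even though $\|a-a_n\|_{L^2(Q,\dmu)}\to0$. No constant $C$ can therefore make your uniform error bound hold. (The proposed remedy via outer regularity does not give the one-sided continuous approximation either: with $g=\chi_{F_0\cap Q}$, any continuous $\tilde g\geq g$ is $\geq 1$ on $\overline{F_0\cap Q}=Q$, so $\int_Q\tilde g\,\dmu\geq 1>\tfrac12=\int_Q g\,\dmu$.) Note also that periodic mollification of $a$, as you propose, converges in $L^2(Q,\dx)$ but generally not in $L^2(Q,\dmu)$; for $a=\chi_{F_1}$ the mollifications tend to $0$, not to $a$, on $F_1$.

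The same example shows that the proposition, read for arbitrary $a\in L^\infty$, is in fact false: with $\ueph\equiv\bsl{c}\neq\bsl0$ one has $a(\cdot/\ep)\ueph=\bsl0$ $\mu_\ep^h$-a.e.\ (since $a=0$ $\dx$-a.e.\ and $\mu_\ep^h\ll\dx$), so $a(\cdot/\ep)\ueph\stwo\bsl0$, while $a(\bsl{y})\bsl{c}\neq\bsl0$ in $L^2(\Omega\times Q,\dx\times\dmu)$. A version of the statement that is actually provable needs an admissibility hypothesis on $a$—for instance that the discontinuity set of a fixed Borel representative of $a$ is $\mu$-null, so that $g=(a-a_n)^2$ remains bounded and $\mu$-a.e.\ continuous and the mean-value property $\int_\Omega g(\bsl{x}/\ep)\varphi(\bsl{x})\,\dmu_\ep^h\to\int_\Omega\varphi\,\dx\int_Q g\,\dmu$ applies. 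Under such a hypothesis your scheme does close (approximate $a$ by uniformly bounded continuous $a_n$ converging pointwise at every continuity point, then estimate the error by the mean-value property), and the reduction of the strong case to the weak case in your final step is correct as written. Since the paper cites this proposition from the literature without proof, the implicit restriction on $a$ is not visible in the text; but the unrestricted $L^\infty$ version that you set out to prove is not true.
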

					


			\subsection {Two-scale compactness of solutions to (\ref{prob1})}\label{sec221}
				Consider the equation (\ref {prob1}) with $\boldsymbol{\varphi} =\ueph$:
					\begin{equation}
						\int_{\whep_1}A_1\bsl{e}(\ueph ):\bsl{e}(\ueph )\ \dmu_\ep^h +\ep^2\int_{\whep_0}A_0\bsl{e}(\ueph ):\bsl{e}(\ueph )\ \dmu_\ep^h 
						+\int_\Omega |\ueph |^2\ \dmu_\ep^h =\int_\Omega \bsl {f}\cdot \ueph\ \dmu_\ep^h.
					\end{equation}
				Using ellipticity estimates on the left-hand side and the inequality $2ab\le a^2+b^2,$ $a,b\in{\mathbb R},$ on the right-hand side yields
					$$c_0\ep^2\int_{\whep_0}\bigl|\bsl{e}(\ueph)\bigr|^2\,\dmu_\ep^h+c_1\int_{\whep_1}\bigl|\bsl{e}(\ueph)\bigr|^2\,\dmu_\ep^h +\frac {1}{2}\int_\Omega |\ueph |^2\,\dmu_\ep^h \leq \frac {1}{2}\int_\Omega |\bsl{f} |^2\,\dmu_\ep^h,
					$$
					where $c_0,$ $c_1$ are the ellipticity constants of $A_0,$ $A_1.$
				Hence, the following {\it a priori} bounds hold.
					\begin {Prop}
						Let $\ueph$ be a sequence in $\bigl[\lomegaeph\bigr]^2$ of solutions to (\ref {prob1}). Then 
						there exists $C>0$ such that
							\begin{equation*}
								\|\bsl {u}^h_\ep\|_{[L^2(\Omega,\dmu_\ep^h)]^2} \leq  C,\ \ \ \ 
								\bigl\|\bsl{e}(\bsl {u}^h_\ep)\bigr\|_{[L^2(\whep_1,\dmu_\ep^h)]^3} \leq  C,\ \ \ \ 
								\ep\bigl\|\bsl{e}(\bsl {u}^h_\ep)\bigr\|_{[L^2(\whep_0,\dmu_\ep^h)]^3}  \leq  C. 
								\ \ \ 
							\end{equation*}
					\end {Prop}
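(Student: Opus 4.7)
The proof is essentially already laid out in the discussion preceding the statement, and amounts to a standard energy estimate for a coercive variational problem. First, I would test the identity (\ref{prob1}) with $\boldsymbol{\varphi}=\ueph$ to obtain the energy identity
\[
\int_{\whep_1}A_1\bsl{e}(\ueph)\cdot\bsl{e}(\ueph)\,\dmu_\ep^h+\ep^2\int_{\whep_0}A_0\bsl{e}(\ueph)\cdot\bsl{e}(\ueph)\,\dmu_\ep^h+\int_\Omega|\ueph|^2\,\dmu_\ep^h=\int_\Omega\bsl{f}\cdot\ueph\,\dmu_\ep^h.
\]
On the left-hand side I invoke the pointwise ellipticity $A_j\boldsymbol{\xi}\cdot\boldsymbol{\xi}\ge c_j|\boldsymbol{\xi}|^2$ of the two tensors separately on $\whep_0$ and $\whep_1$. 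On the right-hand side I apply Cauchy--Schwarz followed by the Young inequality $|ab|\le a^2/2+b^2/2$; this produces a term $\frac12\int_\Omega|\ueph|^2\,\dmu_\ep^h$ which I absorb into the corresponding non-negative term on the left.

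The resulting inequality is precisely
\[
c_0\ep^2\int_{\whep_0}\bigl|\bsl{e}(\ueph)\bigr|^2\,\dmu_\ep^h+c_1\int_{\whep_1}\bigl|\bsl{e}(\ueph)\bigr|^2\,\dmu_\ep^h+\frac12\int_\Omega|\ueph|^2\,\dmu_\ep^h\le\frac12\int_\Omega|\bsl{f}|^2\,\dmu_\ep^h,
\]
and each of the three stated estimates is read off by dropping the other two non-negative terms on the left and taking square roots. The resulting constant $C$ depends only on $c_0$, $c_1$, and on a uniform bound (in $\ep,h$) for $\int_\Omega|\bsl{f}|^2\,\dmu_\ep^h$.

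The only subtlety is the last of these quantities: since $\dmu_\ep^h=\frac12\dx+\frac12\dlambda_\ep^h$, integrability of $\bsl{f}$ against the singular component $\dlambda_\ep^h$ needs justification. This is implicit in the well-posedness of (\ref{prob1}) itself, and can be handled by choosing a sufficiently regular representative of $\bsl{f}$ (or approximating in $[L^2(\Omega)]^2$ by continuous functions) and exploiting the uniform mass bound $\lambda_\ep^h(\Omega)\le |\Omega|$ that is a direct consequence of the scaling $\lambda_\ep^h(B)=\ep^2\lambda^h(\ep^{-1}B)$ together with the normalisation of $\lambda^h$. With this in place there is no genuine obstacle: the whole argument is a textbook Lax--Milgram coercivity estimate, lightly dressed up by the variable measure $\mu_\ep^h$.
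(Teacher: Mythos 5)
Your proof is correct and takes the same route as the paper: test (\ref{prob1}) with $\boldsymbol{\varphi}=\ueph$, apply ellipticity of $A_0,A_1$ on the left and Young's inequality on the right, absorb the $\frac12\int|\ueph|^2\,\dmu_\ep^h$ term, and read off the three bounds. One small caution on the ``subtlety'' you flag: for fixed $\ep,h>0$ the measure $\dmu_\ep^h$ is absolutely continuous with respect to Lebesgue measure (the rods $F_1^{h,\ep}$ have positive area), so there is no singular component to worry about at that level; the genuine issue is whether $\int_\Omega|\bsl{f}|^2\,\dmu_\ep^h$ stays bounded as $\ep\to0$, since the density of $\lambda_\ep^h$ grows like $h^{-1}$ on the rods, and the standing convention of the paper (see the hypothesis $\bsl{f}_\ep^h\wtwo\bsl{f}$ in Theorem \ref{homogenisation_theorem}, which by definition forces $\limsup_\ep\int_\Omega|\bsl{f}_\ep^h|^2\,\dmu_\ep^h<\infty$) is precisely what supplies that bound, rather than the ``choose a regular representative'' device, which does not quite work for an $L^2$ equivalence class.
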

				Using two-scale compactness of $L^2$-bounded sequences (see Proposition \ref{compact1}), we assume that the sequences 
				\[
				\ueph,\ \ \ \ \chi_1^\eph\ueph\ \  {\rm (displacements),\ \ \ \ and}\ \ 
				\ \ \ \ \chi_1^\eph \bsl{e}(\ueph),\ \ \ \ \ep \chi_0^\eph \bsl{e}(\ueph)\ \ \ {\rm (strains)}
				\]
				 weakly two-sale converge to functions 
				$
				\bsl {u}\xy\in\bigl[L^2(\Omega\times Q,\dx\times\dmu)\bigr]^2,$ 
				$\widehat{\bsl {u}}\xy\in\bigl[L^2(\Omega\times Q,\dx\times\dlambda)\bigr]^2$ (displacements), and
				$\bsl{p}\xy\in\bigl[L^2(\Omega\times Q,\dx\times\dlambda)\bigr]^3,$
				$\widetilde {\bsl{p}}\xy\in\bigl[L^2(\Omega\times Q,\dx\times\dy)\bigr]^3$ (strains),
				respectively. Here, ${\rm d}{\mathbf y}$ denotes the differential of the Lebesgue measure on $Q,$ and each of the spaces $L^2(\Omega\times Q,\dx\times\dlambda)$ and $L^2(\Omega\times Q,\dx\times\dy)$ is treated as a subspace of $L^2(\Omega\times Q,\dx\times{\rm d}\mu).$

	
\subsection{Rigid displacements, potential and solenoidal matrices}
				
					\begin {Def}
					\label {prd}
						A vector function $\bsl {u}\in\bigl[\lqy\bigr]^2$ 
						is said to be a {\sl periodic rigid displacement} (with respect to the measure $\lambda$) if there exists a sequence $\{\bsl {u}_n\}
						\subset\bigl[C_\per^\infty(Q)\bigr]^2$ such that
						$\bigl(\bsl {u}_n, \bsl{e}(\bsl {u}_n)\bigr)\tends (\bsl{u}, 0)$
							in $\bigl[\lqy\bigr]^5$ as $n\to\infty.$
						We denote the set of periodic rigid displacements by $\mathcal {R},$ omitting the reference to the measure $\lambda.$
					\end {Def}
				We assume (see {\it e.g.} \cite {bib30} for relevant examples of periodic frameworks) that any $\bsl {u}\in \mathcal {R}$ has a unique representation 
					\begin {equation}\label{rigid1}
						\bsl {u}(\bsl {y})=\bsl {c}+\boldsymbol{\chi}(\bsl {y}),\ \ \ \bsl{y}\in Q,
					\end {equation}
				where $\bsl {c}\in{\mathbb R}^2$ 
				and $\boldsymbol{\chi}$ is a periodic transverse displacement, {\it i.e.} on each link of the singular network $F_1$
				it is orthogonal to the link. 
				Thus $\mathcal {R}$ is the direct sum of $\rbb^2$ and the set of periodic transverse displacements, which we denote by $\widehat{\mathcal{R}}.$
					The next definition characterises periodic transverse displacements that occur in the study of rod networks with $a/\varepsilon^2\to\theta>0$ as $\varepsilon\to0.$ 
					\begin {Def}\label {rigidstuff}
						Denote by $I_1,\dots ,I_n$ the links of the network $F_1$ sharing an arbitrary node $\mathcal{O},$ and denote by $(\boldsymbol{\chi}\cdot\boldsymbol{\nu})'$ the derivative in the tangential direction:
							$(\boldsymbol{\chi}\cdot\boldsymbol{\nu})'
							:=(\boldsymbol{\tau}\cdot\del)(\boldsymbol{\chi}\cdot\boldsymbol{\nu}).$ The set $\widehat{\mathcal{R}}^0\subset \widehat{\mathcal{R}}$ is defined to consist 
						of periodic transverse displacements $\boldsymbol{\chi}$ satisfying the following conditions:
								
								(C1) The function $\boldsymbol{\chi}\cdot\boldsymbol{\nu}_j|_{I_j}$, $j=1,2,\dots, n$, 
								has square integrable second derivatives on $I_j,$ {\it i.e.} one has
								$\boldsymbol{\chi}\cdot\boldsymbol{\nu}\in H^2(I_j)$.
								
								(C2) The first derivative along the link is continuous across each node:
									$(\boldsymbol{\chi}\cdot\boldsymbol{\nu}_1)'\big|_{\mathcal {O}}=(\boldsymbol{\chi}\cdot\boldsymbol{\nu}_2)'\big|_{\mathcal {O}}=\dots =(\boldsymbol{\chi}\cdot\boldsymbol{\nu}_n)'\big|_{\mathcal {O}}.$
								
								 (C3) Each node is fastened:
									$\boldsymbol{\chi}|_{\mathcal {O}}=\bsl {0}.$
						
						The norm in $\widehat{\mathcal{R}}^0$ is defined to be the sum of the $H^2$-norms of $\boldsymbol{\chi}\cdot\boldsymbol{\nu}$ over all the links.
					\end {Def}
					\begin {Def}
						For a given Borel measure $\varkappa$ on $Q,$ we define the space $V_\pot^\varkappa$ of $\varkappa$-{\sl potential matrices} as the closure of the set $\bigl\{\bsl{e}(\bsl {u})\,|\,\bsl {u}\in\bigl[C^\infty_\per(Q)\bigr]^2\bigr\}$ in the space $\bigl[L^2_\per(Q, {\rm d}\varkappa)\bigr]^3.$ A symmetric matrix 
						$\bsl{v}\in\bigl[L^2_\per(Q, {\rm d}\varkappa)\bigr]^3$ is said to be $\varkappa$-{\sl solenoidal} if 
							$$\int_Q\bsl{v}\cdot \bsl{e}(\bsl {u})\,{\rm d}\varkappa=0\ \ \ \ \ \ \ \ \forall\,\bsl {u}\in\bigl[C^\infty_\per(Q)\bigr]^2.$$
				
				Denoting by $V_\sol^\varkappa$ the set of $\varkappa$-solenoidal matrices, we can write (see {\it e.g.} \cite{bib30})
							$
							\bigl[L^2_{\rm per}(Q, {\rm d}\varkappa)\bigr]^3=V_\pot^\varkappa\oplus V_\sol^\varkappa.
							$ 
							It follows that the orthogonal decomposition 
					$
					\bigl[L^2(\Omega\times Q,\dx\times{\rm d}\varkappa)\bigr]^3=L^2(\Omega, V_\pot^\varkappa)\oplus L^2(\Omega, V_\sol^\varkappa)
					$ 
					holds,
				where the {\sl two-scale} $L^2$-spaces of $\varkappa$-potential and $\varkappa$-solenoidal vector fields are the closures
				of the linear spans of matrices $w\bsl{e}(\bsl {u}),$ $w\in C_0^\infty(\Omega),$ $\bsl {u}\in\bigl[C^\infty_\per(Q)\bigr]^2$
				and $w\bsl{v},$ $w\in C_0^\infty(\Omega),$ $\bsl{v}\in V_\sol^\varkappa,$ 
				with respect to the norm of $\bigl[L^2(\Omega\times Q,\dx\times{\rm d}\varkappa)\bigr]^3.$  
				When $\varkappa$ is the Lebesgue measure on $Q,$ we simply write $V_{\rm pot},$ $V_{\rm sol},$ $\bigl[L^2(\Omega\times Q)\bigr]^3.$
\end {Def}

			\subsection {Convergence on the stiff component}\label {sec222}
				
				We first study the relationship between the limit functions $\bsl {u}\xy$ and $\widehat{\bsl{u}}\xy,$ see Section \ref{sec221}.
					\begin {Def}
Denote $\boldsymbol{\psi}^h_\varepsilon:=\boldsymbol{\psi}^h(\cdot/\varepsilon),$ where $\boldsymbol{\psi}^h\in\bigl[\lqh\bigr]^2$ extended to ${\mathbb R}^2$ by $Q$-periodicity. 

1. We say that the sequence $\boldsymbol{\psi}^h_\varepsilon$ 
						{\sl weakly converges} to $\boldsymbol{\psi}\in\bigl[L^2_{\rm per}(Q, {\rm d}\mu)\bigr]^2,$ 
						and write $\boldsymbol{\psi}^h_\ep\wlq\boldsymbol{\psi},$ if 
							$$\int_Q\boldsymbol{\psi}^h_\ep\cdot \boldsymbol{\xi}(\cdot/\ep)
							\,\dmu_\ep^h\longrightarrow\int_Q\boldsymbol{\psi}
							\cdot \boldsymbol{\xi}
							\,\dmu
						\ \ \ \ \ \forall\boldsymbol{\xi}\in\bigl[C_\per^\infty(Q)\bigr]^2,
						$$
						where the test function $\boldsymbol{\xi}$ is extended to ${\mathbb R}^2$ by $Q$-periodicity.
					
						2. We say that $\boldsymbol{\psi}^h_\ep$
						{\sl strongly converge} to a function $\boldsymbol{\psi}\in\bigl[L^2_{\rm per}(Q, {\rm d}\mu)\bigr]^2,$
						and write $\boldsymbol{\psi}^h_\ep\slq\boldsymbol{\psi},$
						if 
							$$\int_Q\boldsymbol{\psi}^h_\ep\cdot \boldsymbol{\xi}^h(\cdot/\ep)\,\dmu_\ep^h\longrightarrow\int_Q\boldsymbol{\psi}
							\cdot \boldsymbol{\xi}
							\,\dmu \ \ \ 
							 \ \ {\rm if\ and\ only\ if}\ \ \ \boldsymbol{\xi}^h_\ep\wlq\boldsymbol{\xi}.$$
					\end {Def}
				
					\begin {Prop}
					\label{basicprop}
					 If  $\ueph(\bsl{x})\wtwo \bsl {u}(\bsl{x}, \bsl{y})$ 
					 and $\boldsymbol{\psi}^h_\ep\slq\boldsymbol{\psi},$  then
							$$\int_\Omega\ueph\cdot \boldsymbol{\psi}^h_\ep\vphi\,\dmu_\ep^h\longrightarrow\int_\Omega\int_Q\bsl {u}\xy\cdot \boldsymbol{\psi}(\bsl {y})\vphi(\bsl {x})\,\dmu(\bsl{y})\dx\ \ \ \ \ \ \forall\vphi\in C_0^\infty(\Omega).$$
					\end {Prop}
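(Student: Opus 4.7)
The plan is to reduce to smooth periodic test functions by density, and use the strong convergence $\slq$ to control the resulting error. Choose $\boldsymbol{\psi}_n\in[C^\infty_\per(Q)]^2$ with $\boldsymbol{\psi}_n\to\boldsymbol{\psi}$ in $[L^2_\per(Q,\dmu)]^2,$ and decompose
\begin{equation*}
\int_\Omega\ueph\cdot\boldsymbol{\psi}^h_\ep\vphi\,\dmu_\ep^h=\int_\Omega\ueph\cdot\boldsymbol{\psi}_n(\bsl{x}/\ep)\vphi(\bsl{x})\,\dmu_\ep^h+R_n^{\ep,h}.
\end{equation*}

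For the leading term, $\vphi(\bsl{x})\boldsymbol{\psi}_n(\bsl{y})\in[L^2(\Omega,C_\per(Q))]^2$ is an admissible test function in (\ref{wtwocon}), so the weak two-scale convergence $\ueph\wtwo\bsl{u}$ gives
\begin{equation*}
\lim_{\ep\to 0}\int_\Omega\ueph\cdot\boldsymbol{\psi}_n(\bsl{x}/\ep)\vphi(\bsl{x})\,\dmu_\ep^h=\int_\Omega\int_Q\bsl{u}\xy\cdot\boldsymbol{\psi}_n(\bsl{y})\vphi(\bsl{x})\,\dmu(\bsl{y})\dx.
\end{equation*}
By Cauchy--Schwarz, using $\bsl{u}\in[L^2(\Omega\times Q,\dx\times\dmu)]^2$ from Proposition \ref{prop01} and $\boldsymbol{\psi}_n\to\boldsymbol{\psi}$ in $[L^2_\per(Q,\dmu)]^2,$ the right-hand side tends as $n\to\infty$ to $\int_\Omega\int_Q\bsl{u}\cdot\boldsymbol{\psi}\vphi\,\dmu\dx,$ which is the desired limit.

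For the remainder, Cauchy--Schwarz together with the uniform bound $\|\ueph\|_{[L^2(\Omega,\dmu_\ep^h)]^2}\le C$ (from the assumption (\ref{limsup}) that $\ueph$ is weakly two-scale convergent, hence bounded) give
\begin{equation*}
|R_n^{\ep,h}|\le C\|\vphi\|_\infty\bigl\|\boldsymbol{\psi}^h_\ep-\boldsymbol{\psi}_n(\bsl{x}/\ep)\bigr\|_{[L^2(\Omega,\dmu_\ep^h)]^2},
\end{equation*}
so everything reduces to showing
\begin{equation}
\lim_{\ep\to 0}\int_\Omega\bigl|\boldsymbol{\psi}^h_\ep-\boldsymbol{\psi}_n(\bsl{x}/\ep)\bigr|^2\,\dmu_\ep^h=|\Omega|\int_Q|\boldsymbol{\psi}-\boldsymbol{\psi}_n|^2\,\dmu,\label{key-convergence}
\end{equation}
since the right-hand side vanishes as $n\to\infty.$

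The main obstacle is (\ref{key-convergence}), because the hypothesis $\slq$ is phrased over $Q$ rather than $\Omega.$ Since $\boldsymbol{\psi}^h_\ep$ and $\boldsymbol{\psi}_n(\bsl{x}/\ep)$ are both $\ep$-periodic in $\bsl{x},$ and $\mu^h_\ep$ is likewise $\ep$-periodic, the integral over $\Omega$ decomposes, up to a boundary layer of $\mu^h_\ep$-measure $O(\ep),$ into a sum over the $\approx |\Omega|/\ep^2$ cells of the form $\bsl{x}_k+\ep Q,$ each contributing the same integral. Expanding the square and passing to the limit: the $|\boldsymbol{\psi}^h_\ep|^2$-piece is handled by $\slq$ applied with $\boldsymbol{\xi}^h_\ep=\boldsymbol{\psi}^h_\ep;$ the cross piece by strong--weak pairing with $\boldsymbol{\xi}^h_\ep=\boldsymbol{\psi}_n(\cdot/\ep),$ which satisfies $\boldsymbol{\psi}_n(\cdot/\ep)\wlq\boldsymbol{\psi}_n$ thanks to the continuity of $\boldsymbol{\psi}_n;$ and the $|\boldsymbol{\psi}_n|^2$-piece by the standard mean-value convergence $\int_Q\eta(\cdot/\ep)\,\dmu_\ep^h\to\int_Q\eta\,\dmu$ for continuous periodic $\eta=|\boldsymbol{\psi}_n|^2.$ Summing the cell contributions and controlling the boundary layer via the uniform bound on $\int_Q|\boldsymbol{\psi}^h|^2\,\dmu^h$ (itself a consequence of $\slq$ with $\boldsymbol{\xi}^h_\ep=\boldsymbol{\psi}^h_\ep$) yields (\ref{key-convergence}) and closes the argument.
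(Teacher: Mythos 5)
Your proof is correct and follows essentially the same route as the paper's: approximate $\boldsymbol{\psi}$ by a sequence of smooth periodic functions (the paper's $\bzeta_k,$ your $\boldsymbol{\psi}_n$), pass to the limit in the leading term using the weak two-scale convergence of $\ueph$ against the continuous test function $\vphi(\bsl{x})\boldsymbol{\psi}_n(\bsl{y}),$ and bound the remainder by Cauchy--Schwarz together with the convergence (\ref{proof1}). The only difference is cosmetic: the paper states (\ref{proof1}) as an immediate consequence of the definition of $\slq$ (via $\ep$-periodicity and the boundedness of $\Omega$), whereas you unpack that step by expanding the square, invoking $\slq$ and the weak convergence $\boldsymbol{\psi}_n(\cdot/\ep)\wlq\boldsymbol{\psi}_n$ for the three resulting terms, and summing over cells with an $O(\ep)$ boundary-layer estimate — a harmless and correct elaboration of a step the paper leaves implicit.
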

					\begin {proof}
						Since $\boldsymbol{\psi}^h_\ep\slq\boldsymbol{\psi},$ it follows that for all 
						$\bzeta\in\bigl[C_\per(Q)\bigr]^2$ the relation
							\begin {equation}\label {proof1}
								\lime\int_\Omega\bigl |\boldsymbol{\psi}^h_\ep-\bzeta(\cdot/\ep)\bigr|^2\,\dmu_\ep^h
								=|\Omega|\int_Q\left |\boldsymbol{\psi}-\bzeta\right |^2\,\dmu
							\end {equation}
							holds.
						Notice further that, by the H\"{o}lder inequality, one has
							\begin{equation*}
								\left | \int_\Omega\ueph\cdot\big(\boldsymbol{\psi}^h_\ep-\bzeta(\cdot/\ep)\big)
								\vphi\,\dmu_\ep^h\right | 
								\leq \max_\Omega |\vphi | \| \ueph\|_{[\lomegaeph]^2}\biggl ( \int_\Omega\bigl|\boldsymbol{\psi}^h-\bzeta(\cdot/\ep)\bigr|^2\dmu_\ep^h\biggr )^{1/2}.
\end{equation*}								
						The weak two-scale convergence of $\ueph$ and the relation (\ref {proof1}) imply that
							$$\limsup_{\ep\tends 0}\biggl| \int_\Omega\ueph\cdot\boldsymbol{\psi}^h_\ep\vphi\,\dmu_\ep^h-\int_\Omega\int_Q\bsl {u}\xy\cdot\bzeta(\bsl {y})\vphi(\bsl {x})\,\dmu(\bsl{y})\dx\biggr |\ \ \ \ \ \ \ \ \ \ \ \ \ \ \ \ \ \ \ \ \ \ \ \ \ \ \ \ \ \ \ \ \ \ \ \ \ \ \ \ \ \ $$
							$$=\limsup_{\ep\tends 0}\bigg | \int_\Omega\ueph\cdot\boldsymbol{\psi}^h_\ep\vphi\,\dmu_\ep^h-\int_\Omega\ueph\cdot\bzeta(\cdot/\ep)\vphi\,\dmu_\ep^h\bigg|
							\leq C\biggl ( \int_Q |\boldsymbol{\psi}
							-\bzeta
							|^2\,\dmu\biggr)^{1/2}\ \ \ \ \ \ \forall \bzeta\in\bigl[C_\per(Q)\bigr]^2.$$
						The claim now follows by choosing an approximation sequence 
						$\bzeta=\bzeta_k$ 
						such that $\bzeta_k\tends \boldsymbol{\psi}$ in $\bigl[\lqx\bigr]^2$.
					\end {proof}			
					\begin {Thrm}
					\label{trace_theorem}
								The function $\widehat{\bsl {u}}$ is the trace of $\bsl {u}$ on $F_1,$ in the sense 
								that $\bsl {u}\xy=\widehat{\bsl {u}}\xy$ a.e. $\bsl{x}\in\Omega,$ $\lambda$-a.e. $\bsl{y}\in F_1.$ 
					\end {Thrm}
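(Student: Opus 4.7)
I would compute, for arbitrary test functions $\varphi\in C_0^\infty(\Omega)$ and $\boldsymbol{\xi}\in\bigl[C_\per^\infty(Q)\bigr]^2$, the limit of
$I_\ep:=\int_\Omega\chi_1^\eph\ueph\cdot\boldsymbol{\xi}(\cdot/\ep)\varphi\,\dmu_\ep^h$
in two different ways, and then conclude by a density argument.

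The first computation exploits the decomposition $\dmu_\ep^h=\tfrac{1}{2}\dx+\tfrac{1}{2}\dlambda_\ep^h$. The Lebesgue contribution is supported on $\whep_1$, whose 2D measure is of order $O(h)$ (the scaled rods have thickness $\ep h = O(\ep^2)$ and occupy a fraction $O(h)$ of each cell); combined with the a priori bound $\|\ueph\|_{L^2(\Omega,\dx)}\le C$ extracted from the $\mu_\ep^h$-bound via the measure decomposition, Cauchy--Schwarz yields that this piece is $O(h^{1/2})\to 0$. The $\lambda_\ep^h$-contribution reduces to $\tfrac{1}{2}\int_\Omega\ueph\cdot\boldsymbol{\xi}(\cdot/\ep)\varphi\,\dlambda_\ep^h$ since $\chi_1^\eph\equiv 1$ on the support of $\lambda_\ep^h$, and the weak two-scale convergence $\chi_1^\eph\ueph\wtwo\widehat{\bsl{u}}$ with respect to $\lambda_\ep^h$ then yields the limit $\tfrac{1}{2}\int_\Omega\int_Q\widehat{\bsl{u}}\cdot\boldsymbol{\xi}\varphi\,\dlambda\,\dx$.

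The second computation uses Proposition \ref{basicprop} applied to $\boldsymbol{\psi}^h:=\chi_1^h\boldsymbol{\xi}$. I would first establish the strong convergence $\chi_1^h\boldsymbol{\xi}\slq\chi_{F_1}\boldsymbol{\xi}$; by the characterisation displayed in (\ref{proof1}), this amounts to checking that for every $\boldsymbol{\zeta}\in\bigl[C_\per(Q)\bigr]^2$,
$$\lim_{\ep\to 0}\int_\Omega\bigl|\chi_1^h(\cdot/\ep)\boldsymbol{\xi}(\cdot/\ep)-\boldsymbol{\zeta}(\cdot/\ep)\bigr|^2\,\dmu_\ep^h=|\Omega|\int_Q|\chi_{F_1}\boldsymbol{\xi}-\boldsymbol{\zeta}|^2\,\dmu.$$
Splitting both sides by the measure decomposition, the Lebesgue pieces match because $\int_Q(\chi_1^h)^2\,\dx=|F_1^h|\to 0$, i.e.\ $\chi_1^h\to 0$ in $L^2(Q,\dx)$, while the $\lambda$ pieces match because $\chi_1^h\equiv 1$ on $\mathrm{supp}\,\lambda^h$ for every $h$. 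Proposition \ref{basicprop} then gives
$$I_\ep\longrightarrow\int_\Omega\int_Q\bsl{u}(\bsl{x},\bsl{y})\cdot\chi_{F_1}(\bsl{y})\boldsymbol{\xi}(\bsl{y})\varphi(\bsl{x})\,\dmu(\bsl{y})\dx=\tfrac{1}{2}\int_\Omega\int_Q\bsl{u}\cdot\boldsymbol{\xi}\varphi\,\dlambda\,\dx,$$
since $\chi_{F_1}$ annihilates the Lebesgue component of $\mu$.

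Equating the two expressions for $\lim I_\ep$ yields $\int_\Omega\int_Q(\bsl{u}-\widehat{\bsl{u}})\cdot\boldsymbol{\xi}(\bsl{y})\varphi(\bsl{x})\,\dlambda(\bsl{y})\,\dx=0$ for all admissible $\varphi,\boldsymbol{\xi}$. Density of the tensor products in $\bigl[L^2(\Omega\times Q,\dx\times\dlambda)\bigr]^2$ then delivers the claim. The main obstacle is the strong convergence $\chi_1^h\boldsymbol{\xi}\slq\chi_{F_1}\boldsymbol{\xi}$, whose limit $\chi_{F_1}\boldsymbol{\xi}$ is discontinuous; the key point is that Lebesgue measure and $\lambda$ are mutually singular, so that the vanishing Lebesgue mass of $\chi_1^h$ and its constant unit value on the one-dimensional skeleton $F_1$ supporting $\lambda$ coexist consistently in the $\mu$-limit.
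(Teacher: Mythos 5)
Your argument is correct and follows essentially the same skeleton as the paper's proof: evaluate the limit of a product of the form $\int_\Omega\chi_1^{h}(\cdot/\ep)\ueph\cdot\boldsymbol{\psi}^h(\cdot/\ep)\varphi\,\dmu_\ep^h$ twice, once treating $\chi_1^{h,\varepsilon}\ueph\wtwo\widehat{\bsl{u}}$ as the two-scale convergent factor and once treating $\ueph\wtwo\bsl{u}$ as that factor with the $F_1^h$-supported test function converging strongly, then equate and use that the strongly two-scale convergent factor annihilates the Lebesgue part of $\mu.$ The one substantive difference is the choice of that test function: the paper takes the network-to-rod extension $[\boldsymbol{\psi}]^h$ of a general $\widehat{\boldsymbol{\psi}}\in\bigl[L^2_{\rm per}(Q,\dlambda)\bigr]^2,$ built by constant extension along normal fibres as in (\ref{psi1}), whereas you take the more elementary $\chi_1^h\boldsymbol{\xi}$ with $\boldsymbol{\xi}\in\bigl[C^\infty_\per(Q)\bigr]^2$ and conclude by density of tensor products $\varphi(\bsl{x})\boldsymbol{\xi}(\bsl{y})$ in $\bigl[L^2(\Omega\times Q,\dx\times\dlambda)\bigr]^2.$ Your variant is slightly more self-contained: it sidesteps the geometric fibre-extension construction, and you fill in the verification of the key relation (\ref{proof1}) for $\chi_1^h\boldsymbol{\xi}\slq\chi_{F_1}\boldsymbol{\xi}$ (split by measure; use $\vert F_1^h\cap Q\vert=O(h)$ for the Lebesgue piece and $\chi_1^h\equiv1$ on ${\rm supp}\,\lambda^h$ for the $\lambda$-piece), a fact the paper simply asserts for $[\boldsymbol{\psi}]^h_\ep.$ The modest cost is the extra density step at the end, which the paper avoids by working with $L^2$ test functions from the outset. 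You also correctly identify the mechanism that makes either argument run, namely the mutual singularity of Lebesgue measure and $\lambda,$ which lets $\chi_1^h\boldsymbol{\xi}$ (or $[\boldsymbol{\psi}]^h$) converge strongly in $L^2_{\rm per}(Q,\dmu)$ to a function supported on the skeleton $F_1$ even though it lives on a set of vanishing area.
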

					\begin {proof}
						For all functions $\widehat{\boldsymbol{\psi}}\in\bigl[L^2_{\rm per}(Q,\dlambda)\bigr]^2$ and $h>0,$ we define
					\begin {equation}\label {psi1}
						\boldsymbol{\psi}(\bsl {y}):=
							\begin {cases}
								\widehat{\boldsymbol{\psi}}(\bsl {y}),\ &\ \bsl {y}\in F_1\cap Q,\\
								0,\ &\ \bsl {y}\in Q\backslash F_1,
							\end {cases}
							\ \ \ \ \ \ \ \ \ \ \ \ \ 
							[\boldsymbol{\psi}]^h(\bsl {y}):=\begin {cases}
							\sum_{I_j}\widehat{\boldsymbol{\psi}}^h_j(\bsl {y}),\ \ & \ \bsl{y}\in F_1^h\cap Q,\\
							0,\ &\ \bsl {y}\in Q\backslash F_1^h,
							\end {cases}
					\end {equation}
					where the summation is carried out over all links $I_j$ of $F_1\cap Q,$ and for each link $I_j$ we set 
					$\widehat{\boldsymbol{\psi}}^h_j(\bsl{y})=\widehat{\boldsymbol{\psi}}(\bsl{y}^*)$ whenever $\bsl{y}$ is in 
					the $h$-neighbourhood of $I_j$ and $|\bsl{y}-\bsl{y}^*|={\rm dist}(\bsl{y}, I_j),$ $\bsl{y}^*\in I_j,$ and
					$\widehat{\boldsymbol{\psi}}^h_j(\bsl{y})=0$ otherwise.
						Notice that for all $\varphi\in C_0^\infty(\Omega)$ one has
							\begin{equation}
							\label {int1}
								\int_\Omega\ueph\cdot[\boldsymbol{\psi}]^h_\ep\varphi
								\,\dmu^h_\ep=\int_\Omega\ueph\chi_1^h(\cdot/\ep)\cdot[\boldsymbol{\psi}]^h_\ep\varphi\,\dmu^h_\ep
							\end{equation}
Due to the fact that $[\boldsymbol{\psi}]_\ep^h\slq\boldsymbol{\psi},$ the following convergence holds (see Proposition \ref{basicprop}):							
							\begin{equation}
							\int_\Omega\ueph\cdot[\boldsymbol{\psi}]^h_\ep\varphi\,\dmu^h_\ep\stackrel{\varepsilon\to0}{\longrightarrow}
							\int_\Omega\int_Q \bsl {u}\xy\cdot\boldsymbol{\psi}(\bsl {y})\varphi(\bsl{x})\,\dmu(\bsl{y})\dx=
							\frac{1}{2}\int_\Omega\int_Q \bsl {u}\xy\cdot\widehat{\boldsymbol{\psi}}(\bsl {y})\varphi(\bsl{x})\,\dlambda(\bsl{y})\dx.
							\label{oneside}
							\end{equation}
						Similarly, for the first integral on the right-hand side of (\ref{int1}), we obtain
							\begin{equation}
							\int_\Omega\ueph\chi_1^h(\cdot/\ep)\cdot[\boldsymbol{\psi}]^h_\ep\varphi\,\dmu^h_\ep
							\stackrel{\varepsilon\to0}{\longrightarrow}\frac {1}{2}\int_\Omega\int_{Q}\widehat{\bsl {u}}\xy\cdot\widehat{\boldsymbol{\psi}}(\bsl {y})\varphi(\bsl{x})\,\dl(\bsl{y})\dx.
							\label{otherside}
							\end{equation}
						It follows that the limits in (\ref{oneside}) and (\ref{otherside}) coincide, as required.
					\end {proof}
				The next theorem, proved in \cite{bib30}, describes the structure of the two-scale limit $\widehat{\bsl{u}}.$
				Recall that on the stiff component $F_1^{h,\varepsilon}$ the symmetric gradient is bounded and hence
					$\ep\chi_1^{\varepsilon,h}\bsl{e}(\ueph)\tends 0$ in
					$\bigl[\lopen{\whep_1}{\dlambda_\ep^h}\bigr]^3.$
					\begin {Thrm}\label{Zh_ref}[Theorems 12.2, 12.3 and Lemma 9.6 in \cite{bib30}]
					\label{u0_chi_theorem}
						
						1. 
							It follows from $
								\chi_1^{h, \varepsilon}\ueph\wtwo\widehat{\bsl{u}}\xy$ in $\bigl[\lopen{\whep_1}{\dmu_\ep^h}\bigr]^2$ and  
								$\ep\chi_1^{h, \varepsilon}\bsl{e}(\ueph)\tends 0
							$ in $\bigl[\lopen{\whep_1}{\dlambda_\ep^h}\bigr]^3,$
						  that $\forall\bsl{x}\in\Omega,$ $\lambda$-a.e. $\bsl{y}\in F_1$ one has 
						$\widehat{\bsl{u}}\xy=\bsl {u}_0(\bsl {x})+\boldsymbol{\chi}\xy$ where $\bsl {u}_0\in\bigl[H^1_0(\Omega)\bigr]^2$ and $\boldsymbol{\chi}\in\lopen{\Omega}{\widehat{\mathcal{R}}}.$
						
						2. If the convergence $\chi_1^{h, \varepsilon}\ueph\wtwo\bsl {u}_0(\bsl {x})+\boldsymbol{\chi}\xy$ holds in $\bigl[\lopen{\whep_1}{\dlambda_\ep^h}\bigr]^2$ and the sequence  $\bigl\{\chi_1^{h,\varepsilon}\bsl{e}(\ueph)\bigr\}$ is bounded in
					$\bigl[\lopen{\whep_1}{\dlambda_\ep^h}\bigr]^3,$ then, up to passing to a subsequence, one has $\bsl{e}(\ueph)\wtwo \bsl{e}(\bsl {u}_0(\bsl {x}))+\bsl{v}\xy$ in
							$\bigl[\lopen{\whep_1}{\dlambda_\ep^h}\bigr]^3,$
						where $\bsl{v}\xy\in \lopen{\Omega}{V_\pot^\lambda}.$

					Under the additional assumption that 
					\[
					\lim_{\varepsilon\to0}\int_{\Omega}A_1\bsl{e}(\ueph):\bsl{e}_{\bsl{y}}
					(\boldsymbol{\varphi})(\cdot/\varepsilon) w\,{\rm d}\lambda_\varepsilon^h=0\ \ \ \ \  \forall \boldsymbol{\varphi}\in\bigl[C^\infty_{\rm per}(Q)\bigr]^2,\ \ \ w\in C^\infty_0(\Omega), 
					\]
					the two-scale convergence
					$\chi_1^{h,\varepsilon}A_1\bsl{e}(\ueph)\wtwo A_1\bigl\{\bsl{e}\bigl(\bsl {u}_0(\bsl{x})\bigr)+\bsl{v}(\bsl{x},\bsl{y})\bigr\}$ holds in $\bigl[\lopen{\whep_1}{\dlambda_\ep^h}\bigr]^3,$ where the limit function is an element of $\lopen {\Omega}{V_\sol^\lambda}.$ 
					\end{Thrm}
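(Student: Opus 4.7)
The plan is to follow the strategy of \cite{bib30} and treat the three claims in turn. For Part 1, I would test the hypothesis $\ep\chi_1^{h,\ep}\bsl{e}(\ueph)\to 0$ against smooth symmetric matrix-valued functions $\bsl{\Phi}(\bsl{x},\bsl{y}),$ periodic in $\bsl{y}$ and compactly supported in $\bsl{x}.$ Integrating by parts in $\bsl{x},$ the identity
\begin{equation*}
\ep\int_\Omega\chi_1^{h,\ep}\bsl{e}(\ueph)\cdot\bsl{\Phi}(\bsl{x},\bsl{x}/\ep)\,{\rm d}\lambda_\ep^h=-\int_\Omega\chi_1^{h,\ep}\ueph\cdot\bigl(\ep\,\dive_{\bsl{x}}\bsl{\Phi}+\dive_{\bsl{y}}\bsl{\Phi}\bigr)(\bsl{x},\bsl{x}/\ep)\,{\rm d}\lambda_\ep^h
\end{equation*}
yields, on passing to the limit (the left-hand side vanishes by assumption, and the right-hand side converges by weak two-scale convergence of $\chi_1^{h,\ep}\ueph$), the relation $\int_\Omega\int_Q\widehat{\bsl{u}}\xy\cdot\dive_{\bsl{y}}\bsl{\Phi}\xy\,\dl\,\dx=0.$ This expresses that $\widehat{\bsl{u}}(\bsl{x},\cdot)$ has vanishing $\bsl{y}$-symmetric gradient in a distributional sense relative to $\lambda,$ hence $\widehat{\bsl{u}}(\bsl{x},\cdot)\in\mathcal{R}$ for a.e.\ $\bsl{x},$ and the assumed decomposition $\mathcal{R}=\rbb^2\oplus\widehat{\mathcal{R}}$ produces the stated form $\widehat{\bsl{u}}=\bsl{u}_0(\bsl{x})+\boldsymbol{\chi}\xy.$ To obtain $\bsl{u}_0\in\bigl[H^1_0(\Omega)\bigr]^2,$ I would test against $\bsl{y}$-independent matrices, exploit the boundedness of $\chi_1^{h,\ep}\bsl{e}(\ueph)$ to identify a macroscopic strain, and transfer the trace-zero condition from $\ueph$ to $\bsl{u}_0.$

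For Part 2, two-scale compactness yields a subsequence for which $\chi_1^{h,\ep}\bsl{e}(\ueph)\wtwo\bsl{p}\xy\in\bigl[L^2(\Omega\times Q,\dx\times\dlambda)\bigr]^3,$ and the orthogonal decomposition $\bigl[L^2(\Omega\times Q,\dx\times\dlambda)\bigr]^3=L^2(\Omega,V_\pot^\lambda)\oplus L^2(\Omega,V_\sol^\lambda)$ splits $\bsl{p}=\bsl{p}^\sol+\bsl{p}^\pot.$ Testing against $w(\bsl{x})\bsl{\Psi}$ with $w\in C^\infty_0(\Omega)$ and constant symmetric $\bsl{\Psi},$ integrating by parts in $\bsl{x},$ and using the convergence $\chi_1^{h,\ep}\ueph\wtwo\bsl{u}_0+\boldsymbol{\chi}$ from Part 1, identifies the solenoidal component as $\bsl{p}^\sol(\bsl{x},\bsl{y})=\bsl{e}(\bsl{u}_0)(\bsl{x}),$ leaving $\bsl{v}\in L^2(\Omega,V_\pot^\lambda)$ as the residual. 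For Part 3, the constancy of $A_1$ immediately gives $\chi_1^{h,\ep}A_1\bsl{e}(\ueph)\wtwo A_1\{\bsl{e}(\bsl{u}_0)+\bsl{v}\},$ and testing the additional hypothesis against $\boldsymbol{\varphi}\in\bigl[C^\infty_\per(Q)\bigr]^2,$ $w\in C^\infty_0(\Omega),$ yields in the limit
\begin{equation*}
\int_\Omega\int_Q A_1\{\bsl{e}(\bsl{u}_0)+\bsl{v}\}\cdot\bsl{e}_{\bsl{y}}(\boldsymbol{\varphi})(\bsl{y})\,w(\bsl{x})\,\dl(\bsl{y})\,\dx=0,
\end{equation*}
which is exactly the statement that $A_1\{\bsl{e}(\bsl{u}_0)+\bsl{v}\}(\bsl{x},\cdot)\in V_\sol^\lambda$ for a.e.\ $\bsl{x}\in\Omega.$

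The main obstacle I expect is Part 1, specifically the $H^1_0$-regularity of $\bsl{u}_0.$ Boundedness of $\chi_1^{h,\ep}\bsl{e}(\ueph)$ controls the strain only at the framework scale with respect to the singular measure $\lambda,$ and transferring this to an $H^1$-estimate on $\bsl{u}_0$ across the whole of $\Omega$ relies essentially on the structural assumption $\mathcal{R}=\rbb^2\oplus\widehat{\mathcal{R}}$ (ruling out spurious macroscopic rigid modes) together with suitable averaging operators tied to $\lambda,$ as constructed in \cite{bib30}. Once this regularity is secured, Parts 2 and 3 reduce to routine two-scale compactness and test-function manipulations already developed in that reference.
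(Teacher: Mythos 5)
First, note that this theorem is not proved in the paper: it is cited directly from \cite{bib30}, as the label ``[Theorems 12.2, 12.3 and Lemma 9.6 in \cite{bib30}]'' and the preamble ``The next theorem, proved in \cite{bib30}, \dots'' make explicit. So there is no paper-internal proof to check your sketch against; you are reconstructing Zhikov's argument, which is a reasonable exercise, but the reconstruction has two concrete gaps.

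In Part~1, the integration-by-parts identity you write omits a boundary term. The quantity $\ep\int_\Omega\chi_1^{h,\ep}\bsl{e}(\ueph)\cdot\bsl{\Phi}\,{\rm d}\lambda_\ep^h$ is, up to a normalisation constant, a Lebesgue integral over the rod set $\Omega_1^{\ep,h}$, and integrating by parts in $\bsl{x}$ over that set produces a contribution on $\partial\Omega_1^{\ep,h}$ of the form $\ep\int_{\partial\Omega_1^{\ep,h}}\ueph\cdot\bsl{\Phi}\bsl{n}\,\dS$. Since the total length of $\partial\Omega_1^{\ep,h}$ inside $\Omega$ scales like $\ep^{-1}$, the prefactor $\ep$ does not kill this term for free. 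In \cite{bib30} one instead works in the measure-theoretic framework from the outset: the rigidity set $\mathcal{R}$ is characterised via duality with $V_\sol^\lambda$, and passing from $\lambda^h$ to the singular limit $\lambda$ requires the two-parameter Sobolev-space and Korn-inequality machinery of \cite{bib57, bib58, bib75}. You correctly flag the $H^1_0$-regularity of $\bsl{u}_0$ as an obstacle, but the integration by parts step is of the same nature, and one must also reconcile the distributional (``$W$'') and approximation (``$H$'') notions of $\lambda$-rigidity before concluding $\widehat{\bsl{u}}(\bsl{x},\cdot)\in\mathcal{R}$.

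In Part~2 you split $\bsl{p}=\bsl{p}^\sol+\bsl{p}^\pot$ via the orthogonal decomposition $L^2(\Omega,V_\pot^\lambda)\oplus L^2(\Omega,V_\sol^\lambda)$ and then assert $\bsl{p}^\sol=\bsl{e}(\bsl{u}_0)$. This is not correct: a $\bsl{y}$-independent symmetric matrix $\bsl{e}(\bsl{u}_0)(\bsl{x})$ is in general neither $\lambda$-solenoidal nor $\lambda$-potential, so it cannot be obtained by an orthogonal projection. The decomposition asserted in the theorem, $\bsl{p}=\bsl{e}(\bsl{u}_0)+\bsl{v}$ with $\bsl{v}\in L^2(\Omega,V_\pot^\lambda)$, is oblique: $\bsl{u}_0$ is the macroscopic displacement already identified in Part~1, and one then verifies that $\bsl{v}:=\bsl{p}-\bsl{e}(\bsl{u}_0)$ is $\lambda$-potential by testing against $\lambda$-solenoidal fields. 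With that repair, and modulo the measure-theoretic machinery above, your Part~3 argument is along the right lines.
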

					\begin{Rem} 
					Define the ``$\lambda$-homogenised'' tensor $A^{\rm hom}_\lambda$ by the minimisation problem
						\begin{equation}
						A^{\rm hom}_\lambda \boldsymbol{\xi}\cdot \boldsymbol{\xi}=\min_{v\in V_\pot^\lambda}\int_QA_1(\boldsymbol{\xi}+\bsl{v}):(\boldsymbol{\xi}+\bsl{v})\,\dlambda\ \ \ \ \forall\boldsymbol{\xi}\in{\rm Sym}_2,
						\label{ahom1}
						\end{equation}
						where ${\rm Sym}_2$ is the space of symmetric $(2\times2)$-matrices.					Theorem \ref{Zh_ref} implies that 
					$\chi_1^{h,\varepsilon}A_1\bsl{e}(\ueph)\rightharpoonup A^{\rm hom}_\lambda\bsl{e}(\bsl{u}_0)$ in the sense of the usual weak convergence in $\bigl[L^2(\Omega)\bigr]^3.$ 

					\end{Rem}

					
					The description of the structure of the two-scale limit of $\chi_1^{\ep,h}\bsl{u}_\varepsilon^h$
					is a consequence of several statements proved in \cite{bib32}.
					 Combining this with Theorem \ref{trace_theorem}, we obtain the following result ({\it cf.} \cite[Theorem 3.1]{bib32}).
				\begin{Thrm}
				In the formula $\widehat{\bsl{u}}(\bsl{x}, \bsl{y})=\bsl{u}_0(\bsl{x})+\boldsymbol{\chi}(\bsl{x}, \bsl{y}),$ the transverse displacement $\boldsymbol{\chi}$
				is an element of the space $L^2(\Omega,\widehat{\mathcal R}^0).$ 
					\end{Thrm}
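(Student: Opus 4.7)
The plan is to upgrade the information from Theorem \ref{Zh_ref}, which already gives $\boldsymbol{\chi}\in L^2(\Omega,\widehat{\mathcal R}),$ to verify the three extra conditions defining $\widehat{\mathcal R}^0$: second-order tangential regularity on each link (C1), continuity of the tangential derivative across each node (C2), and vanishing at each node (C3). The main tool I would use is a Griso-type decomposition of $\ueph$ on each thin rod of $F_1^{h,\varepsilon},$ in the spirit of the Korn inequality for periodic frames proved in \cite{bib75}, combined with the a priori strain bound from Section \ref{sec221} and the critical scaling $h\sim\theta\varepsilon.$

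For (C1), fix a link $I_j$ of $F_1$ with tangent $\boldsymbol{\tau}_j$ and normal $\boldsymbol{\nu}_j,$ and introduce local arclength--transverse coordinates $(s,n)$ on the corresponding thin rod of $F_1^h\cap Q.$ Any $\bsl{u}\in [H^1(F_1^h\cap Q)]^2$ admits a Griso decomposition
\[
\bsl{u}(s,n)=U^{(\tau)}(s)\boldsymbol{\tau}_j+U^{(\nu)}(s)\boldsymbol{\nu}_j+n\,\omega(s)\boldsymbol{\nu}_j+\bsl{r}(s,n),
\]
in which the remainder $\bsl{r}$ is controlled by $\bsl{e}(\bsl{u})$ with sharp constants. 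The symmetric gradient of this decomposition produces a stretching contribution $\partial_s U^{(\tau)},$ a shear contribution $\omega+\partial_s U^{(\nu)},$ and a bending contribution $n\,\partial_s\omega.$ Applied to $\ueph$ and combined with the strain bound in $L^2(\mu_\varepsilon^h),$ these controls — together with the critical scaling $h\sim\theta\varepsilon$ — should translate, after unfolding onto the unit cell $Q,$ into boundedness of $\partial_s^2\chi^{(\nu)}$ in $L^2(\Omega\times I_j).$ In particular, the shear estimate forces $\omega=-\partial_s U^{(\nu)}$ in the Kirchhoff--Love limit, so $\partial_s\omega$ converges to $-\partial_s^2\chi^{(\nu)},$ giving (C1).

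For (C2) and (C3), I would examine the decomposition above in a neighbourhood of each node $\mathcal{O}$ of $F_1.$ The rods $I_1,\dots,I_n$ meeting at $\mathcal O$ overlap in a region of diameter $O(h),$ and on this overlap the single vector field $\ueph$ must match all the rigid-body components $\bsl{U}_j+n\,\omega_j\boldsymbol{\nu}_j$ simultaneously, up to remainder terms controlled by $\bsl{r}.$ Matching zeroth-order pointwise values in the limit forces $\boldsymbol{\chi}|_{\mathcal O}=\bsl{0}$ (condition (C3)), while matching the first-order correction — which captures the effective rotation of each incident rod — forces the common value $\omega_j(\mathcal O),$ equivalently $-(\boldsymbol{\chi}\cdot\boldsymbol{\nu}_j)'(\mathcal O),$ to agree across all $j$, yielding (C2).

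The main obstacle will be the scaling bookkeeping in the middle paragraph: one must simultaneously track the vanishing thickness $h\sim\theta\varepsilon,$ the normalisation of $\lambda^h$ on $F_1^h$ (which weights the rod region by $|F_1^h\cap Q|^{-1}\sim h^{-1}$), the extra factor of $\varepsilon^2$ in the rescaling $\lambda_\varepsilon^h=\varepsilon^2\lambda^h(\varepsilon^{-1}\cdot),$ and the $O(h^3)$ scaling of the classical bending energy, so that all these factors combine to produce an $O(1)$ bound on $\partial_s^2\chi^{(\nu)}$ in the two-scale limit. The computation in \cite{bib32} carries out precisely this balance in the pure thin-structure setting, and I expect it to transfer here because the soft inclusions contribute only the $O(\varepsilon^2)$ term $\varepsilon^2\int_{\whep_0}A_0\bsl{e}(\ueph)\cdot\bsl{e}(\ueph)\,\dmu_\ep^h,$ which does not interfere with the rod bending analysis.
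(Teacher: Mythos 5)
Your proposal is correct and coincides with the paper's route: the paper establishes this theorem by invoking the Griso/Korn analysis for critical-thickness frames in \cite{bib32} (cf.\ Theorem~3.1 there) together with Theorem~\ref{trace_theorem}, and the only new ingredient is exactly the observation you make, that the $O(\varepsilon^2)$ soft-inclusion term leaves the a~priori strain estimate on $\Omega_1^{\varepsilon,h}$ untouched, so the cited decomposition-and-scaling argument carries over verbatim. One notational slip: the cross-sectional rotation term in the Griso decomposition should read $n\,\omega(s)\boldsymbol{\tau}_j$ rather than $n\,\omega(s)\boldsymbol{\nu}_j$, since an infinitesimal rotation of the cross-section shifts the tangential, not the normal, component linearly in $n$ (the term you wrote is a cross-sectional dilation and would feed $\omega$ into $e_{nn}$); the stretching/shear/bending contributions you then list are precisely the ones produced by the corrected term, so the slip does not propagate into the rest of the argument.
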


			\subsection {Convergence on the soft component}\label {sec223}
					\begin {Thrm}
					\label{soft_conv_theorem}
						For all sequences $\{\ueph\}\subset\bigl[H^1(\Omega)\bigr]^2$
						such that
						$\ueph\wtwo\bsl {u}\xy$ in $\bigl[\lopen{\whep_0}{\dmu_\ep^h}\bigr]^2$ and
						$\ep\chi_0^{\ep,h}\bsl{e}(\ueph)\wtwo\widetilde {\bsl{p}}\xy$ in $\bigl[\lopen{\whep_0}{\dmu_\ep^h}\bigr]^3,$ 
						one has  $\bsl{u}\in\bigl[L^2(\Omega, H^1(Q))\bigr]^2$ and 
							$\widetilde{\bsl{p}}\xy={\bsl{e}}_{\bsl{y}}\bigl(\bsl {u}\xy\bigr)$ a.e. $\bsl{x}\in\Omega,$ $\bsl{y}\in Q.$
					\end {Thrm}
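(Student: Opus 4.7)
The plan is to derive a distributional identity linking $\widetilde{\bsl{p}}$ and $\bsl{u}$ on the bulk $Q\setminus F_1$ by integration by parts in the $\bsl{x}$-variable, and then invoke a periodic Korn inequality to promote the $L^2$-estimate on $\bsl{e}_{\bsl{y}}(\bsl{u})$ to $H^1$-regularity. First I would fix a test function $\bsl{\Phi}(\bsl{x},\bsl{y})$ with values in the symmetric $(2\times 2)$-matrices, smooth, with compact support in $\bsl{x}\in\Omega$ and $Q$-periodic in $\bsl{y},$ whose $\bsl{y}$-support has positive distance from the singular network $F_1.$ For all $h$ small enough, both $\bsl{\Phi}(\cdot,\cdot/\ep)$ and $(\dive_{\bsl{y}}\bsl{\Phi})(\cdot,\cdot/\ep)$ vanish in a neighbourhood of $F_1^{\ep,h}$ (hence on the support of $\lambda_\ep^h$) and coincide with their product with $\chi_0^{\ep,h}.$ Consequently, integrals against $\dmu_\ep^h$ involving these functions reduce to $(1/2)\dx$-integrals, and their $\ep\to 0$ limits reduce to $(1/2)\dy$-integrals on the $\bsl{y}$-side.

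I would then test the weak two-scale convergence $\ep\chi_0^{\ep,h}\bsl{e}(\ueph)\wtwo\widetilde{\bsl{p}}$ against such a $\bsl{\Phi}.$ Using the symmetry of $\bsl{\Phi}$ to write $\bsl{e}(\ueph)\cdot\bsl{\Phi}=\nabla\ueph:\bsl{\Phi},$ and integrating by parts in $\bsl{x}$ (legitimate since $\ueph\in[H_0^1(\Omega)]^2$ and $\bsl{\Phi}$ has compact $\bsl{x}$-support), the chain rule applied to $\bsl{\Phi}(\bsl{x},\bsl{x}/\ep)$ produces one term proportional to $\ep\,\dive_{\bsl{x}}\bsl{\Phi},$ which vanishes in the limit thanks to the uniform $L^2(\dmu_\ep^h)$-bound on $\ueph,$ and another involving $\dive_{\bsl{y}}\bsl{\Phi}.$ Into the latter one may freely insert the factor $\chi_0^{\ep,h}$ (its support being already away from $F_1^{\ep,h}$) and pass to the two-scale limit using $\ueph\wtwo\bsl{u}$ on the soft component. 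Matching both sides yields
\begin{equation*}
\int_\Omega\int_Q\widetilde{\bsl{p}}\cdot\bsl{\Phi}\,\dy\,\dx=-\int_\Omega\int_Q\bsl{u}\cdot\dive_{\bsl{y}}\bsl{\Phi}\,\dy\,\dx
\end{equation*}
for every admissible $\bsl{\Phi}.$

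Because the $1$-dimensional network $F_1$ is Lebesgue-negligible in $Q,$ smooth test functions with $\bsl{y}$-support bounded away from $F_1$ are dense in $L^2(\Omega\times Q,\dx\times\dy),$ so the identity extends and identifies $\widetilde{\bsl{p}}$ with $\bsl{e}_{\bsl{y}}(\bsl{u})$ distributionally---and hence $\dx\times\dy$-a.e.---on $\Omega\times Q.$ The regularity $\bsl{u}\in[L^2(\Omega,H^1(Q))]^2$ then follows from Korn's second inequality applied slicewise on the periodic cell, combining $\bsl{u}\in[L^2(\Omega\times Q)]^2$ with $\bsl{e}_{\bsl{y}}(\bsl{u})\in[L^2(\Omega\times Q)]^3.$ The main delicate point is precisely this reduction from test functions strictly supported off $F_1$ to an identification valid on all of $\Omega\times Q$: it is resolved because both limit objects live in the Lebesgue part of $\mu,$ while the trace of $\bsl{u}$ on the $\lambda$-support $F_1$ is already captured by Theorem \ref{trace_theorem}.
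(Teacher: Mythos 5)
Your argument reaches the key integral identity in an equivalent way to the paper but packages it differently. You integrate by parts in $\bsl{x}$ directly on the two-scale-convergence test, whereas the paper pre-packages the integration by parts into the relation between a matrix field $\bsl{b}$ and its divergence $\bsl{a}$, working on a $C^\infty$-domain $Q_\delta$ with $F_0^{2\delta}\cap Q\subset Q_\delta\subset F_0^\delta\cap Q$ and using fields with $\bsl{b}\,\bsl{n}\vert_{\partial Q_\delta}=0$. The two derivations are equivalent, and your identity $\int_\Omega\int_Q\widetilde{\bsl p}\cdot\bsl{\Phi}=-\int_\Omega\int_Q\bsl{u}\cdot\dive_{\bsl y}\bsl{\Phi}$ is the same as the paper's equation (2.11). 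Where you really diverge is the final step: the paper first tests against \emph{divergence-free} $\bsl{b}\in\mathcal{X}_\delta$ and cites the Dautray--Lions characterisation to obtain a potential $\bsl{v}\in H^1(Q_\delta)$ with $\widetilde{\bsl p}=\bsl{e}_{\bsl{y}}(\bsl{v})$, then matches $\bsl{v}$ to $\bsl{u}$ up to a constant by testing with $\bsl{a}=\dive\bsl{b}$ (which runs over all mean-zero smooth fields). You instead read the identity directly as a distributional relation for $\bsl{e}_{\bsl{y}}(\bsl{u})$ and invoke Korn's second inequality. Your route is somewhat more elementary, replacing the de Rham-type potential characterisation by the better-known Korn inequality; the paper's route has the advantage of producing the $H^1(Q_\delta)$-regularity as an output of the cited characterisation theorem, rather than as an input to Korn.

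One step in your write-up deserves more care. You write that because $F_1$ is Lebesgue-negligible, test functions with $\bsl{y}$-support bounded away from $F_1$ are $L^2$-dense, ``so the identity extends and identifies $\widetilde{\bsl p}$ with $\bsl{e}_{\bsl y}(\bsl u)$ distributionally on $\Omega\times Q$,'' and then you apply Korn on $Q$. This jump is not automatic: $L^2$-density of the test functions controls the left-hand side $\int\widetilde{\bsl{p}}\cdot\bsl{\Phi}$, but the right-hand side involves $\dive_{\bsl{y}}\bsl{\Phi}$, which is not controlled by the $L^2$-norm of $\bsl{\Phi}$. What you actually have is that the distributional symmetric gradient of $\bsl{u}(\bsl{x},\cdot)$ is represented by $\widetilde{\bsl{p}}\in L^2$ on every $Q_\delta\subset Q\setminus F_1$, hence $\bsl{u}\in H^1_{\rm loc}(Q\setminus F_1)$ with uniformly $L^2$-bounded symmetric gradient. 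Passing from this to $\bsl{u}(\bsl{x},\cdot)\in H^1(Q)$ requires ruling out a jump of $\bsl{u}$ across the one-dimensional set $F_1$, which is not Lebesgue-negligible in the capacity sense that matters here. The paper's proof ends at exactly the same point (it establishes $\widetilde{\bsl{p}}=\bsl{e}_{\bsl{y}}(\bsl{u})$ a.e. in $Q_\delta$ for each $\delta$ and concludes by ``the arbitrary choice of $\delta$''); the extra structure that rules out a jump is the trace result (Theorem \ref{trace_theorem}) and the bound on $\chi_1^{\ep,h}\bsl{e}(\ueph)$ on the stiff component, which pin down the one-sided traces of $\bsl{u}$ on $F_1$. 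If you retain Korn, you should apply it on each $Q_\delta$ and add a sentence explaining why the family $\bsl{u}\vert_{Q_\delta}$ assembles into a single element of $H^1_{\rm per}(Q)$, rather than asserting the global distributional identity by density.
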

					\begin {proof}
						For each $\delta>0,$ consider a $C^\infty$-domain $Q_\delta$ such that $F_0^{2\delta}\cap Q\subset Q_\delta\subset F_0^\delta\cap Q$ 
						and the set
					$$
					{\mathcal X}_\delta:=\{
					\bsl{b}\in\bigl[C^\infty(Q_\delta)\bigr]^3:\ \bsl{b}\,\bsl{n}\vert_{\partial Q_\delta}=0\bigr\},
					$$
						where $\bsl{n}$ is the unit normal to $\partial Q_\delta.$ 
						For all $\bsl{b}\in{\mathcal X}_\delta,$ $\bsl{a}={\rm div}\,\bsl{b}$ in $Q_\delta,$ consider the functions 
					\begin {equation}
					\widetilde{\bsl{a}}(\bsl {y}):=					
					\begin {cases} 
					\bsl{a}(\bsl {y}),\ &\ \bsl {y}\in Q_\delta,\\
					0,\ &\ \bsl{y}\in Q\backslash Q_\delta,
			\end {cases}
							\ \ \ \ \ \ \ \ \ \ \ \ \ 
						\widetilde{\bsl{b}}(\bsl{y}):=
						\begin {cases}
						\bsl{b}(\bsl {y}),\ &\ \bsl {y}\in Q_\delta,\\
					    0,\ &\ \bsl{y}\in Q\backslash Q_\delta,
							\end {cases}
							\end{equation}
			extended to ${\mathbb R}^2$ by $Q$-periodicity.
						Then for sufficiently small $\varepsilon>0$ (recall that $h\to0$ as $\varepsilon\to0$) the following identity holds:
							\begin{equation}
							\ep\int_{\whep_0}\widetilde{\bsl{b}}(\cdot/\varepsilon):\bsl{e}(\boldsymbol{\psi})\,\dmu_\ep^h=-\int_{\whep_0}\widetilde{\bsl {a}}(\cdot/\varepsilon)
							\cdot\boldsymbol{\psi}\,\dmu_\ep^h\ \ \ \ \forall\boldsymbol{\psi}\in\bigl[H^1_0(\Omega)]^2.
							\label{ab_identity}
							\end{equation}
		Setting $\boldsymbol{\psi}=\varphi\ueph,$ $\varphi\in C_0^\infty(\Omega),$ in (\ref{ab_identity}) 
		yields\footnote{Throughout, we use the notation $\otimes$ for the symmetrised 
			 tensor product.}
							\begin{equation*}
								\ep\int_{\whep_0}\widetilde{\bsl{b}}(\cdot/\varepsilon)\vphi: \bsl{e}(\ueph)\,\dmu_\ep^h+\ep\int_{\whep_0}\widetilde{\bsl{b}}(\cdot/\varepsilon): 
								(\ueph\otimes \del\vphi)\,\dmu_\ep^h
								= 
								-\int_{\whep_0}\widetilde{\bsl {a}}(\cdot/\varepsilon)\cdot\vphi\ueph\,\dmu_\ep^h.
							\end{equation*}
						Passing to the limit in the last identity as $\ep\tends 0$ and using the fact that $\widetilde{\bsl{a}},$ $\widetilde{\bsl{b}}$ vanish in $Q\setminus Q_\delta,$
						we obtain
							$$\int_\Omega\int_{Q_\delta}\widetilde {\bsl{p}}\xy\vphi(\bsl {x})\cdot\bsl{b}(\bsl {y})\,\dy\dx=-\int_\Omega\int_{Q_\delta}\bsl {u}\xy\vphi(\bsl {x})\cdot \bsl {a}(\bsl {y})\,
							\dy\dx.$$
						As $\vphi\in C_0^\infty(\Omega)$ is arbitrary, it follows that
							\begin{equation}
							\int_{Q_\delta}\widetilde {\bsl{p}}\xy\cdot\bsl{b}(\bsl {y})\,\dy=-\int_{Q_\delta}\bsl {u}\xy\cdot \bsl {a}(\bsl {y})\,\dy
							\ \ \ \ \ {\rm a.e.}\  \bsl{x}\in\Omega. 
							\label{y_identity}
							\end{equation}
						Taking divergence-free fields $\bsl {b}\in{\mathcal X}_\delta$ in (\ref{y_identity}) we infer (see {\it e.g.} \cite{DL})
					the existence of  $\bsl{v}\in\bigl[L^2(\Omega, H^1(Q_\delta))\bigr]^2$ such that 
						$\widetilde {\bsl{p}}(\bsl{x}, \bsl{y})={\bsl{e}}_{\bsl{y}}\bigl(\bsl {v}(\bsl{x}, \bsl{y})\bigr),$ $\bsl{y}\in Q_\delta,$ which implies 
							$$\int_{Q_\delta}\bsl {v}\xy\,\cdot\,\bsl {a}(\bsl {y})\,\dy=\int_{Q_\delta}\bsl {u}\xy\,\cdot\,\bsl {a}(\bsl {y})\,\dy
							\ \ \ \ \ {\rm a.e.}\  \bsl{x}\in\Omega,\ \ \ \ \ \ \ \ \ \ \ \ \ \ \ \ \ \ \ \ \ \ \ \ \ \ \ \ \ \ \ \ \ \ \ \ \ \ \ \ \ \ \ \ \ \ \ \ \ \ \ \ \ \ \ \ \ \ 
							$$
							$$ 
							\ \ \ \ \ \ \ \ \ \ \ \ \ \ \ \ \ \ \ \ \ \ \ \ \ \ \ \ \ \ \ \ \ \ \ \ \ \ \ \ \ \ \ \ \ \ \ \ \ \ \ \ \ \ \ \ \ \ \ \ 
							\ \ \ \ \forall\,\bsl{a}\in\bigl\{{\rm div}\,\bsl{b}|\,\bsl{b}\in{\mathcal X}_\delta\bigr\}=\biggl\{\bsl{a}\in\bigl[C^\infty(Q_\delta)\bigr]^2:\,\int_{Q_\delta}\bsl{a}(\bsl {y})\,\dy=0\biggr\}.
							$$
						Using the density in $\bigl[L^2(Q_\delta)\bigr]^2$ of vector functions $\bsl{a}$ having the above representation	
						implies that 
							$\bsl {v}\xy$ and $\bsl {u}\xy$ differ by a constant for $\bsl{y}\in Q_\delta,$
							hence 
							$\widetilde{\bsl{p}}={\bsl{e}}_{\bsl{y}}(\bsl {v})={\bsl{e}}_{\bsl{y}}(\bsl {u}),$ a.e. $\bsl{y}\in Q_\delta.$ By virtue of the arbitrary choice of 
							the parameter $\delta,$ we conclude that $\widetilde{\bsl{p}}={\bsl{e}}_{\bsl{y}}(\bsl {u})$ for a.e. $\bsl{y}\in Q.$
					\end {proof}

		\section {
		Homogenisation theorem}\label {sec23}
			In what follows, we consider the case of the framework $F_1$ shown in Fig. \ref{critthick2} (``model framework''). However, the analysis presented is readily extended to any framework such that the representation (\ref{rigid1}) holds for periodic rigid displacements, with obvious modifications in the statements.
			
			 \begin {figure}[h!]
						\centering
						\includegraphics[trim=2cm 16cm 2cm 1cm width=7cm,height=0.25\textheight]{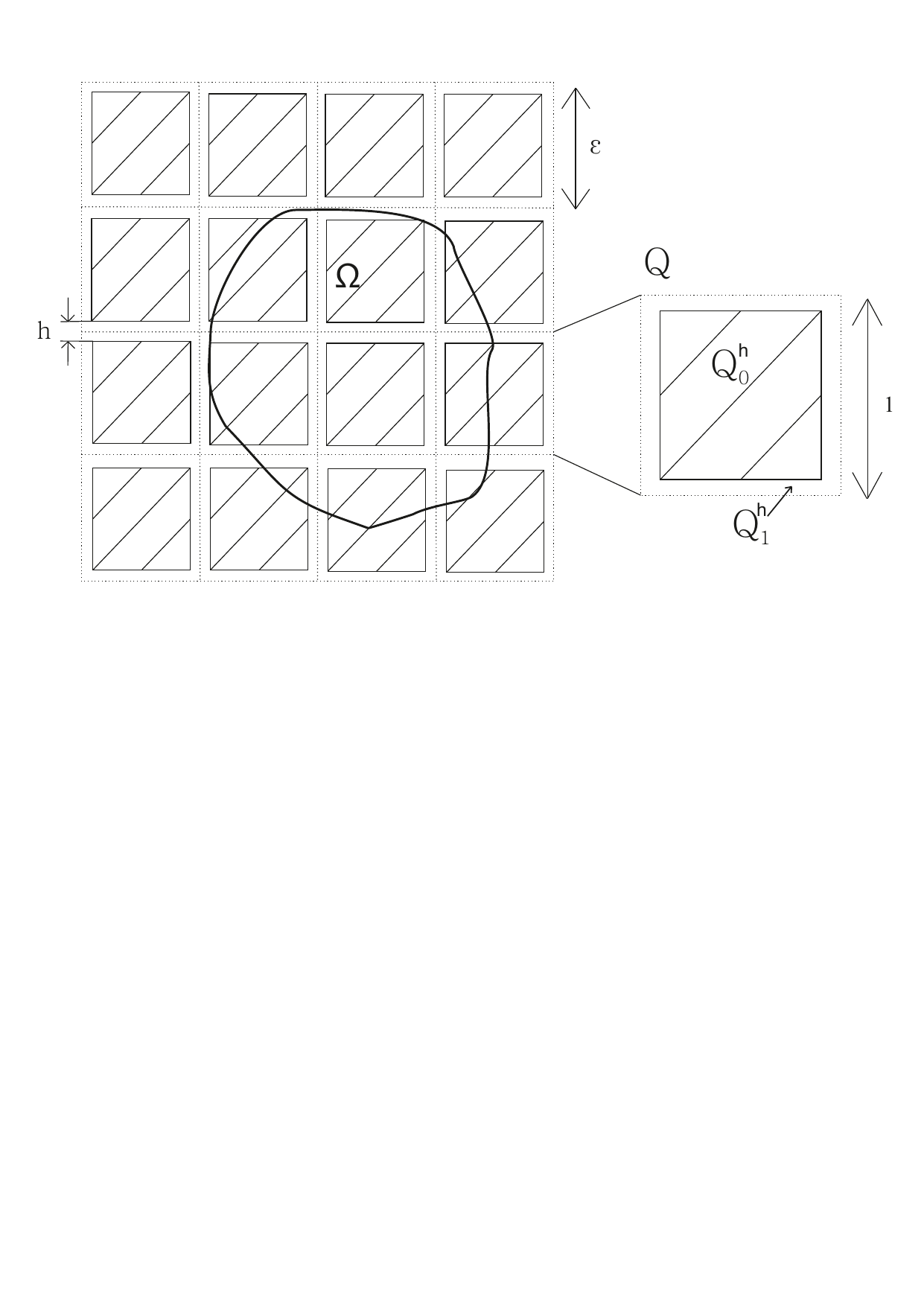}
						\caption {Periodic network with high contrast, where $Q_j^h:=F_j^h\cap Q,$ $j=0,1.$}
						\label {critthick2}
					\end {figure}

			\subsection {Homogenised system of equations}\label {sec231}
					\begin {Def} 
					\label{V_definition}
					We denote by $V$ the {\sl energy space} consisting of vector functions $\bsl{u}$ such that
						$$\bsl {u}\xy=\bsl {u}_0(\bsl {x})+\bsl {U}\xy,\ \ \ \bsl {u}_0\in\bigl[H_0^1(\Omega)\bigr]^2,\ \ \ 
						\bsl {U}\in\bigl[L^2\bigl(\Omega, H^1_{\rm per}(Q)\bigr)]^2,\ \ \ \ 
						$$
						$$
						\bsl {U}\xy=\boldsymbol{\chi}\xy,\ \ \ {\rm a.e.}\ \bsl{x}\in\Omega,\ \ \ \lambda{\text-}{\rm a.e.\ \ }\bsl{y}\in F_1\cap Q,\ \ \ \ \ \ \ \boldsymbol{\chi}\in L^2\bigl(\Omega,\widehat{\mathcal{R}}^0\bigr).$$
						We also denote by $\mathfrak{H}$ the closure of $V$ in $\bigl[L^2(\Omega\times Q, \dx\times\dmu)\bigr]^2.$
					\end {Def}
			For a given $\bsl{f}\in\bigl[L^2(\Omega\times Q, \dx\times\dmu)\bigr]^2$, we refer to $\bsl {u}\in V$ as the {\sl solution of the homogenised problem} if 
					\begin {multline}
					\label {hom21}
						\frac{1}{2}\int_\Omega A^\home_\lambda\bsl{e}(\bsl {u}_0):\bsl{e}(\boldsymbol{\varphi}_0)\,\dx+\frac {\theta^2}{6}\int_\Omega\int_QK_1\boldsymbol{\chi}''\cdot \bsl {\Phi}''\,\dl\dx + \frac {1}{2}\int_\Omega\int_QA_0{\bsl{e}}_{\bsl{y}}(\bsl {U}):{\bsl{e}}_{\bsl{y}}(\bsl {\Phi})\,\dy\dx
						\\
						+\int_\Omega\int_Q\big ( \bsl {u}_0+\bsl {U}\big )\cdot \bphi\,\dmu\dx = 
					\int_\Omega\int_Q \bsl {f}\cdot \bphi\,\dmu\dx\ \ \ \ \ \forall\bphi\xy=\bphi_0(\bsl {x})+\bsl{\Phi}\xy\in V,
					\end {multline}
				where ${\bsl{e}}_{\bsl{y}}(\bsl{u})$ ({\it cf.} (\ref{sym_grad})) denotes the symmetric gradient of $\bsl{u}=(\bsl{x}, \bsl{y})$ with respect to the variable $\bsl{y}$, the tensor $A^{\rm hom}_\lambda$ is given by (\ref{ahom1}), and $K_1$ is a function on the network $F_1\cap Q,$ defined on each link of the network by 
						$$
						K_1:=(A_1^{-1}\boldsymbol{\eta}\cdot\boldsymbol{\eta})^{-1}.\ \ \ \ \ \ 
						\boldsymbol{\eta}:=\boldsymbol{\tau}\otimes\boldsymbol{\tau},
						$$ 
						Here $\boldsymbol{\tau}$ is the tangent to the current link, and the prime denotes the tangential derivative, as in Definition \ref{rigidstuff}, {\it e.g.} $\boldsymbol{\chi}':=(\boldsymbol{\tau}\cdot\nabla(\boldsymbol{\chi}\cdot\boldsymbol\nu))\boldsymbol{\nu}.$

				The identity (\ref {hom21}) is equivalent to  a system of 
				partial differential equations,
				which is obtained by considering various classes of test functions in (\ref{hom21}). First, taking
				functions of the form $\bphi\xy=\bphi_0(\bsl {x})$ yields ({\it cf.} (\ref{homogenised}), where $\bsl{f}\in[L^2(\Omega)]^2$):
					\begin {equation}\label {macro1}
						-\frac{1}{2}\dive\bigl(A^\home_\lambda\bsl{e}(\bsl {u}_0)\bigr)+\bsl {u}_0+\langle\bsl {U}\rangle=\langle\bsl{f}\rangle,
					\end {equation}
					where the angle brackets denote microscopic averaging, {\it i.e.} $\langle\bsl{U}\rangle:=\int_Q\bsl{U}(\cdot, \bsl{y})\dmu(\bsl{y}).$ 
					
					Further, we set $\bphi\xy=\varphi(\bsl{x})\boldsymbol{\Psi}(\bsl{y}),$ with $\varphi\in C_0^\infty(\Omega),$ $\boldsymbol{\Psi}\in\widetilde{V},$ where the space $\widetilde{V}$ consists of functions in $\bigl[H^1_{\rm per}(Q)\bigr]^2$ whose trace on $F_1\cap Q$ coincides with a rigid-body motion $\lambda$-a.e.
					 Assume for simplicity that the tensor $A_0$ is isotropic, {\it i.e.} 
				 for all $\boldsymbol{\xi}\in{\rm Sym}_2$ one has
					\begin{equation}
					A_0\boldsymbol{\xi}=2M_0\boldsymbol{\xi}+L_0(\Tr\boldsymbol{\xi})I,\qquad M_0, L_0>0.
					\label{tensor_isotropic}
					\end{equation}
					 Taking first functions $\boldsymbol{\Psi}\in\bigl[C_0^\infty(F_0\cap Q)\bigr]^2,$ 
					 we obtain 
					 \begin {equation}
						-\frac{1}{2}M_0\bsl {\Delta}_{\bsl{y}}\bsl {U}-\frac{1}{2}(L_0+M_0)\del_{\bsl{y}}\dive_{\bsl{y}}\bsl {U}+\bsl {u}= P_{\mathfrak{H}}\bsl {f},\label{other1}
					\end {equation}
					\begin {equation}
						 \bsl {U}(\bsl {x},\cdot)\in\bigl[H^1_\per(Q)\bigr]^2,\ \ \bsl{x}\in\Omega,\ \ \ \ \ \ \ \bsl {U}\xy=\boldsymbol{\chi}\xy,\ \ \ \bsl{x}\in\Omega,\ \ \lambda{\text -}{\rm a.e.}\ \bsl{y}\in F_1,\ \ \ \ \ \ \boldsymbol{\chi}\in L^2\bigl(\Omega,\widehat{\mathcal R}^0\bigr),\label{other_cond}
					\end {equation}
					where $P_{\mathfrak{H}}$ is the orthogonal projection operator from $\bigl[L^2(\Omega\times Q, \dx\times\dmu)\bigr]^2$ onto $V.$ The equation (\ref{other1}) is a consequence of the fact that on smooth functions the following operator equation holds:
					\begin{equation}
\dive A_0{\bsl{e}}=M_0\bsl {\Delta}+(L_0+M_0)\del\,\dive,
\label{isotropic}
\end{equation}
 which is the result of a direct calculation from (\ref{tensor_isotropic}) and the definition (\ref{sym_grad}) of the symmetric gradient operator ${\bsl{e}}.$ 
				
				 Finally, taking arbitrary $\boldsymbol{\Psi}\in\widetilde{V}$ yields additional equations coupling the framework $F_1\cap Q$ and the inclusion 
				 component $F_0\cap Q.$ For example, on the links parallel to the $y_2$-axis we obtain
					\begin{align}
						\frac{\theta^2K_1}{6}\dpar_2^4\chi+\frac{1}{2}(L_0+2M_0)\dpar_1U_1+\bigl((u_0)_1+\chi\bigr)=(P_{\mathfrak{H}}\bsl{f})_1,  \label{wentzel1a}\\[0.5em]
						\frac{1}{2}(L_0+M_0)\dpar_2\chi+\frac{1}{2}M_0\dpar_1U_2+(u_0)_2=(P_{\mathfrak{H}}\bsl{f})_2, \label{wentzel1b}
					\end{align}
					where $(U_1, U_2)=\bsl{U},$ and $(\chi, 0)=\boldsymbol{\chi}.$
				Indeed, it is a straightforward consequence of (\ref{sym_grad}) and (\ref{tensor_isotropic}) that the boundary terms in the integration by parts in (\ref{hom21}) involve the expression
				\begin{equation}
				{\bsl{e}}_{\bsl{y}}(\bsl {U})_{ij}n_j\Psi_i=\bigl\{M_0(\partial_1U_1+\partial_1U_1)+L_0\partial_1U_1\bigr\}\Psi_1+\bigl\{M_0(\partial_2U_1+\partial_1U_2)+L_0\partial_2U_1\bigr\}\Psi_2,
				\label{expr_imp}
				\end{equation}
				where $n_j=1,$ $j=1,$ and $n_j=0,$ $j=2,3,$ are the components of the normal to the link. Setting first $\Psi_2=0,$ then $\Psi_1=0$ in (\ref{expr_imp}), we obtain the second term in (\ref{wentzel1a}), and the first and second term in (\ref{wentzel1b}), respectively. The equation (\ref{wentzel1a}), 
				viewed as boundary conditions on the function $\bsl{U}$ satisfying (\ref{other1})--(\ref{other_cond}), belongs to the class of so-called Ventcel' conditions \cite{Ventcel'}. Such conditions feature a coupling between the flux across the boundary (the term with $\partial_1U_1$) and diffusion along the boundary (the term with $\partial_2^4\chi$).
				
				For a general periodic framework $F_1,$ on each link there is a positively orientated pair of vectors $\boldsymbol{\tau},\ \boldsymbol{\nu}$ with $\boldsymbol{\tau}$ pointing along the link and $\boldsymbol{\nu}$ orthogonal to the link. The corresponding version of the equations 
				(\ref {wentzel1a})--(\ref{wentzel1b}) on each link of $F_1$ is as follows:
					\begin {equation*}
					\begin{aligned}
						\frac {\theta^2K_1}{6}\dpar_\tau^4\chi+\frac{1}{2}(L_0+2M_0)\dpar_\nu U^{(\nu)}+\bigl(u_0^{(\nu)}+\chi\bigr)=(P_{\mathfrak{H}}\bsl{f})^{(\nu)},\\[0.5em]
					\frac{1}{2}(L_0+M_0)\dpar_\tau\chi+\frac{1}{2}M_0\dpar_\nu U^{(\tau)}+u_0^{(\tau)}=(P_{\mathfrak{H}}\bsl{f})^{(\tau)},	
					\end{aligned}
					\end {equation*}
				where  $\dpar_\tau,$ $\dpar_\nu$ denote differentiation along the link and 
				in the direction normal to the link, $\chi:=\boldsymbol{\chi}\cdot\boldsymbol{\nu},$  and the superscripts $(\tau),$ $(\nu)$ are understood in the sense of the notation introduced at the end of Section \ref{sec21}.

			\subsection {Extension theorem}\label {sec232}
Before proving the main result, we recall the description of a class of functions that extend periodic rigid displacements in 
$\widehat{\mathcal R}^0$ on the framework $F_1$ to the rod network $F_1^h,$ introduced in \cite{bib32}.
					\begin {Def} Let $D$ denote the set of functions $\bsl {g}\in\widehat{\mathcal R}^0$ such that:
						\begin {enumerate}
							\item The function $\bsl {g}$ is infinitely smooth outside a neighbourhood of the nodes of the network $F_1;$
							\item In a neighbourhood $B_\delta({\mathcal O}):=\bigl\{\bsl{y}: \vert\bsl{y}-{\mathcal O}\vert<\delta\bigr\},$ $\delta>0,$ of each node ${\mathcal O}$ the function $\bsl {g}$ takes the form
								$\bsl {g}(\bsl {y})=C\bigl(\boldsymbol{\omega}(\bsl {y})-\boldsymbol{\omega}({\mathcal O})\bigr),$ $\bsl{y}\in F_1,$ where $C$ is a constant, $\boldsymbol\omega(\bsl {y}):=(-y_2,y_1)$.
						\end {enumerate}
					\end {Def}
					The following two statements are proved in \cite{bib32}.
					\begin {Prop}
						The set $D$ is dense in the space $\widehat{\mathcal R}^0$ with the respect to the norm of $\bigl[L^2_{\rm per}(Q,\dlambda)\bigr]^2.$
					\end {Prop}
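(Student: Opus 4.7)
Given $\boldsymbol{\chi}\in\widehat{\mathcal R}^0$, I would construct, for each small $\delta>0$, an element $\bsl{g}_\delta\in D$ that (i) coincides with the infinitesimal rigid rotation $C_{\mathcal O}\bigl(\boldsymbol{\omega}(\bsl{y})-\boldsymbol{\omega}(\mathcal{O})\bigr)$ in the $\delta$-neighbourhood of each node $\mathcal{O}$ of $F_1\cap Q$, (ii) equals a smooth regularisation of $\boldsymbol{\chi}$ outside the $2\delta$-neighbourhoods of the nodes, and (iii) is a smooth interpolation between these two prescriptions on the transition strips. The $L^2(Q,\mathrm{d}\lambda)$-convergence $\bsl{g}_\delta\to\boldsymbol{\chi}$ as $\delta\to 0$ will follow from estimates on those transition strips, whose one-dimensional measure is of order $\delta$.

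The key observation is that condition (C2) assigns to each node $\mathcal{O}$ a common value $C_{\mathcal O}$ of the tangential derivative of $\boldsymbol{\chi}\cdot\boldsymbol{\nu}$ across all incident links (with a fixed orientation convention), and that this is precisely the coefficient appearing in the infinitesimal rotation: on each link $I_j$ emanating from $\mathcal{O}$, parameterised by arc-length $s\in[0,L_j]$ with $s=0$ at $\mathcal{O}$, the $\boldsymbol{\nu}_j$-component of $C_{\mathcal O}\bigl(\boldsymbol{\omega}(\bsl{y})-\boldsymbol{\omega}(\mathcal{O})\bigr)$ equals $\pm C_{\mathcal O}s$ and hence shares with $\boldsymbol{\chi}\cdot\boldsymbol{\nu}_j$ both the value $0$ at $s=0$ (by (C3)) and the one-sided first derivative at $s=0$ (by (C2)). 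Picking a cutoff $\eta_\delta\in C^\infty_c([0,\infty))$ with $\eta_\delta\equiv 1$ on $[0,\delta]$ and $\eta_\delta\equiv 0$ on $[2\delta,\infty)$, I would set on each link
\[
\psi_j^\delta(s):=\eta_\delta(s)\,(\pm C_{\mathcal O}\,s)+\bigl(1-\eta_\delta(s)\bigr)\,\widetilde{\phi}_j(s),
\]
where $\widetilde{\phi}_j$ is a smooth mollification of $\boldsymbol{\chi}\cdot\boldsymbol{\nu}_j$ on an interior compact subset of $I_j$ that coincides with $\boldsymbol{\chi}\cdot\boldsymbol{\nu}_j$ on both end-strips $[0,2\delta]\cup[L_j-2\delta,L_j]$ (via an auxiliary cutoff supported away from the nodes). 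Performing the analogous modification near the other endpoint and putting $\bsl{g}_\delta:=\psi_j^\delta\,\boldsymbol{\nu}_j$ on each link produces a function whose values and first derivatives at the nodes are unchanged, whose $H^2$-regularity is preserved, and whose form on the $\delta$-neighbourhood of each node matches the required rigid rotation; hence $\bsl{g}_\delta\in D$.

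The $L^2$-convergence then reduces, on each link, to the elementary bound
\[
\int_0^{2\delta}\bigl|\psi_j^\delta-\boldsymbol{\chi}\cdot\boldsymbol{\nu}_j\bigr|^2\,\mathrm{d}s\le 2\delta\sup_{s\in[0,2\delta]}\bigl|C_{\mathcal O}s-\boldsymbol{\chi}\cdot\boldsymbol{\nu}_j(s)\bigr|^2+2\int_0^{2\delta}\bigl|\widetilde{\phi}_j-\boldsymbol{\chi}\cdot\boldsymbol{\nu}_j\bigr|^2\,\mathrm{d}s,
\]
together with the analogous estimate near the opposite endpoint and the mollifier-convergence estimate on the interior. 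Both contributions vanish as $\delta\to 0$: the Sobolev embedding $H^2(I_j)\hookrightarrow C^1(\overline{I_j})$ gives $|C_{\mathcal O}s-\boldsymbol{\chi}\cdot\boldsymbol{\nu}_j(s)|=o(s)$ near $s=0$, so the first term is $o(\delta)$, and standard mollifier convergence in $L^2(I_j)$ handles the interior. Summing over the finitely many links of $F_1\cap Q$ yields $\bsl{g}_\delta\to\boldsymbol{\chi}$ in $[L^2_{\rm per}(Q,\mathrm{d}\lambda)]^2$. The main technical obstacle is the patching: a naive mollification across the whole link would not preserve the nodal conditions (C2), (C3), and a bare convex combination with the linear profile would not be smooth on the interior. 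The remedy is to mollify only on an interior compact subset via the auxiliary cutoff above, so that both the boundary values and the first derivatives at the nodes remain exactly those of $\boldsymbol{\chi}\cdot\boldsymbol{\nu}_j$, and then to combine with the pure linear piece through the cutoff $\eta_\delta$.
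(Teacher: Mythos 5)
The paper does not give its own proof of this density statement — it is credited to \cite{bib32} — so your argument can only be judged on its own merits. Your overall plan is the right one: match $\boldsymbol{\chi}$ with the infinitesimal rotation $C_{\mathcal O}\bigl(\boldsymbol{\omega}(\cdot)-\boldsymbol{\omega}(\mathcal{O})\bigr)$ near each node, regularise on the interior of each link, interpolate on strips of width $O(\delta)$, and send $\delta\to0$; and the key observation that conditions (C2)--(C3) force the normal component $\boldsymbol{\chi}\cdot\boldsymbol{\nu}_j$ to agree with the linear profile $C_{\mathcal O}s$ to order $o(s)$ near each node is exactly what makes the patching converge in $L^2({\rm d}\lambda)$.

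However, your specific recipe for $\widetilde{\phi}_j$ breaks the membership of $\bsl{g}_\delta$ in $D$. You insist that $\widetilde{\phi}_j$ coincide with $\boldsymbol{\chi}\cdot\boldsymbol{\nu}_j$ on the end-strips $[0,2\delta]\cup[L_j-2\delta,L_j]$ so as to preserve the nodal values and first derivatives. But $\boldsymbol{\chi}\cdot\boldsymbol{\nu}_j$ is only $H^2(I_j)$ (hence merely $C^1$), so on the transition strip $s\in[\delta,2\delta]$ your $\psi_j^\delta=\eta_\delta(s) C_{\mathcal O}s+(1-\eta_\delta(s))\bigl(\boldsymbol{\chi}\cdot\boldsymbol{\nu}_j\bigr)(s)$ is not $C^\infty$; the same defect appears on the strip where your auxiliary cutoff transitions $\widetilde{\phi}_j$ from $\boldsymbol{\chi}\cdot\boldsymbol{\nu}_j$ to its mollification. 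Condition 1 in the definition of $D$, and more importantly the way $D$ is used in the proof of Theorem \ref{homogenisation_theorem} — where each $\bsl{g}\in D$ must be the restriction to $F_1\cap Q$ of a genuinely $C^\infty$ function $\bsl{G}\in\bigl[C^\infty_{\rm per}(Q)\bigr]^2$ — require $\bsl{g}_\delta$ to be $C^\infty$ on the links away from a small neighbourhood of each node; so as written $\bsl{g}_\delta\notin D$.

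The remedy is in fact simpler than what you propose, and removes the motivation for the elaborate patching. Because $\psi_j^\delta$ is forced to equal the rigid-rotation profile $\pm C_{\mathcal O}s$ on all of $[0,\delta]$, the nodal conditions (C2)--(C3) for $\bsl{g}_\delta$ are satisfied automatically, whatever $\widetilde{\phi}_j$ does near the node; there is nothing to "preserve" from $\boldsymbol{\chi}\cdot\boldsymbol{\nu}_j$ there. So let $\widetilde{\phi}_j=\phi_j^\sigma$ be a $C^\infty$ mollification, at a scale $\sigma=\sigma(\delta)\to0$, of a continuous extension of $\boldsymbol{\chi}\cdot\boldsymbol{\nu}_j$ across the whole of $I_j$ (reflecting or extending across the endpoints to mollify up to the boundary), and then interpolate exactly as you wrote. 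The resulting $\psi_j^\delta$ is $C^\infty$ on all of $I_j$ and coincides with the rigid rotation on a neighbourhood of each node, so $\bsl{g}_\delta\in D$; and your $L^2({\rm d}\lambda)$ estimate on $[0,2\delta]$ goes through after replacing $\boldsymbol{\chi}\cdot\boldsymbol{\nu}_j$ by $\phi_j^\sigma$ there, the extra term being controlled by $\sup_{I_j}\bigl\vert\phi_j^\sigma-\boldsymbol{\chi}\cdot\boldsymbol{\nu}_j\bigr\vert\to0$, which is supplied by the $C^1$ uniform convergence of mollifiers given the embedding $H^2(I_j)\hookrightarrow C^1(\overline{I_j})$.
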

					\begin {Prop}\label {lemma12}
						For each $\bsl {g}\in D$, there exists a smooth extension $\bsl {g}^h=\bsl {g}^h(\bsl {y})$ to the network $F_1^h$ with the following properties: 
						\begin {enumerate}
							\item For each node ${\mathcal O}_k$ of $F_1\cap Q,$ the symmetric gradient ${\bsl{e}}_{\bsl{y}}(\bsl {g}^h)$ is zero in 
							$B_{\delta_k}({\mathcal O}_k)$ for some $\delta_k>0.$ 
							\item For each $h>0$ and for each link $I$ of $F_1\cap Q$ we set
								\begin{equation}
								\boldsymbol{\sigma}^h(\bsl{y}):=
								\bigl(h^{-1}\boldsymbol{\nu}\cdot({\mathcal O}-\bsl {y})\bigr)(\boldsymbol{\tau}\otimes\boldsymbol{\tau}),\ \ \ \ \ \bsl{y}\in I^h
								\setminus\bigl(\cup_{k}B_{\delta_k}({\mathcal O}_k)\bigr),
							\label{sigma_h}
							\end{equation}
							where $\boldsymbol{\tau},$ $\boldsymbol{\nu}$ are the unit tangent and normal to the link $I,$ $I^h$ is the $h$-neighbourhood of $I,$ ${\mathcal O}$ is either of the two end-points of $I,$
							and the union is taken over all nodes of $F_1\cap Q.$  Consider also the ``network-to-rod extension'' 
								$\bigl[(\bsl {g}\cdot\boldsymbol{\nu})''K_1\bigr]^h$ of the function 
								$(\bsl {g}\cdot\boldsymbol{\nu})''K_1,$ as in the second formula in (\ref{psi1}).
							
							Then the asymptotic formula
								\begin {equation}\label {dontcare2}
									A_1{\bsl{e}}_{\bsl{y}}(\bsl {g}^h)=h\bigl[(\bsl {g}\cdot\boldsymbol{\nu})''K_1\bigr]^h\boldsymbol{\sigma}^h+O(h^2),\ \ \ h\to0,
								\end {equation}
								holds on $(F_1^h\cap Q)\setminus\bigl(\cup_{k}B_{\delta_k}({\mathcal O}_k)\bigr).$
							
						\end {enumerate}
					\end {Prop}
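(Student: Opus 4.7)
The plan is to construct $\bsl{g}^h$ piecewise: as a rigid rotation in a small neighbourhood of each node, and as a Kirchhoff--Love-type ansatz with a cross-sectional Poisson correction in the interior of each rod. Set up local orthogonal coordinates $(s,n)$ on each link $I$, where $s$ is the arc-length and $n$ is the signed transverse coordinate, so that the rod $I^h$ corresponds to $|n|<h/2$. On the interior of each link, define
\[
\bsl{g}^h(\bsl{y}) := (\bsl{g}\cdot\boldsymbol{\nu})(s)\,\boldsymbol{\nu} - n\,(\bsl{g}\cdot\boldsymbol{\nu})'(s)\,\boldsymbol{\tau} + \bsl{r}^h(s,n),
\]
where $\bsl{r}^h$ is a correction of order $h^2$ to be specified. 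Near each node $\mathcal{O}_k$ define instead
\[
\bsl{g}^h(\bsl{y}) := C_k\bigl(\boldsymbol{\omega}(\bsl{y})-\boldsymbol{\omega}(\mathcal{O}_k)\bigr),\qquad \bsl{y}\in B_{\delta_k}(\mathcal{O}_k)\cap F_1^h,
\]
where $C_k$ is the common value of $(\bsl{g}\cdot\boldsymbol{\nu}_j)'|_{\mathcal{O}_k}$ across all links meeting $\mathcal{O}_k$, which exists by condition (C2). Since this formula is a rigid-body motion, its symmetric gradient vanishes identically, establishing property 1.

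The two constructions agree in the overlap region: by the second item in the definition of $D$, on each link near $\mathcal{O}_k$ one has $(\bsl{g}\cdot\boldsymbol{\nu})(s)=C_k s$, so $(\bsl{g}\cdot\boldsymbol{\nu})''=0$ there, and a direct computation in the $(\boldsymbol{\tau},\boldsymbol{\nu})$-frame centred at $\mathcal{O}_k$ yields $\boldsymbol{\omega}(\bsl{y})-\boldsymbol{\omega}(\mathcal{O}_k)= -n\boldsymbol{\tau} + s\boldsymbol{\nu}$, so the Kirchhoff--Love part of the interior ansatz coincides pointwise with the rigid-rotation formula. The correction $\bsl{r}^h$ will be chosen proportional to $(\bsl{g}\cdot\boldsymbol{\nu})''$, hence vanishes in the node neighbourhood, and the two pieces glue smoothly; condition (C3) ensures that $\bsl{g}$ and its rigid-rotation extension both vanish at each node, so no compatibility adjustments are required there.

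On the interior of each link, computation in the local $(\boldsymbol{\tau},\boldsymbol{\nu})$-basis yields $\bsl{e}\bigl((\bsl{g}\cdot\boldsymbol{\nu})\boldsymbol{\nu} - n(\bsl{g}\cdot\boldsymbol{\nu})'\boldsymbol{\tau}\bigr) = -n(\bsl{g}\cdot\boldsymbol{\nu})''\,\boldsymbol{\eta}$, whose image under $A_1$ is a multiple of $A_1\boldsymbol{\eta}$, generally not aligned with $\boldsymbol{\eta}$; this is where the Poisson correction enters. We determine $\bsl{r}^h$ by a cross-sectional problem at each $s$: find a symmetric matrix $\boldsymbol{\zeta}$ such that $A_1(\boldsymbol{\eta}+\boldsymbol{\zeta})$ is a scalar multiple of $\boldsymbol{\eta}$; this is uniquely solved by $\boldsymbol{\eta}+\boldsymbol{\zeta} = K_1 A_1^{-1}\boldsymbol{\eta}$ by the definition of $K_1$, and integrating the transverse part of this strain in $n$ produces a displacement of order $h^2$ proportional to $(\bsl{g}\cdot\boldsymbol{\nu})''$. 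After this adjustment,
\[
A_1\bsl{e}(\bsl{g}^h) = -n\,(\bsl{g}\cdot\boldsymbol{\nu})''(s)\,K_1\,\boldsymbol{\eta} + O(h^2) = h\,\bigl[(\bsl{g}\cdot\boldsymbol{\nu})''K_1\bigr]^h\,\boldsymbol{\sigma}^h + O(h^2),
\]
which is precisely (\ref{dontcare2}). The main obstacle is to carry out the Poisson correction so that the prescribed asymptotic stress pattern and the vanishing in the node neighbourhoods are achieved simultaneously, with errors uniformly $O(h^2)$; once the cross-sectional normal-mode problem is solved, the remaining estimates reduce to direct computation using the smoothness of $\bsl{g}\cdot\boldsymbol{\nu}$ afforded by (C1).
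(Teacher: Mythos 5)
The paper does not give its own proof of Proposition~\ref{lemma12}: it explicitly defers to \cite{bib32} (Zhikov--Pastukhova, 2003), so there is no internal proof to compare against. Your construction -- rigid rotation in a node neighbourhood where $\bsl{g}\in D$ is already rotational, Kirchhoff--Love ansatz $w(s)\boldsymbol{\nu}-nw'(s)\boldsymbol{\tau}$ with $w=\bsl{g}\cdot\boldsymbol{\nu}$ in the interior, and an $O(h^2)$ cross-sectional Poisson correction chosen so that the total strain is $-nw''(s)\,K_1A_1^{-1}\boldsymbol{\eta}$ -- is the standard Euler--Bernoulli argument and is essentially the one in \cite{bib32}. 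Your computations check out: $\bsl{e}$ of the Kirchhoff--Love part is $-nw''\boldsymbol{\eta}$; adding the correction makes $A_1\bsl{e}(\bsl{g}^h)=-nw''K_1\boldsymbol{\eta}+O(h^2)=h[(\bsl{g}\cdot\boldsymbol{\nu})''K_1]^h\boldsymbol{\sigma}^h+O(h^2)$ since $\boldsymbol{\nu}\cdot(\mathcal{O}-\bsl{y})=-n$; and the two pieces coincide on the overlap because $w(s)=C_ks$ there by the definition of $D$, while (C2) makes the node rotation coefficient $C_k$ well defined across all incident links and (C3) kills the translational part.

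One small point worth making explicit: the equation ``$A_1(\boldsymbol{\eta}+\boldsymbol{\zeta})$ is a scalar multiple of $\boldsymbol{\eta}$'' does not by itself determine $\boldsymbol{\zeta}$ (any rescaling of $\boldsymbol{\eta}+\boldsymbol{\zeta}=cA_1^{-1}\boldsymbol{\eta}$ works). Uniqueness, and the identification of the scalar as $K_1$, comes from the normalisation $\boldsymbol{\zeta}\cdot\boldsymbol{\eta}=0$, i.e.\ the requirement that the Poisson correction not alter the longitudinal component supplied by the Kirchhoff--Love term; you use this implicitly when you keep $\epsilon_{ss}=-nw''$ fixed, but it should be stated. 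With that caveat the argument is sound, and the $O(h^2)$ estimate is justified by the fact that the correction and all of its residual strain terms carry factors $n^2$ or $n\,\partial_s$ applied to smooth quantities, uniformly on the rod interior.
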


			\subsection {Convergence of solutions}\label {sec233}
					\begin {Thrm}
					\label{homogenisation_theorem}
						For all $\varepsilon,$ $h,$ let $\ueph$ solve the integral identity (\ref {prob1}) with right-hand side $\bsl {f}=\bsl {f}^h_\varepsilon,$ 
						and suppose that 
						$h/\varepsilon\to\theta>0$ as $\ep\to0.$ 
														If $\bsl {f}^h_\varepsilon\wtwo\bsl {f}$ then $\ueph\wtwo\bsl {u},$ and
							 $\bsl{u}$ satisfies
							 (\ref {hom21}).
							If $\bsl {f}^h_\varepsilon\stwo\bsl {f}$ then $\ueph\stwo\bsl {u}$
							and, in addition, there is convergence of the corresponding elastic energies.
					\end {Thrm}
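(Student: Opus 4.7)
The plan is to combine the a priori bounds of Section \ref{sec221} with the tools of Section \ref{sec22} to extract two-scale limits and identify their structure, construct a carefully oscillating test function capturing all three terms of the homogenised bilinear form, pass to the limit in (\ref{prob1}), and finally use an energy identity to upgrade weak to strong two-scale convergence. By Proposition \ref{compact1} and the a priori bounds, along a subsequence, $\ueph\wtwo\bsl{u}$, $\chi_1^\eph\bsl{e}(\ueph)\wtwo\bsl{p}$ and $\ep\chi_0^\eph\bsl{e}(\ueph)\wtwo\widetilde{\bsl{p}}$. Theorem \ref{trace_theorem} identifies the framework trace, Theorem \ref{u0_chi_theorem} gives $\bsl{u}(\bsl{x}, \bsl{y})=\bsl{u}_0(\bsl{x})+\boldsymbol{\chi}(\bsl{x}, \bsl{y})$ $\lambda$-a.e. on $F_1,$ with $\bsl{u}_0\in[H_0^1(\Omega)]^2,$ the last theorem of Section \ref{sec222} upgrades $\boldsymbol{\chi}$ to $L^2(\Omega,\widehat{\mathcal R}^0)$, and Theorem \ref{soft_conv_theorem} yields $\bsl{u}(\bsl{x}, \cdot)\in[H^1(Q)]^2$ on $F_0$ with $\widetilde{\bsl{p}}=\bsl{e}_{\bsl{y}}(\bsl{u})$. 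Setting $\bsl{U}:=\bsl{u}-\bsl{u}_0$ places $\bsl{u}\in V$.

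For the passage to the limit, given any $\bphi=\bphi_0+\bsl{\Phi}\in V$ with $\bphi_0\in[C_0^\infty(\Omega)]^2$ and framework trace $\boldsymbol{\chi}_\Phi$ in the dense subset $D$ of $\widehat{\mathcal R}^0$, I would build the admissible test function
\begin{equation*}
\bphi_\ep^h(\bsl{x}) := \bphi_0(\bsl{x}) + \ep\,\bsl{w}(\bsl{x}, \bsl{x}/\ep) + \chi_0^h(\bsl{x}/\ep)\bigl(\bsl{\Phi}(\bsl{x}, \bsl{x}/\ep)-\bphi_0(\bsl{x})\bigr) + \chi_1^h(\bsl{x}/\ep)\boldsymbol{\chi}_\Phi^h(\bsl{x}/\ep),
\end{equation*}
where $\bsl{w}(\bsl{x}, \cdot)\in V_\pot^\lambda$ attains the minimum in (\ref{ahom1}) for $\bsl{e}(\bphi_0(\bsl{x}))$, $\boldsymbol{\chi}_\Phi^h$ is the smooth extension of Proposition \ref{lemma12}, and a small boundary-layer correction ensures $H^1_0$-regularity and continuity across the $\chi_0^h/\chi_1^h$ interface. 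On $\whep_0$, the $\ep^2$ prefactor cancels one power of $\ep^{-1}$ arising from $\bsl{e}_{\bsl{x}}=\ep^{-1}\bsl{e}_{\bsl{y}}$, and Theorem \ref{soft_conv_theorem} combined with Proposition \ref{a_factor} yields the $(1/2)\int_\Omega\int_Q A_0\bsl{e}_{\bsl{y}}(\bsl{U})\cdot\bsl{e}_{\bsl{y}}(\bsl{\Phi})\,\dy\,\dx$ term. On $\whep_1$, pairing $\chi_1^\eph\bsl{e}(\ueph)\wtwo\bsl{e}(\bsl{u}_0)+\bsl{v}$ (second part of Theorem \ref{u0_chi_theorem}) with $\bsl{e}(\bphi_0)+\bsl{e}_{\bsl{y}}(\bsl{w})$ produces $A^\home_\lambda\bsl{e}(\bsl{u}_0)\cdot\bsl{e}(\bphi_0)$ via the standard $V_\pot^\lambda$-orthogonality and the minimisation in (\ref{ahom1}), while the pairing with $\bsl{e}(\boldsymbol{\chi}_\Phi^h(\cdot/\ep))=O(\theta)$ is analysed via (\ref{dontcare2}) and integration across the rod cross-section to deliver $(\theta^2/6)\int_\Omega\int_Q K_1\boldsymbol{\chi}''\cdot\bsl{\Phi}''\,\dl\,\dx$. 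The source term converges by weak two-scale convergence of $\bsl{f}_\ep^h$ and the identity term by Proposition \ref{basicprop}. Density of $D$ in $\widehat{\mathcal R}^0$ extends (\ref{hom21}) to all of $V$, and coercivity of the homogenised bilinear form on $V$ yields uniqueness, so the whole family converges.

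For the strong-convergence part, setting $\bphi=\ueph$ in (\ref{prob1}) and combining strong two-scale convergence of $\bsl{f}_\ep^h$ with weak two-scale convergence of $\ueph$ gives $\lim\mathfrak{B}_\ep^h(\ueph,\ueph)=\int_\Omega\int_Q\bsl{f}\cdot\bsl{u}\,\dmu\,\dx$, and the same integral is obtained by setting $\bphi=\bsl{u}$ in (\ref{hom21}). The lower semicontinuity (with respect to weak two-scale convergence) of each of the four nonnegative terms entering $\mathfrak{B}_\ep^h$ then forces convergence of each term separately; in particular $\int_\Omega|\ueph|^2\dmu_\ep^h\to\int_\Omega\int_Q|\bsl{u}|^2\dmu\,\dx$, whence Proposition \ref{strong_conv} upgrades to $\ueph\stwo\bsl{u}$, with convergence of the elastic energies as a by-product. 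The principal technical obstacle is the stiff-component integral $\int_{\whep_1}A_1\bsl{e}(\ueph)\cdot\bsl{e}(\boldsymbol{\chi}_\Phi^h(\cdot/\ep))\,\dmu_\ep^h$: the extension $\boldsymbol{\chi}_\Phi^h$ has strain of order $\theta$ by (\ref{dontcare2}), but its stress has a specific antisymmetric structure across the rod, proportional to $(\boldsymbol{\nu}\cdot(\mathcal O-\bsl{y}))(\boldsymbol{\tau}\otimes\boldsymbol{\tau})$, so extracting the $\theta^2/6$ factor requires matching with the bending part of $\chi_1^\eph\bsl{e}(\ueph)$, controlling the $O(h^2)$ remainder in (\ref{dontcare2}) uniformly, and treating neighbourhoods of nodes separately where that asymptotic fails. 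A secondary delicate point is verifying that the cross-term between the corrector $\bsl{w}$ and $\boldsymbol{\chi}_\Phi^h$ vanishes in the limit, which will follow from the solenoidality of the limit stress in Theorem \ref{u0_chi_theorem} and the $\lambda$-potential character of $\bsl{w}$.
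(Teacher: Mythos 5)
Your treatment of the weak-convergence part is in the right spirit but differs from the paper in two respects. First, the paper does not build a single ``big'' test function; it uses two simpler families -- constant-in-$\bsl{y}$ functions $\bphi_0(\bsl{x})$ for the macroscopic identity (\ref{intident02}), and oscillating functions $w\,\bsl{G}^h(\cdot/\varepsilon)$ with $\bsl{G}^h=\bsl{g}^h$ on $F_1^h$ for the microscopic identity (\ref{hom698}) -- and adds the two results. In particular the corrector $\ep\bsl{w}$ you insert is superfluous: its contribution vanishes by exactly the solenoidality/potentiality orthogonality you note at the end, and the term $\int A^\home_\lambda\bsl{e}(\bsl{u}_0)\cdot\bsl{e}(\bphi_0)$ already drops out of the pairing of $\chi_1^{h,\varepsilon}A_1\bsl{e}(\ueph)$ with $\bsl{e}(\bphi_0)$ via the Remark after Theorem \ref{Zh_ref}. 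Second, the interface matching between $\chi_0^h(\cdot/\ep)(\bsl{\Phi}-\bphi_0)$ and $\chi_1^h(\cdot/\ep)\boldsymbol{\chi}_\Phi^h(\cdot/\ep)$ is precisely what the paper's explicit mollified gluing $\bsl{G}^h=\bsl{G}\chi_h+\bar{\bsl{g}}^h(1-\chi_h)$ and the accompanying lemma ($\|\bsl{G}^h-\bsl{G}\|_{H^1(Q)}\to0$) make rigorous; ``a small boundary-layer correction'' is a genuine construction, not a footnote, and you leave it unspecified. You correctly identify the identification of $\lim I_1(\ep)$ (the bending term) as the principal obstacle, but the paper resolves it via the dedicated proposition (\ref{sig1}) drawn from \cite[Lemma 3.5]{bib32}; flagging it is not proving it.

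The strong-convergence part has a genuine gap. You argue: $\mathfrak{B}_\ep^h(\ueph,\ueph)\to\int\int\bsl{f}\cdot\bsl{u}$, this equals the homogenised energy, and each of the three nonnegative terms is weakly two-scale lower semicontinuous, so each converges separately. The problem is that the lower bound
\[
\liminf_{\ep\to0}\int_{\whep_1}A_1\bsl{e}(\ueph)\cdot\bsl{e}(\ueph)\,\dmu_\ep^h\ \geq\ \frac12\int_\Omega\int_QA_1\bigl(\bsl{e}(\bsl{u}_0)+\bsl{v}\bigr)\cdot\bigl(\bsl{e}(\bsl{u}_0)+\bsl{v}\bigr)\,\dlambda\,\dx,
\]
obtained from weak two-scale convergence of $\chi_1^{h,\varepsilon}\bsl{e}(\ueph)$, captures only the macroscopic energy $\int A^\home_\lambda\bsl{e}(\bsl{u}_0)\cdot\bsl{e}(\bsl{u}_0)$ and misses the bending term $(\theta^2/6)\int\int K_1\boldsymbol{\chi}''\cdot\boldsymbol{\chi}''$ entirely -- that contribution is exactly what the failure of strong two-scale compactness of $\chi_1^{h,\varepsilon}\bsl{e}(\ueph)$ in the critical regime produces. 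Consequently, whenever $\boldsymbol{\chi}\not\equiv0$, the sum of your three $\liminf$ lower bounds is strictly less than the limit of the total energy, so the ``equality of sums forces termwise convergence'' step fails, and you cannot conclude $\int_\Omega|\ueph|^2\dmu_\ep^h\to\int_\Omega\int_Q|\bsl{u}|^2\dmu\,\dx$. The paper avoids this entirely by a Green--Betti duality: it introduces the dual problem (\ref{prob934}) with an arbitrary right-hand side $\bsl{g}^h_\varepsilon\wtwo\bsl{g}$, tests the primal and dual identities against each other's solutions, uses symmetry of $\mathfrak{B}^h_\ep$ and of the homogenised form to obtain (\ref{GreenBetti}), i.e. $\lim\int_\Omega\ueph\cdot\bsl{g}^h_\ep\,\dmu_\ep^h=\int_\Omega\int_Q\bsl{u}\cdot\bsl{g}\,\dmu\,\dx$ for every such $\bsl{g}^h_\ep$ -- which is the definition of $\ueph\stwo\bsl{u}$. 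The energy convergence is then a consequence, obtained by setting $\boldsymbol{\varphi}=\ueph$ in (\ref{prob1}), rather than an ingredient.
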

					\begin {proof}
						 						Setting $\bphi=\bphi_0(\bsl {x})$ in the identity (\ref{prob1}) and using 
						Theorems \ref{u0_chi_theorem}, \ref{soft_conv_theorem}, we obtain
							\begin{equation}
							\label {intident02}
								\frac{1}{2}\int_\Omega A^\home_\lambda\bsl{e}(\bsl {u}_0):\bsl{e}(\bphi_0)\,\dx+ 
								\int_\Omega\int_Q \bsl {u}\cdot \bphi_0\,\dmu\dx =\int_\Omega\int_Q \bsl {f}\cdot \bphi_0\,\dmu\dx.
							\end{equation}
						
						Suppose that $\bsl{G}\in\bigl[C^\infty_{\rm per}(Q)\bigr]^2,$ $\bsl{g}\in D$
						are such that 
						$\bsl{G}(\bsl{y})=\bsl{g}(\bsl{y})$ for all
						$\bsl{y}\in F_1\cap Q.$ 
						We approximate the function $\bsl{G}$ by a sequence 
						$\bsl{G}^h\in\bigl[C^\infty_{\rm per}(Q)\bigr]^2$ 
						such that 
						$\bsl{G}^h=\bsl{g}^h$ on $F_1^h\cap Q,$  
						where $\bsl{g}^h$ is the extension described in 
					        Proposition \ref{lemma12}. This is achieved, {\it e.g.}, by setting 
						$\bsl{G}^h=\bsl{G}\chi_h+\bar{\bsl{g}}^h(1-\chi_h),$ where 
						\[
						\bar{\bsl{g}}^h(\bsl{y}):=\left\{\begin{array}{ll}\bsl{0},\ \ \ \ \ \ \ \ \ \ \bsl{y}\in F_0^{2h}\cap Q,
						\\[0.5em]
						\bsl{g}^{h}(\bsl{y}),\ \ \ \ \bsl{y}\in\ F_1^{2h}\cap Q,\end{array}\right.			
						\]
					       and $\chi_h$ is the convolution of 
							the characteristic function of the set $F_0^{3h/2}$
						 with a function $\upsilon(\cdot/h)$ such that $\upsilon\in C^\infty_0(\mathbb R^2),$ ${\rm supp}(\upsilon)\subset\{\bsl{z}\in{\mathbb R}^2: \vert\bsl{z}\vert\le1/4\}.$ 
						 \begin{Lem}
						 For the sequence $\bsl{G}^h$ constructed above, one has 
						 $\Vert\bsl{G}^h-\bsl{G}\Vert_{[H^1(Q)]^2}\to0$ as $h\to0.$
						 \end{Lem}
						 \begin{proof}
				Note first that since $\bsl{G},$ $\bsl{G}^h,$ $\bsl{g}^h$ 
				are smooth and therefore their $L^2$-norms on $F_1^h$ and 
				$F_1^{2h}\setminus F_1^h$ are of order $O(h)$ as $h\to0,$ and in view of 
				the fact that $\bsl{G}^h=\bsl{G}$ on $F_0^{2h},$ one has 
				$\Vert\bsl{G}^h-\bsl{G}\Vert_{[L^2(Q)]^2}\to0$ as $h\to0.$
				
				Further, since $\bsl{G}^h-\bsl{G}=(\bar{\bsl{g}}^h-\bsl{G})(1-\chi_h)$ and  by the same argument as above one has 
				\[
				\bigl\Vert\bsl{e}(\bar{\bsl{g}}^h-\bsl{G})(1-\chi_h)\bigr\Vert_{[L^2(Q)]^3}\to0\ \ \ \ {\rm as}\ h\to0,
				\] 
				in order to estimate $L^2$-norm of $\bsl{e}(\bsl{G}^h-\bsl{G})$ it is sufficient to consider
				 $\bigl\Vert(\bar{\bsl{g}}^h-\bsl{G})\otimes\nabla\chi_h\bigr\Vert_{[L^2((F_1^{2h}\setminus F_1^h)\cap Q)]^3}.$ To this end, notice that $\nabla\chi_h=O(h^{-1}),$ and since $\bsl{G}=\bsl{g}^{h}$ on $F_1\cap Q$ one has 
				 $$
				 \bsl{G}(\bsl{y})=\bsl{g}^{h}(\bsl{y})+O(h),\ \ \ h\to0,\ \ \ \ \ \ \ \ \bsl{y}\in(F_1^{2h}\setminus F_1^h)\cap Q,$$ 
				 uniformly in $\bsl{y}.$
				It follows that \[
				\bigl\Vert(\bar{\bsl{g}}^h-\bsl{G})\otimes\nabla\chi_h\bigr\Vert_{[L^2((F_1^{2h}\setminus F_1^h)\cap Q)]^3}\le Ch,\ \ \ \ \ \ \ C>0,
				\]
from which the claim follows.
									 \end{proof}

				Taking in (\ref{prob1}) test functions 
								$\bphi=\bphi^{\ep,h}=w\,\bsl {G}^h(\cdot/\varepsilon),$
						where $w\in C_0^\infty(\Omega),$ 
						yields
							\begin {multline}
							\label {mess301}
								\ep^{-1}\int_{\whep_1}A_1\bsl{e}(\ueph ):{\bsl{e}}_{\bsl{y}}(\bsl {g}^h)(\cdot/\ep)w\,\dmu_\ep^h+ \int_{\whep_1}A_1\bsl{e}(\ueph ):\bigl(\bsl {g}^h(\cdot/\ep)\otimes \del w\bigr)\,\dmu_\ep^h
								\\
								+\ep\int_{\whep_0}A_0\bsl{e}(\ueph ):{\bsl{e}}_{\bsl{y}}(\bsl {G}^h)(\cdot/\ep)w\,\dmu_\ep^h
								\\
								+\ep^2\int_{\whep_0}A_0\bsl{e}(\ueph ):
								\bigl(\bsl{G}^h(\cdot/\ep)\otimes \del w\bigr)\,\dmu_\ep^h 
								=\int_\Omega (\bsl {f}^h_\varepsilon-\ueph)\cdot{\bsl {G}^h}(\cdot/\ep)w\,\dmu_\ep^h,\ \ \ \ \ \ \ \ 
							\end {multline}
						We denote the four terms on the left-hand side of (\ref {mess301}) by $I_j(\ep),$ $j=1,2,3,4.$
						It follows from the $L^2$-boundedness of the sequence $\ep \bsl{e}(\ueph)$ and the fact that $A_1\bigl(\bsl{e}(\bsl {u}_0(\bsl{x}))+\bsl{v}\xy\bigr)$ is pointwise orthogonal to the matrix $\bsl {g}(\bsl{y})\otimes \del w(\bsl{x}),$ for $\bsl{y}\in F_1\cap Q,$ $\bsl{x}\in\Omega,$ (see \cite[Lemma 5.3]{bib31}) that the terms $I_4(\ep)$ and $I_2(\ep)$ converge to zero as $\ep\tends 0.$ 
						The convergence results on the soft component discussed in Section \ref {sec223} imply that
							$$\lime I_3(\ep)=\lim_{\varepsilon\to0}\frac {1}{2}\varepsilon\int_\Omega A_0{\bsl{e}}(\bsl {u}^h_\varepsilon):{\bsl{e}}_{\bsl{y}}(\bsl {G})(\cdot/\varepsilon)w\,\dmu_\ep^h+\lim_{\varepsilon\to0}\frac {1}{2}\varepsilon\int_\Omega A_0{\bsl{e}}(\bsl {u}^h_\varepsilon):{\bsl{e}}_{\bsl{y}}(\bsl{G}^h-\bsl {G})(\cdot/\varepsilon)w\,\dmu_\ep^h
							$$
							$$
							=\frac {1}{2}\int_\Omega\int_Q A_0{\bsl{e}}_{\bsl{y}}(\bsl {U} ): {\bsl{e}}_{\bsl{y}}(\bsl {G} )w\,\dy\dx.\ \ \ \ \ \ \ \ \ \ \ \ \ \ \ \ \ \ \ \ \ \ \ \ \ \ \ \ \ \ \ \ \ \ \ \ \ \ 
							$$
The following statement is a consequence of \cite[Lemma 3.5]{bib32}.
\begin {Prop} 
						The  two-scale convergence
							\begin {equation}\label {sig1}
								\frac {h}{\ep} \chi_1^{h,\varepsilon}\bsl{e}(\ueph):\boldsymbol{\sigma}^h_\varepsilon\wtwo
								\frac {\theta^2}{3}\chi_1(\bsl{y})(\boldsymbol{\chi}\cdot\boldsymbol{\nu})''
							\end {equation}
							holds, where $\boldsymbol{\sigma}^h$ is the function defined by (\ref{sigma_h}).
							\end {Prop}
							Taking into account the asymptotics (\ref{dontcare2}), it follows from the above proposition that the convergence
							\begin {equation}\label {techlem001}
								\lime I_1(\ep)=\frac {\theta^2}{6}\int_\Omega\int_QK_1\bsl {\boldsymbol{\chi}}''\cdot \bsl {g}''w\,\dl\dx
							\end {equation}
							holds. 
					Finally, passing to the limit in (\ref{mess301}) as $\varepsilon\to0$ we obtain	
							\begin{equation}
							\label {hom698}
								\frac {\theta^2}{6}\int_\Omega\int_QK_1\bsl {\chi}''\cdot \bsl {g}''w\,\dl\dx+\frac {1}{2}\int_\Omega\int_Q A_0{\bsl{e}}_{\bsl{y}}(\bsl {u} ):{\bsl{e}}_{\bsl{y}}(\bsl {G} )w\,\dy\dx = 
								\int_\Omega\int_Q (\bsl {f}- \bsl {u}) \cdot \bsl {G}w\,\dmu\dx,
							\end{equation}
						Adding together the identities (\ref {intident02}) and (\ref {hom698}) and denoting 
						$\bphi\xy=\bphi_0(\bsl {x})+\bsl {\Phi}\xy,$
						the homogenised formulation (\ref {hom21}) follows.

						In order to prove the strong convergence of solutions when $\bsl {f}^h_\varepsilon\stwo \bsl {f},$ consider another version of problem (\ref {prob1}) 
						with right-hand sides $\bsl {g}^h_\varepsilon\wtwo \bsl {g}:$
							\begin {multline}\label {prob934}
								\veph\in\bigl[H_0^1(\Omega)\bigr]^2,\ \ \ \int_{\whep_1}A_1\bsl{e}(\veph ):\bsl{e}(\boldsymbol{\varphi} )\,\dmu_\ep^h +\ep^2\int_{\whep_0}A_0\bsl{e}(\veph): \bsl{e}(\boldsymbol{\varphi} )\,\dmu_\ep^h
								\\ +\int_\Omega \veph \cdot \boldsymbol{\varphi}\,\dmu_\ep^h =\int_\Omega \bsl {g}^h_\varepsilon\cdot\boldsymbol{\varphi}\,\dmu_\ep^h\ \ \ \ \forall\,\bphi\in\bigl[H_0^1(\Omega)\bigr]^2.
							\end {multline}
						Setting $\bphi=\ueph$ in the above, $\bphi=\veph$ in the original problem (\ref {prob1}) with 
						$\bsl{f}=\bsl{f}_\varepsilon^h,$ and then subtracting one from the other yields
							\begin {equation}\label {GreenBetti}
								\lime\int_\Omega\ueph\cdot\bsl {g}^h_\varepsilon\,\dmu_\ep^h=\lime\int_\Omega\veph\cdot\bsl {f}^h_\varepsilon\,\dmu_\ep^h=\int_\Omega\int_Q\bsl {v}\cdot\bsl {f}\,\dmu\dx=\int_\Omega\int_Q\bsl {u}\cdot\bsl {g}\,\dmu\dx,
							\end {equation}
						where 
						where $\bsl {v}$ solves the homogenised equation with the right-hand side $\bsl{g}.$

						Finally, in order to show the convergence of energies, we set $\boldsymbol{\varphi}=\bsl{u}_\varepsilon^h$ in (\ref{prob1}) with $\bsl{f}=\bsl{f}_\varepsilon^h$ and use the definition of strong two-scale convergence as well as the identity (\ref{hom21}), as follows:
						\begin{equation*}
								\lime\left \{ \int_{\whep_1} A_1\bsl{e}(\ueph):\bsl{e}(\ueph)\,\dmu_\ep^h+\ep^2\int_{\whep_0}A_0\bsl{e}(\ueph):\bsl{e}(\ueph)\,\dmu_\ep^h\right \}=\int_\Omega\int_Q|\bsl {f}|^2\,\dmu\dx
								-\int_\Omega\int_Q|\bsl {u}|^2\,\dmu\dx
								\end{equation*}
								\begin{equation*}
								=\frac{1}{2} \int_\Omega A^\home_\lambda\bsl{e}(\bsl {u}_0):\bsl{e}(\bsl {u}_0)\ \dx+\frac {\theta^2}{6}\int_\Omega\int_QK_1\boldsymbol{\chi}''\cdot \boldsymbol{\chi}''\ \dl\dx + 
								\frac {1}{2}\int_\Omega\int_QA_0{\bsl{e}}_{\bsl{y}}(\bsl {U}): {\bsl{e}}_{\bsl{y}}(\bsl {U})\,\dy\dx.
							\end {equation*}
					\end {proof}


		\section {Convergence of spectra}\label {sec24}
			Here we establish the convergence of the spectra of the operators associated with (\ref {prob1}) to the spectrum given by the limit problem (\ref {hom21}). 

			\subsection {Spectrum of the limit operator}\label {sec241}
					Consider the bilinear forms ({\it cf.} (\ref{hom21}))
					\begin {equation}\label {bilinmac}
						{\mathfrak b}_{\textnormal{macro}}(\bsl {u}_0,\bphi_0)=\frac{1}{2}\int_\Omega A^\home_\lambda\bsl{e}(\bsl {u}_0):\bsl{e}(\boldsymbol{\varphi}_0)\,\dx,\ \ \ \bsl{u}_0, \bphi_0\in\bigl[H_0^1(\Omega)\bigr]^2,
						\ \ \ \ \ \ \ \ \ \ \ \ \ \ \ 
					\end {equation}
					\begin {equation}\label {bmic1}
						{\mathfrak b}_{\textnormal {micro}}(\bsl {U},\bsl {\Phi})= \frac {\theta^2}{6}\int_QK_1\boldsymbol{\chi}''\cdot \bsl {\Phi}''\,\dl + \frac {1}{2}\int_QA_0{\bsl{e}}_{\bsl{y}}(\bsl {U}): {\bsl{e}}_{\bsl{y}}(\bsl {\Phi})\,\dy,\ \ \ 
						\bsl{U}, \bsl {\Phi}\in\widetilde{V},
					\end {equation}
					where the space $\widetilde{V}$ is defined in Section \ref{sec231}, see the paragraph preceding (\ref{other1}). 
The spectral problem associated with (\ref{hom21}) can be written in the form
					\begin {align}
						&{\mathfrak b}_{\textnormal {macro}}(\bsl {u}_0,\bphi_0)=s\bigl( \bsl {u}_0+\langle \bsl {U}\rangle,\bphi_0\bigr)_{[L^2(\Omega)]^2}\ \ \ \ \forall\,\bphi_0\in\bigl[H_0^1(\Omega)\bigr]^2,\label {bilinear1a}
						\\[0.35em]
						&{\mathfrak b}_{\textnormal {micro}}(\bsl{U},\bsl {\Phi})=s\bigl( \bsl {u}_0+\bsl {U},\bsl {\Phi}\bigr)_{[\lopen{Q}{\dmu}]^2}\ \ \ \ \ \ \ \ \ \ \forall\,\bsl {\Phi}
						\in\widetilde{V}.\nonumber
					\end {align}

				Let $\{\bvphi_n\}_{n\in\nbb}\subset\widetilde{V}$ be an orthonormal set of eigenvectors with non-zero average for the bilinear form ${\mathfrak b}_{\textnormal {micro}}$ with corresponding set of eigenvalues $\{\omega_n\}_{n\in\nbb}:$
					\begin {equation}\label {specmic}
						{\mathfrak b}_{\textnormal {micro}}(\bvphi_n,\bsl {\Phi})=\omega_n\left( \bvphi_n,\bsl {\Phi}\right)_{[\lopen{Q}{\dmu}]^2}\ \ \ \forall \bsl {\Phi}\in\widetilde{V}.
					\end {equation}
				Assuming that the value $s$ is outside the spectrum ${\rm Sp}({\mathfrak b}_{\rm micro})$ of the form ${\mathfrak b}_{\rm micro},$ the  function $\bsl {U}\xy$ is written as a series in terms of eigenfunctions $\{\bvphi_n\}_{n\in\nbb}:$ 
					\begin{equation}
					\bsl {U}\xy
					=s\sum_{n=1}^\infty\frac{\langle\boldsymbol{\phi}_n\rangle\cdot\bsl {u}_0(\bsl {x})}{\omega_n-s}\bvphi_n(\bsl {y}).
					\label{U_form}
					\end{equation}
				Substituting this expansion for $\bsl {U}$ into (\ref {bilinear1a}), we obtain
					\begin {equation}\label {spec6}
						\mathfrak{b}_{\textnormal {macro}}(\bsl {u}_0,\bphi_0)=\bigl( \boldsymbol{\beta}(s)\bsl {u}_0,\bphi_0
						\bigr)_{[L^2(\Omega)]^2}\ \ \forall\,\bphi_0\in\bigl[H_0^1(\Omega)\bigr]^2,\ \ \ \ \ \ \ \ 
						\boldsymbol{\beta}(s):= s\left (I+s\sum_{n=1}^\infty \frac {\langle\boldsymbol{\phi}_n\rangle\otimes
						\langle\boldsymbol{\phi}_n\rangle}{\omega_n-s}\right ).
					\end {equation}
				Versions of the function $\boldsymbol{\beta}$ appear in the study of scalar \cite{bib31} and vector (\cite{Smyshlyaev_degeneracies}, \cite{bib66}, \cite{Zh_Past_Doklady}) homogenisation problems. The following statement is a straightforward modification of a result in \cite{Zh_Past_Doklady}. 
					\begin {Prop}
					\label{limit_spectrum}
					Consider the operator $\mathfrak{A}$ whose domain consists of all solution pairs $(\bsl{u}_0, \bsl{U)}$
						for the identity
						\begin{equation}
						\mathfrak{b}_{\rm macro}(\bsl{u}_0,\boldsymbol{\varphi}_0)+\mathfrak{b}_{\rm micro}(\bsl{U},\boldsymbol{\Phi})
						=(\bsl{f}, \boldsymbol{\varphi}_0+\boldsymbol{\Phi})_{[L^2(\Omega\times Q, \dx\times\dmu)]^2}\ \ \ \ \ \ \forall\boldsymbol{\varphi}_0+\boldsymbol{\Phi}\in V,
						\label{Adef_id}
						\end{equation}
						as the right-hand side $\bsl{f}$ runs over all elements of $\mathfrak{H}$ and defined by $\bsl{f}=\mathfrak{A}(\bsl{u}_0+\bsl{U})$ if and only if (\ref{Adef_id}) holds. Then 
						the resolvent set $\rho(\mathfrak{A})$ of the operator $\mathfrak{A}$ is given by
							\begin{equation}
							\rho (\mathfrak{A})=\rho(\mathfrak{b}_{\textnormal {micro}})\cap\bigl\{s\,|\,{\rm all\ eigenvalues\ of\ }\boldsymbol{\beta}(s)\ {\rm belong\ to\ }\rho(\mathfrak{b}_{\rm macro})\bigr\},
							\label{spectrum_A}
							\end{equation}
							where $\rho(\mathfrak{b}_{\textnormal {micro}})$ is the resolvent set of the operator  generated by 
							the form $\mathfrak{b}_{\textnormal {micro}}$  in the closure\footnote{Note that the domain of this operator is dense in this closure.} of $\widetilde{V}$ in $\bigl[L^2(Q)\bigr]^2,$ and $\rho(\mathfrak{b}_{\textnormal{macro}})$ is the resolvent set of the operator generated by the form 
							$\mathfrak{b}_{\textnormal {macro}}.$
					\end {Prop}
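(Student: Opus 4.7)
The plan is to invert the microscopic part of the resolvent problem $(\mathfrak{A}-sI)(\bsl{u}_0+\bsl{U})=\bsl{f}$ explicitly, reducing it to a resolvent equation on the macroscopic variable alone. I would start from the identity (\ref{Adef_id}) modified by a shift $-s(\bsl{u}_0+\bsl{U},\bphi_0+\bsl{\Phi})_{[L^2(\Omega\times Q,\dx\times\dmu)]^2}$ on the left-hand side, and split the test space. Taking $\bphi_0=\bsl{0}$ with arbitrary $\bsl{\Phi}\xy$, $\bsl{\Phi}(\bsl{x},\cdot)\in\widetilde{V}$, yields the pointwise-in-$\bsl{x}$ microscopic identity
\begin{equation*}
\mathfrak{b}_{\rm micro}\bigl(\bsl{U}(\bsl{x},\cdot),\bsl{\Phi}(\bsl{x},\cdot)\bigr)-s\bigl(\bsl{U}(\bsl{x},\cdot),\bsl{\Phi}(\bsl{x},\cdot)\bigr)_{[L^2(Q,\dmu)]^2}=\bigl(s\bsl{u}_0(\bsl{x})+\bsl{f}(\bsl{x},\cdot),\bsl{\Phi}(\bsl{x},\cdot)\bigr)_{[L^2(Q,\dmu)]^2},
\end{equation*}
while taking $\bsl{\Phi}=\bsl{0}$ yields the macroscopic identity
\begin{equation*}
\mathfrak{b}_{\rm macro}(\bsl{u}_0,\bphi_0)-s\bigl(\bsl{u}_0+\langle\bsl{U}\rangle,\bphi_0\bigr)_{[L^2(\Omega)]^2}=\bigl(\langle\bsl{f}\rangle,\bphi_0\bigr)_{[L^2(\Omega)]^2}.
\end{equation*}

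Assuming $s\in\rho(\mathfrak{b}_{\rm micro})$, I would solve the microscopic identity uniquely by expanding $\bsl{U}(\bsl{x},\cdot)$ in the orthonormal eigenbasis $\{\bvphi_n\}$, generalising (\ref{U_form}) to
\begin{equation*}
\bsl{U}(\bsl{x},\bsl{y})=\sum_n\frac{s\,\langle\bvphi_n\rangle\cdot\bsl{u}_0(\bsl{x})+(\bsl{f}(\bsl{x},\cdot),\bvphi_n)_{[L^2(Q,\dmu)]^2}}{\omega_n-s}\bvphi_n(\bsl{y}).
\end{equation*}
Taking $\mu$-average in $\bsl{y}$ and inserting the result into the macroscopic identity, the $\bsl{u}_0$-linear part $s(\bsl{u}_0+\langle\bsl{U}\rangle)$ is identified with $\boldsymbol{\beta}(s)\bsl{u}_0$ via the definition of $\boldsymbol{\beta}(s)$ in (\ref{spec6}), while the $\bsl{f}$-linear part collects into a bounded source $\tilde{\bsl{f}}=\tilde{\bsl{f}}(s;\bsl{f})$ depending continuously on $\bsl{f}\in\mathfrak{H}$ uniformly for $s$ in compact subsets of $\rho(\mathfrak{b}_{\rm micro})$. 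The problem thus reduces to
\begin{equation*}
\mathfrak{b}_{\rm macro}(\bsl{u}_0,\bphi_0)-\bigl(\boldsymbol{\beta}(s)\bsl{u}_0,\bphi_0\bigr)_{[L^2(\Omega)]^2}=\bigl(\tilde{\bsl{f}},\bphi_0\bigr)_{[L^2(\Omega)]^2}\quad\forall\,\bphi_0\in\bigl[H^1_0(\Omega)\bigr]^2,
\end{equation*}
so that $s\in\rho(\mathfrak{A})$ is equivalent to bounded invertibility on $[L^2(\Omega)]^2$ of $\mathfrak{B}_{\rm macro}-\boldsymbol{\beta}(s)$, where $\mathfrak{B}_{\rm macro}$ denotes the self-adjoint operator generated by $\mathfrak{b}_{\rm macro}$.

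The remaining step is to express this invertibility in terms of the two eigenvalues $\lambda_1(s),\lambda_2(s)$ of the constant symmetric matrix $\boldsymbol{\beta}(s)$. I would diagonalise $\boldsymbol{\beta}(s)$ in $\mathbb{R}^2$ via an orthonormal eigenbasis $\{\bsl{v}_1,\bsl{v}_2\}$ and argue, as in \cite{Zh_Past_Doklady}, using the self-adjointness and discrete spectrum of $\mathfrak{B}_{\rm macro}$ together with the constancy of $\boldsymbol{\beta}(s)$ in $\bsl{x}$, that $\mathfrak{B}_{\rm macro}-\boldsymbol{\beta}(s)$ fails to be boundedly invertible precisely when some $\lambda_i(s)\in\sigma(\mathfrak{b}_{\rm macro})$. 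For the converse direction, non-trivial elements of $\ker(\mathfrak{A}-sI)$ have to be exhibited at each excluded $s$: for $s=\omega_n\in\sigma(\mathfrak{b}_{\rm micro})$ with a zero-average eigenfunction $\bvphi_n$, the pair $\bsl{u}_0=\bsl{0}$, $\bsl{U}(\bsl{x},\bsl{y})=w(\bsl{x})\bvphi_n(\bsl{y})$ with $w\in C_0^\infty(\Omega)$ does the job (the non-zero-average case is captured through $\boldsymbol{\beta}(s)$); and for $\lambda_i(s)\in\sigma(\mathfrak{b}_{\rm macro})$, a separable ansatz along the direction $\bsl{v}_i$, paired with the corresponding $\bsl{U}$ from the expansion above, produces a non-trivial kernel element.

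I expect the principal obstacle to be this last reduction: the vectorial elasticity operator $\mathfrak{B}_{\rm macro}$ does not in general commute with arbitrary constant rotations of its $\mathbb{R}^2$ target, so diagonalising the perturbation $\boldsymbol{\beta}(s)$ does not literally decouple the reduced macroscopic system into two independent scalar resolvent problems. Transferring the matricial spectrum of $\boldsymbol{\beta}(s)$ into scalar spectral conditions on $\mathfrak{B}_{\rm macro}$ therefore requires a careful argument in the spirit of \cite{Zh_Past_Doklady}, exploiting the $\bsl{x}$-independence and symmetry of $\boldsymbol{\beta}(s)$ together with a spectral-projection reformulation of the resolvent identity.
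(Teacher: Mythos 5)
Your proposal follows the same route as the paper's own proof: assume $s\in\rho(\mathfrak{b}_{\rm micro})$, eliminate $\bsl{U}$ via the eigenfunction expansion, substitute into the macroscopic identity to arrive at (\ref{specish2}), and then argue invertibility of the shifted macroscopic form; your version of the $\bsl{U}$ expansion, carrying both the $\bsl{u}_0$- and the $\bsl{f}$-contributions explicitly, is in fact written a little more carefully than the paper's shorthand ``(\ref{U_form}) with $\bsl{u}_0$ replaced by $s\bsl{u}_0+\bsl{f}$''. The obstacle you flag at the end --- that the operator generated by $\mathfrak{b}_{\rm macro}$ does not commute with arbitrary constant $2\times2$ matrices, so diagonalising $\boldsymbol{\beta}(s)$ does not by itself reduce invertibility of the shifted macroscopic operator to the condition that the two eigenvalues of $\boldsymbol{\beta}(s)$ avoid ${\rm Sp}(\mathfrak{b}_{\rm macro})$ --- is a legitimate observation, but it is not particular to your write-up: the paper's proof makes exactly the same assertion at that step without further argument. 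For the model framework the issue evaporates, since immediately after the proposition the paper establishes $\boldsymbol{\beta}(s)=b(s)I$ from the $\pi/2$-rotation symmetry of $F_1\cap Q$, and that scalar case is the one actually exploited for the band-gap description of ${\rm Sp}(\mathfrak{A})$; for the general matrix-valued $\boldsymbol{\beta}(s)$ as stated in the proposition, the extra argument you sketch (or a restriction of the hypothesis) would indeed be needed to complete this step in the paper and in your proposal alike.
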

					\begin {proof}
						Suppose that $s$ belongs to the right-hand side of (\ref{spectrum_A}). We argue that the problem
							\begin {equation}
							\begin {cases}
								\mathfrak{b}_{\textnormal {macro}}(\bsl {u}_0,\bphi_0)-s\bigl( \bsl {u}_0+\langle \bsl {U}\rangle,\bphi_0\bigr)_{[L^2(\Omega)]^2} =(\bsl {f},\bphi_0)_{[L^2(\Omega)]^2}\ \ \ \ \forall\boldsymbol{\varphi}_0,\label {specish1}
								\\[0.6em]
								\mathfrak{b}_{\textnormal {micro}}(\bsl {U},\bsl {\Phi})-s\bigl( \bsl {u}_0+\bsl {U},\bsl {\Phi}\bigr)_{[\lopen{Q}{\dmu}]^2} =(\bsl {f},\bsl {\Phi})_{[\lopen{Q}{\dmu}]^2}\ \ \ \ \forall\bsl{\Phi}.
							\end {cases}
							\end {equation}
						has a solution for every $\bsl {f}\in\mathfrak{H},
						$
						given that $s$ satisfies the required assumptions of the lemma. Since $s\notin{\rm Sp}(\mathfrak{b}_{\textnormal {micro}})$, it follows that $\bsl{U}$ can be written in the form (\ref{U_form}) with $\bsl{u}_0$ replaced by							
						$s\bsl {u}_0+\bsl {f}.$
						Substituting this into the first equation of (\ref {specish1}) yields
							\begin {equation}\label {specish2}
								\mathfrak{b}_{\textnormal {macro}}(\bsl {u}_0,\bphi_0)-\bigl( \boldsymbol{\beta}(s)\bsl {u}_0,\bphi_0\bigr)_{[L^2(\Omega)]^2}=\bigl(
								s^{-1}\boldsymbol{\beta}(s)\bsl {f},\bphi_0\bigr)_{[L^2(\Omega)]^2}\ \ \ \ \forall\boldsymbol{\varphi}_0.
							\end {equation}
						Since all eigenvalues of $\boldsymbol{\beta}(s)$ are in $\rho({\mathfrak{b}_{\rm macro}}),$ the operator induced by the bilinear form on the left-hand side of (\ref{specish2}) is invertible and thus the identity (\ref {specish2}) has a unique solution. 

						Conversely, one has $\rho(\mathfrak{A})\subset\rho(\mathfrak{b}_{\textnormal {micro}})$ and if 
						$s\in\rho(\mathfrak{A})$ then $\boldsymbol{\beta}(s)$ has no eigenvalues in ${\rm Sp}({\mathfrak{b}_{\rm macro}}),$ 
						for otherwise the problem (\ref {specish1}) would not be uniquely solvable for any $\bsl {f}\in\mathfrak{H}.$
					\end {proof}
				In the case of the model framework, the matrix $\boldsymbol{\beta}$ is proportional to the identity matrix $I.$ Indeed, if the set $F_1\cap Q$ is invariant with respect to a rotation $\bsl{R},$ {\it i.e.} one has
				$
				\bigl\{\bsl{R}\bsl{y}: \bsl{y}\in F_1\cap Q\bigr\}=F_1\cap Q,
				$
				then for an eigenfunction $\boldsymbol{\phi}$ of the bilinear form $\mathfrak{b}_{\rm micro},$ the vector $\bsl{R}\boldsymbol{\phi}$ is an eigenvector with the same eigenvalue, hence one has $\bsl{R}\boldsymbol{\beta}(s)\bsl{R}^{-1}=\boldsymbol{\beta}(s),$ in view of the definition of $\boldsymbol{\beta},$ see (\ref{spec6}). Taking 
				$\bsl{R}$ to be the rotation through $\pi/2$ yields the required claim, namely $\boldsymbol{\beta}(s)=b(s)I$ for a scalar function $b.$
			Let $\{\gamma_n\}_{n\in\nbb}$ denote the increasing sequence of zeros of the function $b$ and let $\{\delta_n\}_{n\in\nbb}$ be the increasing sequence of all eigenvalues in the set $\{\omega_n\}_{n\in\nbb},$ counting multiple eigenvalues only once.
				The spectrum of the limit operator $\mathfrak{A}$ has the ``band'' form:
					$${\rm Sp}(\mathfrak{A})=\biggl(\bigcup_{n\in\nbb}\bigl\{s\in(\gamma_n,\delta_{n}): 
					b(s)\in{\rm Sp}(\mathfrak{b}_{\rm macro})\bigr\}\biggr)\cup\{\delta_n\}_{n\in\nbb}\cup\{\alpha_n\}_{n\in\nbb},$$
					 where $\alpha_n$ are the eigenvalues of $\mathfrak{b}_{\rm micro}$ such that all of the corresponding eigenfunctions have zero average over $Q.$ The intervals $(\delta_n,\gamma_{n+1}),$ $n\in{\mathbb N},$ are ``gaps'' in the spectrum, which do not have common points with 
					 ${\rm Sp}(\mathfrak{A}),$ except, possibly, for elements of the set 
					 $\{\alpha_n\}_{n\in\nbb}.$ 
					


			\subsection {Proof of spectral convergence}\label {sec242}
				 Here we show that the spectra of the original problems converge 
				 to the spectrum of the limit problem (\ref {hom21}). 
					\begin {Def}
					\label {haus1} We say that a sequence of sets ${\mathcal X}_\varepsilon\subset{\mathbb R},$ $\varepsilon>0,$ {\sl converges in the sense of Hausdorff} to 
					${\mathcal X}\subset{\mathbb R}$  
						if the following two statements hold:
								
								(H1) For each $\omega\in{\mathcal X}$, there exists a sequence $\omega_\ep\in{\mathcal X}_\ep$ such that $\omega_\ep\tends \omega;$
								
								(H2) For all sequences $\omega_\ep\in{\mathcal X}_\ep$ such that $\omega_\ep\tends \omega\in{\mathbb R}$, it follows that 
								$\omega\in{\mathcal X}.$
					\end {Def}
					\begin {Def}
					\label {stworcon}
						 We say that a family of operators $\mathcal {A}_\ep$  in $\bigl[L^2(\Omega,\dmu_\ep^h)\bigr]^2$ {\sl strongly two-scale resolvent converges} as $\varepsilon\to0$ to an operator $\mathcal {A}$ in $\bigl[L^2(\Omega\times Q, \dx\times\dmu)\bigr]^2,$ and write $\mathcal {A}_\ep\stwor\mathcal {A},$ if  for all $\bsl{f}$ in the range 
						 $R({\mathcal A})$ 
						 of the operator ${\mathcal A}$ and for all sequences $\bsl {f}^h_\ep\in\bigl[L^2(\Omega,\dmu_\ep^h)\bigr]^2$ 
						such that $\bsl {f}^h_\ep\stwo\bsl {f},$ the two-scale convergence 
						$(\mathcal {A}_\ep+I)^{-1}\,\bsl {f}^h_\ep\stwo (\mathcal {A}+I)^{-1}\,\bsl {f}$ holds.
					\end {Def}
					\begin {Prop}
					\label{lower_semi}
						If $\mathcal {A}_\ep\stwor\mathcal {A},$ then
						the property 
						(H1) holds with ${\mathcal X}_\ep={\rm Sp}({\mathcal A}_\ep),$ ${\mathcal X}={\rm Sp}({\mathcal A}).$
					\end {Prop}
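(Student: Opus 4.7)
My plan is to use the standard Weyl-sequence argument, translated via the bounded self-adjoint resolvents $\mathcal{T}_\ep:=(\mathcal{A}_\ep+I)^{-1}$ and $\mathcal{T}:=(\mathcal{A}+I)^{-1}$ so that the strong two-scale resolvent convergence can be applied directly. Given $\omega\in{\rm Sp}(\mathcal{A}),$ the task is to produce $\omega_\ep\in{\rm Sp}(\mathcal{A}_\ep)$ with $\omega_\ep\to\omega.$ Both $\mathcal{A}_\ep$ and $\mathcal{A}$ arise from symmetric coercive bilinear forms (the ones underlying (\ref{prob1}) and (\ref{hom21}), with the $L^2$-terms responsible for the shift by $I$), and are positive self-adjoint with trivial kernel; hence $R(\mathcal{A})$ is dense in $\mathfrak{H},$ and $\mathcal{T},$ $\mathcal{T}_\ep$ are bounded self-adjoint contractions. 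Setting $\mu:=(\omega+1)^{-1},$ one has $\mu\in{\rm Sp}(\mathcal{T}),$ and it suffices to show that $d\bigl(\mu,{\rm Sp}(\mathcal{T}_\ep)\bigr)\to0$ as $\ep\to0.$

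By Weyl's criterion for bounded self-adjoint operators, I pick $\{\bsl{v}_n\}\subset R(\mathcal{A})$ with $\|\bsl{v}_n\|_{\mathfrak{H}}=1$ and $\|(\mathcal{T}-\mu I)\bsl{v}_n\|_{\mathfrak{H}}\to 0,$ using density of $R(\mathcal{A})$ in $\mathfrak{H}$ to enforce the range condition. For each $n$ I construct a strongly two-scale convergent approximation $\bsl{v}_{n,\ep}^h\stwo\bsl{v}_n,$ built from the test-vector classes of $V$ analysed in the proof of Theorem \ref{homogenisation_theorem}: sums of expressions $\boldsymbol{\varphi}_0(\bsl{x})+w(\bsl{x})\bsl{G}^h(\bsl{x}/\ep),$ where $\boldsymbol{\varphi}_0,w\in C_0^\infty(\Omega),$ $\bsl{G}\in\bigl[C^\infty_{\rm per}(Q)\bigr]^2$ matches a rigid displacement $\bsl{g}\in D$ on $F_1\cap Q,$ and $\bsl{G}^h$ is the framework extension from Proposition \ref{lemma12}. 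A diagonal extraction in $n$ and $\ep$ then produces the required sequences.

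The strong two-scale resolvent hypothesis $\mathcal{A}_\ep\stwor\mathcal{A}$ yields $\mathcal{T}_\ep\bsl{v}_{n,\ep}^h\stwo\mathcal{T}\bsl{v}_n,$ and hence $(\mathcal{T}_\ep-\mu I)\bsl{v}_{n,\ep}^h\stwo(\mathcal{T}-\mu I)\bsl{v}_n.$ Applying (\ref{squareuep}) to both of these strong two-scale limits gives
\begin{equation*}
\lim_{\ep\to0}\bigl\|(\mathcal{T}_\ep-\mu I)\bsl{v}_{n,\ep}^h\bigr\|_{[L^2(\Omega,\dmu_\ep^h)]^2}=\bigl\|(\mathcal{T}-\mu I)\bsl{v}_n\bigr\|_{\mathfrak{H}},\qquad\lim_{\ep\to0}\bigl\|\bsl{v}_{n,\ep}^h\bigr\|_{[L^2(\Omega,\dmu_\ep^h)]^2}=1.
\end{equation*}
For arbitrary $\eta>0,$ choose $n$ large so that the first right-hand side is below $\eta/3,$ and then $\ep$ sufficiently small that each quantity above lies within $\eta/3$ of its limit. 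The resulting near-unit vector $\bsl{v}_{n,\ep}^h$ furnishes a defect at most $\eta$ for $\mathcal{T}_\ep-\mu I,$ which by bounded self-adjointness of $\mathcal{T}_\ep$ forces $d\bigl(\mu,{\rm Sp}(\mathcal{T}_\ep)\bigr)<\eta.$ Selecting $\mu_\ep\in{\rm Sp}(\mathcal{T}_\ep)$ closest to $\mu$ and setting $\omega_\ep:=\mu_\ep^{-1}-1\in{\rm Sp}(\mathcal{A}_\ep)$ yields the required sequence $\omega_\ep\to\omega.$

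The main obstacle is the second step: verifying the strong two-scale density in $\mathfrak{H}$ of the building blocks $\boldsymbol{\varphi}_0(\bsl{x})+w(\bsl{x})\bsl{G}^h(\bsl{x}/\ep)$ requires combining the density of $V$ in $\mathfrak{H}$ (by definition), the density of admissible smooth pairs $(\boldsymbol{\varphi}_0,\bsl{G})$ in $V$ (which inherits density of $D$ in $\widehat{\mathcal{R}}^0$ from the Proposition preceding Proposition \ref{lemma12}), and the strong two-scale behaviour of the framework extensions $\bsl{G}^h$ already quantified in Proposition \ref{lemma12} and the Lemma preceding (\ref{mess301}). Once this is in place, the remaining ingredients are purely Weyl-criterion and diagonalisation.
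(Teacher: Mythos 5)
Your proof follows essentially the same route as the paper: pass to the shifted resolvents $T_\ep=(\mathcal{A}_\ep+I)^{-1}$, $T=(\mathcal{A}+I)^{-1}$, pick a near-eigenvector for $T$ in $R(\mathcal{A})$, propagate it through the strong two-scale resolvent convergence and the norm-convergence property (\ref{squareuep}) of strong two-scale limits, and then undo the spectral-mapping step. The extra step you flag as the ``main obstacle'' --- explicitly constructing strongly two-scale approximating sequences $\bsl{v}_{n,\ep}^h$ from the test-vector classes of Section \ref{sec23} --- is simply taken for granted in the paper's proof (which just posits a sequence $\bsl{f}_\ep^h\stwo\bsl{f}$, relying on the standard approximation property of the two-scale framework), so the arguments are in substance identical.
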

					\begin{proof}
					Let $T_\ep:=(\mathcal {A}_\ep+I)^{-1}$ and $T:=(\mathcal {A}+I)^{-1}.$ If $s\in{\rm Sp}(\mathcal {A})$ then $t=(1+s)^{-1}\in{\rm Sp}(T)$. Therefore, for any $\delta>0$, there exists a vector $\bsl{f}\in R({\mathcal A})$
					such that 
					$$
					\|\bsl {f}\|_{[L^2(\Omega\times Q, \dx\times\dmu)]^2}=1,\ \ \ \ \ \ 
					\bigl\|(T-t)\bsl {f}\bigr\|_{[L^2(\Omega\times Q, \dx\times\dmu)]^2}\leq\delta/4.
					$$
				Consider a sequence $\bsl {f}^h_\ep\in\bigl[L^2(\Omega,\dmu_\ep^h)\bigr]^2$ such that $\bsl {f}^h_\ep\stwo \bsl {f}.$ 
				Using the definition of strong two-scale resolvent convergence, one has
					$$
					\lime\bigl\| (T_\ep-t)\bsl {f}^h_\ep\bigr\|_{[L^2(\Omega,\dmu_\ep^h)]^2}=\bigl\| (T-t)\bsl {f}\bigr\|_{[L^2(\Omega\times Q, \dx\times\dmu)]^2}\leq\delta/4.
					$$
				Hence, $\| (T_\ep-t)\bsl {f}^h_\ep\|_{[L^2(\Omega,\dmu_\ep^h)]^2}\leq\delta/2$ and $\|\bsl {f}^h_\ep\|_{[L^2(\Omega,\dmu_\ep^h)]^2}\geq1/2$ for sufficiently small $\ep.$ Therefore, the interval $(t-\delta,t+\delta)$ contains a point of the 					spectrum of the operator $T_\ep$. Moreover, every interval centered at $s$ contains a point of the spectrum of the operator $\mathcal {A}_\ep$ for small enough $\ep,$ which completes the proof. 
					\end{proof}
				
				\begin{Corol}
				For the operators $\mathfrak{A}_\varepsilon^h$ defined by the identity
				$$
				\mathfrak{B}^h_\ep(\bsl {u},\bsl {v})=\mathfrak{L}^h_\ep(\bsl{v}),
				$$
				where the forms $\mathfrak{B}^h_\ep,$ $\mathfrak{L}^h_\ep$ are defined by (\ref{forms}), $\bsl{f}=\mathfrak{A}_\varepsilon^h\bsl{u},$ and the operator 
				$\mathfrak{A}$ is defined in proposition \ref{limit_spectrum}, the property (H1) holds with 
				${\mathcal X}_\ep={\rm Sp}(\mathfrak{A}_\ep^h),$ ${\mathcal X}={\rm Sp}(\mathfrak{A}),$ $h=h(\varepsilon).$
				\end{Corol}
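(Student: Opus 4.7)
The plan is to apply Proposition \ref{lower_semi} directly: it suffices to verify that $\mathfrak{A}_\varepsilon^h \stwor \mathfrak{A}$ in the strong two-scale resolvent sense of Definition \ref{stworcon}, after which property (H1) follows at once with $\mathcal{X}_\varepsilon={\rm Sp}(\mathfrak{A}_\varepsilon^h)$ and $\mathcal{X}={\rm Sp}(\mathfrak{A})$.

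The first step is to identify both resolvents as solution operators. The form $\mathfrak{B}^h_\ep$ in (\ref{forms}) decomposes as the elasticity contribution plus the $L^2(\Omega,\dmu_\ep^h)$ inner product, so the self-adjoint operator generated on $\bigl[L^2(\Omega,\dmu_\ep^h)\bigr]^2$ by the elasticity part alone is precisely $\mathfrak{A}_\varepsilon^h$; by Lax-Milgram, the unique solution $\ueph$ of (\ref{prob1}) for right-hand side $\bsl{f}^h_\varepsilon$ is $\ueph=(\mathfrak{A}_\varepsilon^h+I)^{-1}\bsl{f}^h_\varepsilon$. The identical decomposition of (\ref{hom21}) into $\mathfrak{b}_{\rm macro}+\mathfrak{b}_{\rm micro}$ (which generates $\mathfrak{A}$ by (\ref{Adef_id})) and the $L^2(\Omega\times Q,\dx\times\dmu)$ inner product shows that the unique solution $\bsl{u}=\bsl{u}_0+\bsl{U}\in V$ of (\ref{hom21}) equals $(\mathfrak{A}+I)^{-1}\bsl{f}$.

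The second step is to match this identification with Theorem \ref{homogenisation_theorem}. Given any $\bsl{f}\in R(\mathfrak{A})\subset\mathfrak{H}$ and any sequence $\bsl{f}^h_\varepsilon\stwo\bsl{f}$, the second assertion of Theorem \ref{homogenisation_theorem} yields $\ueph\stwo\bsl{u}$; in view of the above identifications this is exactly $(\mathfrak{A}_\varepsilon^h+I)^{-1}\bsl{f}^h_\varepsilon\stwo(\mathfrak{A}+I)^{-1}\bsl{f}$, which is the definition of $\mathfrak{A}_\varepsilon^h\stwor\mathfrak{A}$. Invoking Proposition \ref{lower_semi} then closes the argument.

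There is no genuine obstacle to carrying this out, since all analytic content is absorbed in Theorem \ref{homogenisation_theorem} and in the abstract semicontinuity result Proposition \ref{lower_semi}. The only delicate point is bookkeeping on the $+I$ shift: one must treat the forms $\mathfrak{B}^h_\ep$ and $\mathfrak{b}_{\rm macro}+\mathfrak{b}_{\rm micro}$ consistently, so that the resolvents on both sides are taken at the same spectral parameter and the intrinsic $L^2$ lower-order term in (\ref{prob1}) plays the role of the $I$ in Definition \ref{stworcon}. Once this is observed, the corollary is immediate.
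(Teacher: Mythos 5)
Your argument is correct and coincides with the paper's (implicit) reasoning: the corollary is stated without proof precisely because it follows by chaining Theorem \ref{homogenisation_theorem}, which supplies the strong two-scale resolvent convergence $\mathfrak{A}_\varepsilon^h\stwor\mathfrak{A}$, with Proposition \ref{lower_semi}. Your care with the $+I$ shift is warranted and correctly resolved: since $\mathfrak{B}^h_\ep$ already incorporates the $L^2(\Omega,\dmu_\ep^h)$ inner product, $\mathfrak{A}_\varepsilon^h$ must be read as the operator generated by the elasticity part alone, so that (\ref{prob1}) realises $(\mathfrak{A}_\varepsilon^h+I)^{-1}$ and is consistent with $\mathfrak{A}$ of Proposition \ref{limit_spectrum}, which likewise carries no built-in shift.
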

			
				The property (H2) of the Hausdorff convergence does not hold for spectra ${\rm Sp}(\mathfrak{A}^h_\varepsilon)$ in general, due to the fact that the soft component may have a non-empty intersection with the boundary of $\Omega.$ 
				However, a suitable version of (H2) does hold for a modified operator family, where the corresponding elements of the soft component are replaced by the stiff material.  More precisely, for each $\varepsilon,$ $h,$ denote by $\widehat{\mathfrak{A}}_\varepsilon^h$ the operator defined similarly to $\mathfrak{A}_\ep^h,$ with $\Omega_0^{\varepsilon,h}$ and $\Omega_1^{\varepsilon,h}$ in (\ref{prob1}) replaced by $\widehat{\Omega}_0^{\varepsilon,h}$ and 
				$\Omega\setminus\widehat{\Omega}_0^{\varepsilon,h}.$
				Here, the set $\widehat{\Omega}_0^{h, \varepsilon}$ is the union of the sets $\varepsilon (F_0\cap Q^h+\bsl{n})$ over all $\bsl{n}\in{\mathbb Z}^2$ such that 
				$\varepsilon (Q+\bsl{n})\subset\Omega.$
					\begin {Thrm}\label {extension32}
					
											Suppose that of all $\varepsilon,$ $h,$ the function $\ueph\in\bigl[H_0^1(\Omega)\bigr]^2$ is the 
						$L^2$-normalised eigenfunction of 
						$\widehat{\mathfrak{A}}_\ep:$
							\begin{equation}
				\widehat{\mathfrak{A}}_\ep\ueph=\omega_\ep\ueph,\ \ \ \ \ \ \ \|\ueph\|_{[\lopen{\Omega}{\dmu_\ep^h}]^2}=1.
							\label{spectral_problem}
							\end{equation}
						If $\omega_\ep\tends \omega\notin
						{\rm Sp}(\mathfrak{b}_{\textnormal {micro}}),$ then the eigenfunction sequence $\{\ueph\}$ is compact with respect to strong two-scale convergence in $\bigl[L^2(\Omega, \dmu_\varepsilon^h)\bigr]^2.$
					\end {Thrm}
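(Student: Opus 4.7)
The plan is to derive a priori bounds from the eigenvalue identity, pass to a weakly two-scale convergent subsequence and identify its limit via Theorem \ref{homogenisation_theorem}, and then upgrade to strong two-scale convergence by establishing conservation of $L^2$-mass; the spectral hypothesis $\omega \notin {\rm Sp}(\mathfrak{b}_{\textnormal{micro}})$ enters only in the last step, and that is where the main work lies.

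Setting $\bphi = \ueph$ in the spectral identity $\widehat{\mathfrak{A}}_\varepsilon \ueph = \omega_\varepsilon \ueph$ and using the normalisation $\|\ueph\|_{[L^2(\Omega,\dmu_\varepsilon^h)]^2}=1$ yields
\begin{equation*}
\int_{\whep_1}A_1\bsl{e}(\ueph)\cdot\bsl{e}(\ueph)\,\dmu_\varepsilon^h+\varepsilon^2\int_{\whep_0}A_0\bsl{e}(\ueph)\cdot\bsl{e}(\ueph)\,\dmu_\varepsilon^h = \omega_\varepsilon - 1.
\end{equation*}
Since $\{\omega_\varepsilon\}$ is bounded, this delivers the a priori bounds of Section \ref{sec221}; by Proposition \ref{compact1} and the structural results of Theorems \ref{u0_chi_theorem} and \ref{soft_conv_theorem}, a subsequence satisfies $\ueph \wtwo \bsl{u}$ with $\bsl{u}(\bsl{x},\bsl{y}) = \bsl{u}_0(\bsl{x}) + \bsl{U}(\bsl{x},\bsl{y}) \in V$, and consequently $\omega_\varepsilon \ueph \wtwo \omega\bsl{u}$. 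Applying the weak-convergence statement of Theorem \ref{homogenisation_theorem} with right-hand side $\bsl{f}_\varepsilon^h = \omega_\varepsilon \ueph$ identifies $\bsl{u}$ as a solution of (\ref{hom21}) with right-hand side $\omega\bsl{u}$, namely
\begin{equation*}
\mathfrak{b}_{\textnormal{macro}}(\bsl{u}_0,\bphi_0)+\mathfrak{b}_{\textnormal{micro}}(\bsl{U},\bsl{\Phi})+(\bsl{u},\bphi_0+\bsl{\Phi})_{[L^2(\Omega\times Q,\dx\times\dmu)]^2}=\omega(\bsl{u},\bphi_0+\bsl{\Phi})_{[L^2(\Omega\times Q,\dx\times\dmu)]^2}
\end{equation*}
for all $\bphi_0+\bsl{\Phi}\in V$.

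By Proposition \ref{strong_conv}, strong two-scale convergence of a subsequence of $\ueph$ to $\bsl{u}$ follows as soon as $\|\bsl{u}\|_{[L^2(\Omega\times Q,\dx\times\dmu)]^2}=1$. Proposition \ref{prop01} already gives $\|\bsl{u}\|^2\leq 1$, and taking $\bphi_0 + \bsl{\Phi} = \bsl{u}$ in the limit identity yields
\begin{equation*}
\mathfrak{b}_{\textnormal{macro}}(\bsl{u}_0,\bsl{u}_0)+\mathfrak{b}_{\textnormal{micro}}(\bsl{U},\bsl{U})=(\omega-1)\|\bsl{u}\|^2,
\end{equation*}
which, together with lower semicontinuity of the bilinear forms along the weak two-scale convergence, merely reconfirms $\|\bsl{u}\|^2\le 1$. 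The main obstacle is excluding the strict inequality $\|\bsl{u}\|^2<1$, and the plan is a concentration-compactness argument. If the strict inequality held, the residual $\bsl{w}_\varepsilon^h := \ueph - \bsl{r}_\varepsilon^h$, with $\bsl{r}_\varepsilon^h$ a strong two-scale recovery sequence for $\bsl{u}$ built from the framework correctors of Proposition \ref{lemma12} together with an $H^1_{\textnormal{per}}(Q)$-approximation on the soft component (in the spirit of the proof of Theorem \ref{homogenisation_theorem}), would two-scale converge weakly to zero while $\|\bsl{w}_\varepsilon^h\|_{[L^2(\Omega,\dmu_\varepsilon^h)]^2}^2 \to 1-\|\bsl{u}\|^2 > 0$. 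Since $\bsl{w}_\varepsilon^h$ inherits, modulo vanishing errors, the eigenvalue equation with eigenvalue $\omega$, re-applying the identification procedure of Sections \ref{sec222}--\ref{sec223} to this residual extracts a non-trivial periodic microscopic profile in $\widetilde{V}$ with vanishing macroscopic trace, giving a non-zero eigenvector of $\mathfrak{b}_{\textnormal{micro}}$ at the value $\omega$ (up to the constant shift introduced by the mass term) --- contradicting $\omega \notin {\rm Sp}(\mathfrak{b}_{\textnormal{micro}})$. The technical core of the proof is to carry this concentration step out rigorously, controlling in particular the boundary layer of cells intersecting $\partial\Omega$ (which is the very reason for working with the truncated operator $\widehat{\mathfrak{A}}_\varepsilon$ on the set $\widehat{\Omega}_0^{\varepsilon,h}$) and verifying that the microscopic profile extracted from $\bsl{w}_\varepsilon^h$ has the regularity required to lie in $\widetilde{V}$.
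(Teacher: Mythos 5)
Your opening steps (a priori bounds, weak two-scale extraction, identification of the limit $\bsl{u}$ as a solution of (\ref{hom21}) with right-hand side $\omega\bsl{u}$, and reduction to showing $\|\bsl{u}\|=1$ via Propositions \ref{prop01} and \ref{strong_conv}) are correct in outline, but the route you then take diverges fundamentally from the paper's and leaves the central difficulty unresolved. You attempt a concentration-compactness contradiction: subtract a recovery sequence, observe the residual $\bsl{w}_\ep^h=\ueph-\bsl{r}_\ep^h$ two-scale converges weakly to zero, and claim that if its $L^2$-mass persists you can re-run the two-scale identification to extract a non-zero eigenvector of $\mathfrak{b}_{\rm micro}$ at $\omega$. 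This is precisely the step that does not work as stated. If $\bsl{w}_\ep^h\wtwo 0$ then the two-scale machinery of Sections \ref{sec222}--\ref{sec223} applied to $\bsl{w}_\ep^h$ produces only the zero profile: there is nothing to extract. The missing $L^2$-mass (if any) would sit in oscillations of the soft component at scales incommensurate with $\ep$, which the $\ep$-periodic two-scale limit cannot see --- note that in the soft region the only bound is on $\ep\bsl{e}(\ueph)$, so such fine oscillations are not a priori excluded. Your claim that the residual ``inherits, modulo vanishing errors, the eigenvalue equation'' is likewise unsupported: the recovery sequence satisfies the $\varepsilon$-problem only up to residuals that converge weakly, not strongly, and a merely weakly vanishing forcing term is not a ``vanishing error'' at the level of an energy or norm identity.

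The paper's proof avoids this trap with a decomposition you do not use. Write $\ueph=\widetilde{\bsl{u}}_\ep^h+\zeph$, where $\widetilde{\bsl{u}}_\ep^h$ is obtained from the minimisation (\ref{min_prob}) so that its symmetric gradient is uniformly bounded \emph{in all of} $\Omega$ (not just the framework) and it agrees with $\ueph$ on $\widehat{\Omega}_1^{\ep,h}$; then $\widetilde{\bsl{u}}_\ep^h$ is strongly $L^2$-compact by Rellich, so no mass can be lost there. The residual $\zeph$ is by construction a rigid motion on the framework, and the hypothesis $\omega\notin{\rm Sp}(\mathfrak{b}_{\rm micro})$ enters \emph{constructively}, not by contradiction: it guarantees (via the auxiliary family $\widehat{\mathfrak{A}}_\ep^0$ of Lemma \ref{extra_prop}, together with the spectral-convergence ingredient from \cite{bib30}) that the resolvent-type problem satisfied by $\zeph$ is uniformly solvable, hence $\zeph$ is uniformly bounded. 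Strong two-scale convergence of $\zeph$ is then obtained by the Green--Betti duality identity (\ref{h_eps_id})--(\ref{id}), which shows $\|\zeph\|\to\|\bsl{z}\|$ directly. So the spectral gap is used to get \emph{a priori} control of $\zeph$ and a convergent dual pairing, not to derive a contradiction from a hypothetical defect of mass. Your approach, as written, would require you to prove that no mass is lost in the soft component; the paper's decomposition is exactly the device that makes this provable, and without something equivalent to it your argument has a genuine gap.
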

					\begin {proof}
						The eigenvalue problem (\ref{spectral_problem}) is understood in the sense of the identity
													\begin{equation}
													\label{proof10}
								\int_{\Omega\setminus\widehat{\Omega}_0^{\varepsilon,h}}A_1\bsl{e}(\ueph ):\bsl{e}(\bphi)\,\dmu_\ep^h+\ep^2
								\int_{\widehat{\Omega}^{\ep,h}_0}A_0\bsl{e}(\ueph ):\bsl{e}(\bphi)\,\dmu_\ep^h 
								=\omega_\ep\int_\Omega\ueph\cdot\bphi\,\dmu_\ep^h\ \ \ \ \ \ \ \forall\bphi\in\bigl[H_0^1(\Omega)\bigr]^2,
								\end{equation}
								which implies, in particular, that
							\begin{equation*}
								\int_{\Omega\setminus\widehat{\Omega}_0^{\varepsilon,h}}A_1\bsl{e}(\ueph ):\bsl{e}(\ueph)\,\dmu_\ep^h +\ep^2\int_{\widehat{\Omega}^{\ep,h}_0}A_0\bsl{e}(\ueph ):\bsl{e}(\ueph)\,\dmu_\ep^h
								=\omega_\ep,
								\end{equation*}
								and hence $\bigl\|\bsl{e}(\bsl {u}_\ep^h)\bigr\|_{[L^2(\widehat{\Omega}^{\varepsilon,h}_1,\dmu_\ep^h)]^2}$ are uniformly bounded.
					Denote by $\widehat{\Omega}_1^{h, \varepsilon}$ the union of $\varepsilon (F_1\cap Q^h+\bsl{n})$ over all $\bsl{n}\in{\mathbb Z}^2$ such that 
				$\varepsilon (Q+\bsl{n})\subset\Omega.$ We claim that for all $\varepsilon,$ $h,$ there exists 
						$\widetilde {\bsl {u}}_\ep^h$ 
						such that 
							\begin {equation}
								\bsl{e}(\ueph)
								=\bsl{e}(\widetilde{\bsl {u}}_\ep^h)\ {\rm on}\ \widehat{\Omega}_1^{\ep,h},\ \ \ \ \ \ \ \widetilde {\bsl {u}}_\ep^h\in \bigl[H_0^1(\Omega)\bigr]^2,\ \ \ \ \ \ \ \ \ \ 
								\bigl\|\bsl{e}(\widetilde {\bsl {u}}_\ep^h)\bigr\|_{[L^2(\widehat{\Omega}^{\varepsilon,h}_0, \dmu_\ep^h)]^2}\leq 
								C\bigl\|\bsl{e}(\bsl {u}_\ep^h)\bigr\|_{[L^2(\widehat{\Omega}^{\varepsilon,h}_1,\dmu_\ep^h)]^2},
						        \label {bounded77}
						        \end{equation}
							\begin{equation}	
							\label{proof2}
							\int_{\widehat{\Omega}_0^{\ep,h}}A_0\bsl{e}(\widetilde {\bsl {u}}_\ep^h):\bsl{e}(\bsl {\bphi} )\,\dmu_\ep^h=0\ \ \ \ \ \ \ \ \ \ 
							\forall\,\bphi\in[H_0^1(\Omega)\bigr]^2\ {\rm such\ that}\ \bsl{e}({\bphi})=0\ {\rm  in}\ \widehat{\Omega}_1^{\varepsilon,h},     
							\end{equation}
							where the constant $C>0$ is independent of $\varepsilon,$ $h.$
			Indeed, 
			we can consider $\widetilde {\bsl {u}}_\ep^h$ 
			such that $\zeph:=\ueph-\uepht$ solves the minimisation problem for the functional
		\begin{equation}
		\frac{1}{2}\int_{\widehat{\Omega}_0^{\varepsilon,h}}A_0\bsl{e}(\bsl{v}):\bsl{e}(\bsl{v})\,\dmu^h_\ep-
		\int_{\widehat{\Omega}_0^{h, \varepsilon}}A_0\bsl{e}(\bsl{u}_\ep^h):\bsl{e}(\bsl{v})\,\dmu_\ep^h
		\label{min_prob}
		\end{equation} 			
		over all functions 
		$\bsl{v}\in\bigl[H^1_0(\Omega)\bigr]^2$ 
		whose restriction to $\widehat{\Omega}_1^{h, \varepsilon}$ 
		is a rigid-body motion with respect to the Lebesgue measure, {\it i.e.} 
		one has $\bsl{e}(\bsl{v})=0$ in $\widehat{\Omega}_1^{h, \varepsilon}.$
						Clearly, one has $\bsl{e}(\zeph)=0$ in 
						$\widehat{\Omega}_1^{\varepsilon,h}$ and 
							\begin {equation*}
								\int_{\Omega\setminus(\widehat{\Omega}_0^{\varepsilon,h}\cup\widehat{\Omega}_1^{\varepsilon,h})}A_1\bsl{e}(\ueph ):\bsl{e}(\bphi)\,\dmu_\ep^h+\int_{\widehat{\Omega}_1^{\ep,h}}A_1\bsl{e}(\zeph ):\bsl{e}(\bphi )\,\dmu_\ep^h+
								\ep^2\int_{\widehat{\Omega}_0^{\ep,h}}A_0\bsl{e}(\zeph ): \bsl{e}(\bphi )\,\dmu_\ep^h\ \ \ \ \ \ \ \ \ \ \ \ \ \ \ \ \ \ \ \ \ \ \ 
								\end{equation*}
								\begin{equation}
								\label{compact.5}
								\ \ \ \ \ \ \ \ \ \ \ \ \ \ \ \ \ \ \ \ \ \ \ \ \ \ \ \ -\omega_\ep\int_{\Omega} \zeph\cdot\bphi\,\dmu_\ep^h=\omega_\ep\int_{\Omega} \uepht\cdot\bphi\,\dmu_\ep^h\ \ \ \ \ \ \ \forall\bphi\in[H_0^1(\Omega)\bigr]^2,\ \ \ \bsl{e}({\bphi})=0\ {\rm  in}\ \widehat{\Omega}_1^{\varepsilon,h}, 
							\end {equation}
							by combining (\ref{proof10}), (\ref{proof2}) and the Euler-Lagrange equation for (\ref{min_prob}).
						It follows from the bound (\ref{bounded77}) that $\uepht$ is compact 
						with respect to strong convergence in $\bigl[\lopen {\Omega}{\dmu_\ep^h}\bigr]^2,$ {\it i.e.} there exists 
						$\widetilde{\bsl{u}}=\widetilde {\bsl {u}}(\bsl{x})$ such that, up to selecting a subsequence, one has 
							$\uepht\rightarrow\widetilde{\bsl {u}}$ in $\bigl[\lopen {\Omega}{\dmu_\ep^h}\bigr]^2.$
					\begin{Lem} 
					\label{extra_prop}
						Suppose that for each $\varepsilon, h$ the function
						$\bsl {f}_\ep^h$ belongs to the closure in $\bigl[L^2(\Omega)\bigr]^2$ of the set of smooth functions whose restrictions to 
						$\widehat{\Omega}_1^{\ep,h}$ are rigid-body motions with respect to the Lebesgue measure.
						Suppose also that $\bsl {f}_\ep^h\wtwo \bsl {f}\in\mathfrak{H},$ where the space $\mathfrak{H}$ is given in Definition \ref{V_definition}.
						
						For all $\varepsilon, h,$ consider the function $\veph\in\bigl[H^1_0(\Omega)\bigr]^2$ such that $\bsl{e}(\veph)=0$ in 
						$\widehat{\Omega}_1^{\varepsilon,h}$ and the following resolvent identity 
						holds ({\it cf.} (\ref{compact.5})):
							\begin{equation*}
							\int_{\Omega\setminus(\widehat{\Omega}_0^{\varepsilon,h}\cup\widehat{\Omega}_1^{\varepsilon,h})}A_1\bsl{e}(\ueph )\cdot \bsl{e}(\bphi)\,\dmu_\ep^h+\int_{\widehat{\Omega}_1^{\ep,h}}A_1\bsl{e}(\bsl{v}^h_\ep)\cdot \bsl{e}(\bphi )\,\dmu_\ep^h+\ep^2\int_{\widehat{\Omega}_0^{h, \varepsilon}}A_0\bsl{e}(\veph)\cdot \bsl{e}(\bphi )\,\dmu_\ep^h\ \ \ \ \ \ \ \ \ \ \ \ \ \ \ \ \ \ \ \ \ \ \ 
							\end{equation*}
							\begin{equation}
							\ \ \ \ \ \ \ \ \ \ \ \ \ \ \ \ \ \ \ \ \ \ \ \ \ \ \ \ \ \ \ \ -\omega_\ep\int_{\Omega} \veph\cdot\bphi\,\dmu_\ep^h=\int_{\Omega} \bsl {f}_\ep^h\cdot\bphi\,\dmu_\ep^h\ \ \ \ \ \ \ \forall\bphi\in
							\bigl[H^1_0(\Omega)\bigr]^2,\ \ \ \bsl{e}({\bphi})=0\ {\rm  in}\ \widehat{\Omega}_1^{\varepsilon,h}.
							\label{v_id}
							\end{equation} 
						Then $\bsl {v}_\ep^h\wtwo\bsl {v}=\bsl {v}\xy
						\in\bigl[L^2(\Omega, \widetilde{V})\bigr]^2,$ and
							\begin {equation*}
								\frac {\theta^2}{6}\int_\Omega\int_QK_1\boldsymbol{\chi}''\cdot \bsl {\Phi}''\,\dl(\bsl{y})\dx+\frac{1}{2}\int_\Omega\int_QA_0{\bsl{e}}_{\bsl{y}}(\bsl {v})\cdot {\bsl{e}}_{\bsl{y}}(\bphi)\,\dy\dx-\omega\int_\Omega\int_Q\bsl {v}\cdot\bphi\,\dmu(\bsl{y})\dx\ \ \ \ \ \ \ \ \ \ \ \ \ \ \ \ \ \ \ \ \ \ \ \ \ \ \ \ \ \ \ \ 
								\end{equation*}
								\begin{equation}
								\label {compact2}
								\ \ \ \ \ \ =\int_\Omega\int_Q\bsl {f}\cdot\bphi\,\dmu(\bsl{y})\dx\ \ \ \ \forall\bphi\in\bigl[L^2(\Omega, \widetilde{V})\bigr]^2, \ \ \ \ \bphi(\bsl{x}, \bsl{y})=\bsl{\Phi}(\bsl{x}, \bsl{y})\ \ {\rm a.e}\ \bsl{x}\in\Omega,\ \ \lambda{\text -}{\rm a.e}\ \bsl{y}\in F_1\cap Q,
							\end {equation}
							where $\boldsymbol{\chi}(\bsl{x}, \cdot)$ is the trace of the function $\bsl{v}(\bsl{x}, \cdot)$ on $F_1\cap Q$ for a.e. $\bsl{x}\in\Omega.$
					\end{Lem}
					
					\begin{proof}
					We show first that the spectra of the operators $\widehat{\mathfrak{A}}_\varepsilon^0$ defined via the bilinear forms ({\it cf.} (\ref{v_id}))
					\[
					\widehat{\mathfrak{b}}_\varepsilon^0(\bsl{v}, \bphi)=\int_{\Omega\setminus(\widehat{\Omega}_0^{\varepsilon,h}\cup\widehat{\Omega}_1^{\varepsilon,h})}A_1\bsl{e}(\ueph):\bsl{e}(\bphi)\,\dmu_\ep^h+\int_{\widehat{\Omega}_1^{\ep,h}}A_1\bsl{e}(\zeph):\bsl{e}(\bphi )\,\dmu_\ep^h\ \ \ \ \ \ \ \ \ \ \ \ \ \ \ \ \ \ \ \ \ \ 
					\]
					\[
					\ \ \ \ \ \ \ \ \ \ \ \ \ \ \ \ \ \ \ +\ep^2\int_{\widehat{\Omega}^{h, \varepsilon}}A_0\bsl{e}(\bsl{v}):\bsl{e}(\bphi )\,\dmu_\ep^h
					\ \ \ \ \ \ \bsl{v}, \bphi\in\bigl[H^1_0(\Omega)\bigr]^2,\ \ \ \bsl{e}(\bsl{v}), \bsl{e}({\bphi})=0\ {\rm  in}\ \widehat{\Omega}_1^{\varepsilon,h},
					\]
					converge, in the sense of Hausdorff as $\varepsilon\to0,$ to 
					${\rm Sp}(\mathfrak{b}_{\rm micro}).$ Indeed,the convergence 
					$\widehat{\mathfrak{A}}_\varepsilon^0\stwor\widehat{\mathfrak{A}}^0$ holds, where 
					the operator $\widehat{\mathfrak{A}}^0$ is associated with the bilinear form
					\[
					\widehat{\mathfrak{b}}^0(\bsl{v}, \bphi)=\frac {\theta^2}{6}\int_QK_1\boldsymbol{\chi}''\cdot \bsl {\Phi}''\,\dl+\frac{1}{2}\int_{F_0\cap Q}A_0\bsl{e}(\bsl{v}):\bsl{e}(\bphi )\,\dmu
					\ \ \ \ \ \ \bsl{v},\bphi\in\bigl[L^2(\Omega,\widetilde{V})\bigr]^2,
					\]
					\[
					\bsl{v}(\bsl{y})=\boldsymbol{\chi}(\bsl{y}),\ \ \ \ \bphi(\bsl{y})=\bsl{\Phi}(\bsl{y}),\ \ \ \ \lambda{\text -}{\rm a.e}\ \bsl{y}\in F_1\cap  Q,
					\]
					and hence, ${\rm Sp}(\widehat{\mathfrak{A}}^0)\subset\lim_{\varepsilon}{\rm Sp}(\widehat{\mathfrak{A}}_\varepsilon^0)$ by 
					Proposition \ref{lower_semi}. On the other hand any sequence of $L^2$-normalised eigenfunctions of 
					$\widehat{\mathfrak{A}}_\varepsilon^0$ whose eigenvalues 
					$\omega^0_\varepsilon$ converge to $\omega^0\in{\mathbb R}$ is compact in the sense of two-scale convergence, thanks to 
					\cite[Theorem 12.2]{bib30}, and therefore $\omega\in{\rm Sp}(\widehat{\mathfrak{A}}^0).$ Finally, notice that 
					${\rm Sp}(\widehat{\mathfrak{A}}^0)=
					{\rm Sp}(\mathfrak{b}_{\rm micro}).$

					It follows that whenever $\omega_\ep$ in (\ref{v_id}) converge to a point outside ${\rm Sp}(\mathfrak{b}_{\textnormal {micro}}),$
					the identity (\ref{v_id}) does not have non-zero solutions $\bsl{v}_\ep^h$ for $\bsl{f}_\ep^h=0$ and 
					$\omega_\varepsilon$ is replaced by any value in some finite neighbourhood of the set $\{\omega_\varepsilon\}_{\varepsilon<\ep_0}$ for some $\ep_0>0.$ Hence, for an $L^2$-bounded sequence of the right-hand sides $\bsl{f}_\ep^h,$ the functions $\bsl{v}_\varepsilon^h$ that satisfy (\ref{v_id}) are uniformly bounded in $\bigl[L^2(\Omega, \dmu_\ep^h)\bigr]^2$ for $\ep<\ep_0.$ 
					
					Further, setting $\boldsymbol{\varphi}=\bsl{v}_\ep^h$ in (\ref{v_id}) and using the fact that $A_0$ is 
					positive definite yield the uniform estimate 
					\[
					\varepsilon\bigl\Vert\chi_0^{\ep,h}\bsl{e}(\bsl{v}_\varepsilon^h)\bigr\Vert_{[L^2(\Omega_0^{\ep,h}, \dmu_\ep^h)]^3}\le C,
					\]
					for some positive constant $C.$				
					Proceeding as in Section \ref{sec22}, and using the fact that $\widehat{\Omega}_0^{h, \varepsilon}\cup\widehat{\Omega}_1^{h, \varepsilon}\to\Omega$ as $\ep\to0,$ we extract a subsequence of $\bsl{v}_\ep^h$ that weakly two-scale converges to a function $\bsl{v}\in \bigl[L^2(\Omega, \widetilde{V})\bigr]^2$ and such that 
					$\chi_0^{\ep,h}\bsl{e}(\bsl{v}_\ep^h)\wtwo\bsl{e}_{\bsl{y}}(\bsl{v})$ in $\bigl[L^2(\Omega,\dmu_\ep^h)\bigr]^3.$  
					
					Finally, passing to the limit as $\varepsilon\to0$ in (\ref{v_id}) yields the identity (\ref{compact2}). By the uniqueness of solution to 
					(\ref{v_id}), the whole sequence $\bsl{v}_\ep^h$ weakly two-scale converges to $\bsl{v}.$
					\end{proof}
					
				Lemma \ref{extra_prop} implies that the sequence $\zeph$ is compact with respect to weak two-scale convergence, its two-scale limit $\bsl {z}=\bsl {z}\xy$ is a rigid-body motion on $F_1$ and satisfies the weak problem
							\begin {equation*}
								\frac {\theta^2}{6}\int_\Omega\int_QK_1\boldsymbol{\upsilon}''\cdot \bsl {\Phi}''\,\dl(\bsl{y})\dx+\frac{1}{2}\int_\Omega\int_QA_0{\bsl{e}}_{\bsl{y}}(\bsl {z}):{\bsl{e}}_{\bsl{y}}(\bphi)\,\dy\dx-\omega\int_\Omega\int_Q\bsl {z}\cdot\bphi\,\dy\dx=\omega\int_\Omega\int_Q\widetilde {\bsl {u}}\cdot\bphi\,\dmu(\bsl{y})\dx
								\end{equation*}
								\begin{equation}
								\label {compact3}
								 \ \ \ \ \ \ \  \forall\bphi\in\bigl[L^2(\Omega, \widetilde{V})\bigr]^2,\ \ \ \ \ \bsl{z}(\bsl{x}, \bsl{y})=\boldsymbol{\upsilon}(\bsl{x}, \bsl{y}),\ \ \ \ \bphi(\bsl{x}, \bsl{y})=\bsl{\Phi}(\bsl{x}, \bsl{y}),\ \ \ \ {\rm a.e}\ \bsl{x}\in\Omega,\ \ \lambda{\text -}{\rm a.e}\ \bsl{y}\in F_1\cap Q,
							\end {equation}
						Setting $\bphi=\veph$ in the identiy (\ref {compact.5}) and 
						$\bphi=\zeph$ 
						in (\ref {v_id}) yields
							\begin{equation}
							\int_\Omega\zeph\cdot\bsl {f}_\ep^h\,\dmu_\ep^h=\omega_\ep\int_\Omega\veph\cdot
							\uepht
							\,\dmu_\ep^h\ \ \ \ \ \forall\varepsilon, h.
							\label{h_eps_id}
							\end{equation}
						Taking the limit of both sides (\ref{h_eps_id}) as $\ep\tends 0,$ $h=h(\ep),$ and using the convergence properties of 
						$\bsl{v}_\ep^h,$ $\bsl{u}_\ep^h,$ we obtain
							\begin{equation}
							\label{lim}
							\lim_{\varepsilon\to0}\int_\Omega\zeph\cdot\bsl {f}_\ep^h\,\dmu_\ep^h=\omega\int_\Omega\int_{Q}\bsl {v}\xy\cdot\widetilde {\bsl {u}}(\bsl{x})\,\dmu(\bsl{y})\dx.
							\end{equation}
						Further, using (\ref{compact2}) with 
						$\boldsymbol{\varphi}=\bsl{z},$ and (\ref{compact3}) with $\boldsymbol{\varphi}=\bsl{v},$ we obtain
						\begin{equation}
						\omega\int_\Omega\int_{Q}\bsl {v}\xy\cdot\widetilde {\bsl {u}}(\bsl{x})\,\dmu(\bsl{y})\dx=
						\int_\Omega\int_{Q}\bsl {f}\xy\cdot\bsl{z}\xy\,\dmu(\bsl{y})\dx
						\label{id}
						\end{equation}
						Finally, setting $\bsl {f}_\ep^h=\zeph$ in (\ref{lim}) and using (\ref{id}), we infer that
						$
						\Vert\zeph\Vert_{[L^2(\Omega,\dmu_\ep^h)]^2}\to\Vert\bsl{z}\Vert_{[L^2(\Omega\times Q, \dx\times\dmu)]^2}.
						$
							 Therefore, the sequence 
						$\zeph$ strongly two-scale converges to
						$\bsl{z},$ see Proposition \ref{strong_conv}. 
					\end {proof}





\subsection*{Acknowledgements} This work was carried out under the financial support of
the Engineering and Physical Sciences Research Council (Grant EP/I018662/1 ``The mathematical analysis and applications of a new class of high-contrast phononic band-gap composite media''; Grant EP/L018802/2 ``Mathematical foundations of metamaterials: homogenisation, dissipation and operator theory'').  We are grateful to  Dr Mikhail Cherdantsev for a thorough revision of the manuscript and a number of valuable comments, and to Professor Svetlana Pastukhova for her advice at the early stage of the work.

\end{document}